\def\m{\mathcal}
\def\eps{\epsilon}
\def\tn{\textnormal}
\def\wt{\widetilde}
\def\wh{\widehat}
\newcommand{\dfn}{ \stackrel{\tn{def}}{=} }
\newcommand{\markov}{\tn{\mbox{$\multimap\hspace{-0.73ex}-\hspace{-2ex}-$}}}
\newcommand{\Prv}[1]{\Pr\left(#1\right)}
\newcommand{\mw}[1]{#1}
\let\proof\@undefined
\let\endproof\@undefined
\newtheorem{lemma}{Lemma}
\newtheorem{corollary}{Corollary}
\newtheorem{theorem}{Theorem}
\newtheorem{proposition}{Proposition}
\newtheorem{remark}{Remark}
\newcommand{\set}[1]{\mathcal{#1}}
\begin{document}

\title{On the Capacity of the Discrete Memoryless Broadcast Channel with Feedback}

\author{
\authorblockN{Ofer Shayevitz and Mich\`ele Wigger}\thanks{This paper was in part presented at the \emph{International Symposium on Information Theory 2010}, in Austin, TX, July 2010.

O. Shayevitz was with the Information Theory \& Applications Center, University of California, San Diego, USA \{email: ofersha@ucsd.edu\}. He is now with the D. E. Shaw group, New York, NY. M.~Wigger was with the Electrical and Computer Engineering Department at University of California, San Diego. She is now with the Communications and Electronics Department, at
Telecom ParisTech, Paris, France \{email: michele.wigger@telecom-paristech.fr\}. Her research at the University of California, San Diego, was supported by the Swiss National Science Foundation under Grant~PBEZP2-125703.}}

\maketitle
\thispagestyle{empty}

\begin{abstract}
A coding scheme for the discrete memoryless broadcast channel
with \{noiseless, noisy, generalized\} feedback is proposed, and the associated achievable region derived. The scheme is based on a block-Markov
strategy combining the Marton scheme and a lossy version of the Gray-Wyner scheme with side-information. In each block the transmitter sends fresh data and
update information that allows the receivers to improve the channel
outputs observed in the previous block. For a generalization of Dueck's broadcast channel our scheme achieves the noiseless-feedback capacity, which is strictly larger than the no-feedback capacity. For a generalization of Blackwell's channel and when the feedback is noiseless our new scheme achieves rate points that are  outside the no-feedback capacity region.
It follows by a simple continuity argument that for both these channels and when the feedback noise is sufficiently low, our scheme improves on the no-feedback capacity even when the feedback is noisy.
\end{abstract}

\section{Introduction}\label{sec:intro}
We consider a broadcast channel (BC) with two receivers, where the transmitter has instantaneous access to a feedback signal. Popular examples of such feedback signals are:
\begin{itemize}
\item the channel outputs observed at the two receivers (this setup is called \emph{noiseless feedback}); or
\item a noisy version of these channel outputs (this setup is called \emph{noisy feedback}).
\end{itemize}
Here we allow for very general feedback signals, and only require that the time-$t$ feedback signal is obtained by feeding the time-$t$ input and the corresponding time-$t$ outputs into a memoryless feedback channel. This general form of feedback is commonly referred to as \emph{generalized feedback} \cite{carleial82, willems83, tuninetti07}. For brevity, here we mostly omit the word \emph{generalized}. It is easily seen that our setup includes noiseless feedback and noisy feedback as special cases.

We focus on discrete memoryless broadcast channels (DMBCs), namely where the input and output symbols are from finite alphabets and  the current channel outputs depend on the past inputs and  outputs only through the current input. Our interest lies in the feedback-capacity region of such DMBCs, i.e., in the associated set of rate tuples for which reliable communication is possible.

Most previous results on DMBCs with feedback focus on the case of noiseless
feedback. For example, El Gamal \cite{elgamal78} proved that when the BC is physically
degraded, i.e., one of the two outputs is obtained by processing the other output, then the capacity region with noiseless-feedback coincides with the no-feedback capacity region. In contrast,  Dueck \cite{dueck80} and Kramer \cite{kramer03} described some specific examples of DMBCs where the noiseless-feedback capacity region exceeds the no-feedback capacity region. In Dueck's example, the noiseless-feedback capacity region
is known. However, outside these specific examples, determining the capacity region with feedback for (non-physically-degraded) DMBCs is an open problem. In fact, even characterizing the class of DMBCs where feedback enlarges the capacity region seems hard. This is partly because even the no-feedback capacity region is generally unknown, and partly because a computable single-letter
achievable region for the DMBC with feedback was missing hitherto. Kramer \cite{kramer03} proposed a multi-letter achievable region for the DMBC with noisy or noiseless feedback.

In this paper we propose a coding scheme for the DMBC with
generalized feedback, and present a corresponding single-letter
achievable region.  Subsequently, we analyze two new examples -- a generalization of Dueck's channel \cite{dueck80}, and a noisy version of Blackwell's channel \cite{blackwell} -- where our region is shown to exceed the no-feedback capacity region, even in the presence of feedback noise. Our approach is motivated by Dueck's example
\cite{dueck80}, and is based on the following idea. The transmitter
uses the feedback to identify update information that is useful to the
receivers when decoding their intended messages, and describes
this information in subsequent transmissions. More specifically, our
scheme adopts a block-Markov strategy, where in each block the
transmitter sends a combination of fresh data and compressed update
information pertaining to the data sent in the previous block.
Marton's no-feedback scheme \cite{marton79,GelfandPinsker80} is used in each block to send the fresh data and the update information, at rates outside the no-feedback capacity \mw{region}. The update information sent in a block is essentially an efficient lossy description of the auxiliary inputs in Marton's scheme
from the previous block, taking into account the receivers' observations
and the feedback signal as side-information. The receivers perform backward decoding; starting with the last block, each receiver iteratively performs the following two steps: 1) it decodes its intended data and update information in the current block; and  2) it uses the update information to ``improve'' the channel outputs in the preceding block, which is processed next. This strategy is gainful whenever the cost of the lossy description (i.e., the rate needed to send the update information) is smaller than the increase in rate it supports (i.e., the increase in capacity of the ``improved" channel). Intuitively, this is expected to happen when the descriptions required by the two receivers have a large common part.

Our scheme has some  ideas in common with Lapidoth and Steinberg's scheme for the MAC with strictly causal state-information at the transmitter \cite{lapidothsteinberg10-1, lapidothsteinberg10}.

Recently,  another single-letter achievable region for general DMBCs with feedback has been proposed  \cite{venkataramananpradhan11}\footnote{The conference version of \cite{venkataramananpradhan11} has been presented in the same session at ISIT 2010 as the conference version of this paper, see  \cite{shayevitzwigger10} and \cite{PradhanVenkatar2010}.}. Comparing the achievable region in \cite{venkataramananpradhan11}  to ours however seems difficult.

The paper is organized as follows. In Section \ref{sec:prelim}, the necessary mathematical background is provided. The channel model is described in Section \ref{sec:model}. In Section \ref{sec:Marton}, Marton's scheme for the DMBC without feedback is reviewed in detail. In Section \ref{sec:lossy-gray-wyner}, a lossy version with side-information of the Gray-Wyner distributed source coding setup is introduced, and an achievable region is obtained. The main result of the paper is introduced in Section \ref{sec:main}, where the Marton and the lossy Gray-Wyner schemes are combined into a feedback scheme for general DMBCs, and the associated achievable region is derived. Two new examples are discussed in \ref{sec:example}: A generalization of Dueck's DMBC, and a noisy version of Blackwell's DMBC \cite{blackwell}. In both cases, the region achieved by the new scheme is shown to exceed the no-feedback capacity region, using either noiseless feedback or noisy feedback, in the limit of low feedback noise.

\section{Preliminaries}\label{sec:prelim}
\subsection{Notations}
\mw{We broadly follow the notation in~\cite{El-Gamal--Kim2009}. In particular,} for any real number $M>1$, we use the notation $[M]\dfn \{1,\ldots,\lfloor M\rfloor\}$. The set of positive integers is denoted by $\mathbb{Z}^+$.
\mw{Also, we use upper case symbols to denote random variables, e.g., $A$, and lower case symbols for their realizations, e.g., $a$. The corresponding alphabets are denoted by script symbols, e.g., $\set{A}$; and $|\set{A}|$ is used for the cardinality of $\set{A}$.}
For $n\in\mathbb{Z}^+$ we use $A^n$ and $a^n$ to denote the random sequence $A_1,\ldots, A_n$ and  its realization $a_1,\ldots, a_n$.

We think of a product set of the form $[2^{nr_1}]\times[2^{nr_2}]$ as being one-to-one with $[2^{n(r_1+r_2)}]$, disregarding the associated integer issues throughout. This assumption does not influence our results, as they concern the asymptotic regime $n\to\infty$. For $\eps>0$, we write $\delta(\eps)$ to indicate a general nonnegative function satisfying $\delta(\eps)\to 0$ (arbitrarily slow) as $\eps\to 0$.

A random sequence $X^n$ is said to be \textit{$P_X$-independent-identically distributed ($P_X$-i.i.d.)} if
\begin{equation*}
P_{X^n}(x^n) = \prod_{t=1}^nP_X(x_t)
\end{equation*}
for all $x^n$. Let $(X^n,Y^n)$ be two jointly distributed random sequences, and let $P_{Y|X}$ be some conditional distribution. We say that $Y^n$ is \textit{$P_{Y|X}$-independent given $X^n$} if
\begin{equation*}
P_{Y^n|X^n}(y^n|x^n) = \prod_{t=1}^nP_{Y|X}(y_t|x_t)
\end{equation*}
for all $y^n$ and $x^n$ with $P_{X^n}(x^n)>0$.

We use the notion of typicality as defined in
\cite{El-Gamal--Kim2009}. For a finite alphabet $\m{X}$, a sequence $x^n\in\m{X}^n$ is said to be $\eps$-typical with respect to (w.r.t.) a distribution $P_X$ on $\m{X}$ if
\begin{equation*}
|\pi_{x^n}(x)-P_X(x)| \leq \eps\cdot P_X(x)
\end{equation*}
for all $x\in\m{X}$, where $\pi_{x^n}$ is the distribution over $\m{X}$ corresponding to the relative frequency of symbols in $x^n$. The set of all such sequences is denoted $\m{T}^n_\eps(P_X)$. Similarly, for a law $P_{X_1\cdots X_k}$ over a product alphabet $\m{X}_1\times \cdots \times \m{X}_k$, we denote by $\m{T}^n_\eps(P_{X_1\cdots X_k})$ the set of all $k$-tuples of sequences $(x_1^n\in\m{X}_1^n, \ldots, x_k^n\in\m{X}_k^n)$ that are \emph{jointly} $\eps$-typical w.r.t. $P_{X_1\cdots X_k}$.

Finally, we write $Z\sim\textnormal{Bern}(p)$ for a a binary random variable  taking the values $0$ and $1$ with probabilities $1-p$ and $p$.
\subsection{Basic Lemmas}
The following three lemmas are well known, and used extensively in the sequel.
\begin{lemma}[Conditional Typicality Lemma \cite{El-Gamal--Kim2009}]\label{lem:cond_typ}
Let $P_{XY}$ be some joint distribution. Suppose $x^n\in\m{T}_{\eps'}^n(P_X)$ for some $\epsilon'>0$, and $Y^n$ is $P_{Y|X}$-independent given $X^n=x^n$. Then for every $\eps>\eps'$:
\begin{IEEEeqnarray*}{rCl}
\lim_{n\to\infty}\Pr\big((x^n,Y^n)\not\in\m{T}_\eps^n(P_{XY})\big) = 0.
\end{IEEEeqnarray*}
\end{lemma}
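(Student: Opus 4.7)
The plan is to reduce joint typicality to pointwise concentration of the empirical pair-frequencies $\pi_{(x^n,Y^n)}(x,y)$ around $P_{XY}(x,y)$, for each $(x,y)\in\mathcal{X}\times\mathcal{Y}$, and then union-bound over the (finite) product alphabet. The central decomposition I would use is the triangle inequality
\begin{equation*}
\bigl|\pi_{(x^n,Y^n)}(x,y) - P_{XY}(x,y)\bigr| \;\leq\; \bigl|\pi_{(x^n,Y^n)}(x,y) - \pi_{x^n}(x)P_{Y|X}(y|x)\bigr| \;+\; \bigl|\pi_{x^n}(x) - P_X(x)\bigr|\,P_{Y|X}(y|x).
\end{equation*}
The second term on the right is deterministic: by the assumed $\epsilon'$-typicality of $x^n$ it is bounded by $\epsilon' P_X(x)P_{Y|X}(y|x)=\epsilon' P_{XY}(x,y)$. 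Hence, to obtain the desired bound $\epsilon P_{XY}(x,y)$, it suffices to show that the first term is at most $(\epsilon-\epsilon')P_{XY}(x,y)$ with probability tending to $1$.

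First I would fix $(x,y)$ with $P_{XY}(x,y)>0$ and write $N_{xy}\dfn \sum_{t=1}^n \mathds{1}\{x_t=x,\,Y_t=y\} = \sum_{t:\,x_t=x} \mathds{1}\{Y_t=y\}$. Conditional on $X^n=x^n$, the latter is a sum of $n\pi_{x^n}(x)$ i.i.d.\ Bernoulli$(P_{Y|X}(y|x))$ variables, with mean $n\pi_{x^n}(x)P_{Y|X}(y|x)$. A Chebyshev (or Chernoff) bound then gives
\begin{equation*}
\Pr\!\left(\bigl|N_{xy}/n - \pi_{x^n}(x)P_{Y|X}(y|x)\bigr| > \eta \, P_{XY}(x,y)\right) \;\leq\; \frac{(1+\epsilon')P_X(x)P_{Y|X}(y|x)(1-P_{Y|X}(y|x))}{n\,\eta^2 P_{XY}(x,y)^2},
\end{equation*}
which vanishes as $n\to\infty$ for any fixed $\eta>0$. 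Choosing $\eta \dfn \epsilon-\epsilon'>0$ and taking a union bound over the finitely many $(x,y)$ with $P_{XY}(x,y)>0$ gives the result on this portion of the alphabet.

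The remaining, slightly finicky step is handling pairs with $P_{XY}(x,y)=0$, for which typicality demands $\pi_{(x^n,Y^n)}(x,y)=0$ exactly. Here I would split cases: if $P_X(x)=0$, then $\epsilon'$-typicality of $x^n$ forces $\pi_{x^n}(x)=0$ (since the inequality $|\pi_{x^n}(x)-P_X(x)|\leq \epsilon' P_X(x)$ degenerates to equality with zero), so no index $t$ has $x_t=x$ and $N_{xy}=0$ deterministically; if instead $P_X(x)>0$ but $P_{Y|X}(y|x)=0$, then conditional on $X^n=x^n$ each $\mathds{1}\{Y_t=y\}$ for $t$ with $x_t=x$ is almost surely zero, so again $N_{xy}=0$ almost surely. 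Combining both cases with the union bound from the previous paragraph yields the claim. The main (mild) obstacle I anticipate is just this bookkeeping of the zero-probability entries and the careful choice of the slack $\eta=\epsilon-\epsilon'$; the probabilistic content is routine concentration.
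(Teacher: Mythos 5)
Your proof is correct, and since the paper states this lemma as a cited result from El Gamal and Kim without reproducing a proof, there is no in-paper argument to compare against; your triangle-inequality decomposition, Chebyshev concentration for the random first term, deterministic bound on the second term via the $\eps'$-typicality of $x^n$, and the careful treatment of zero-probability pairs is precisely the standard textbook derivation of the conditional typicality lemma. In particular, the choice of slack $\eta=\eps-\eps'$ and the use of $\pi_{x^n}(x)\leq(1+\eps')P_X(x)$ to bound the variance are exactly what is needed, so there is no gap.
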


\begin{lemma}[Covering Lemma \cite{El-Gamal--Kim2009}]\label{lem:covering}
Let $0<\eps'<\eps$, and let $X^n$  satisfy $\Pr(X^n\in\m{T}_{\eps'}(P_X))\to 1$  as $n\to\infty$. Also, for each  $n$, let  $M_n\in\mathbb{Z}^+$ be larger than $2^{nr}$ for some $r\geq 0$,  and let $\{Y^n(m)\}_{m=1}^M$ be a set of $P_Y$-i.i.d. sequences such that $\{X^n,\{Y^n(m)\}_{m=1}^M\}$ are mutually independent.  Then, for any law $P_{XY}$ with marginals $P_X$ and $P_Y$ there exists $\delta(\eps)\to 0$ as $\eps\to 0$ such that
\begin{equation*}
\lim_{n\to\infty}\Pr\big(\forall m\in[M]\,,\; (X^n,Y^n(m))\not\in\m{T}_\eps^n(P_{XY})\big) = 0
\end{equation*}
if $r>I(X;Y) +\delta(\eps)$.
\end{lemma}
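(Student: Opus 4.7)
The plan is to condition on $X^n = x^n$ for a fixed typical $x^n \in \m{T}_{\eps'}^n(P_X)$ and reduce the statement to a second-moment-style union bound over the $M$ independent codewords. For fixed such $x^n$, each $Y^n(m)$ is $P_Y$-i.i.d.\ and independent of $x^n$, so the event that $(x^n,Y^n(m)) \in \m{T}_\eps^n(P_{XY})$ is the event that an i.i.d.\ sequence happens to fall into the conditionally typical shell around $x^n$. Write
\begin{equation*}
q_n(x^n) \dfn \Pr\!\left((x^n,Y^n(m))\in\m{T}_\eps^n(P_{XY})\right).
\end{equation*}
By the standard typicality estimates (the exponential bounds on $|\m{T}_\eps^n(P_{XY})|$ and on $P_{Y^n}(y^n)$ for $y^n$ in the conditionally typical shell), one gets a bound of the form $q_n(x^n) \geq 2^{-n(I(X;Y) + \delta_1(\eps))}$ uniformly for $x^n \in \m{T}_{\eps'}^n(P_X)$, with $\delta_1(\eps)\to 0$ as $\eps\to 0$. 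I would simply cite this uniform lower bound from the typicality toolbox in \cite{El-Gamal--Kim2009}.

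Next, because the codewords $\{Y^n(m)\}_{m=1}^M$ are mutually independent conditional on $X^n = x^n$, the probability that all of them miss the shell factorizes, giving
\begin{equation*}
\Pr\!\left(\forall m,\ (x^n,Y^n(m))\notin\m{T}_\eps^n(P_{XY}) \,\Big|\, X^n=x^n\right) = \bigl(1-q_n(x^n)\bigr)^M \leq \exp\!\bigl(-M\, q_n(x^n)\bigr),
\end{equation*}
using $1-u\leq e^{-u}$. Substituting the lower bound on $q_n(x^n)$ and $M\geq 2^{nr}$ yields an upper bound of $\exp\!\bigl(-2^{n(r - I(X;Y) - \delta_1(\eps))}\bigr)$, which tends to zero doubly-exponentially in $n$ whenever $r > I(X;Y) + \delta(\eps)$ for a suitable $\delta(\eps)\to 0$.

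To finish, I would remove the conditioning by decomposing on the event $\m{A}_n \dfn \{X^n \in \m{T}_{\eps'}^n(P_X)\}$. By hypothesis $\Pr(\m{A}_n^c)\to 0$, and on $\m{A}_n$ the previous bound applies uniformly, so
\begin{equation*}
\Pr\!\left(\forall m,\ (X^n,Y^n(m))\notin \m{T}_\eps^n(P_{XY})\right) \leq \Pr(\m{A}_n^c) + \sup_{x^n\in\m{T}_{\eps'}^n(P_X)} \exp\!\bigl(-M\, q_n(x^n)\bigr) \xrightarrow[n\to\infty]{} 0.
\end{equation*}

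The only genuinely non-routine ingredient is the uniform lower bound $q_n(x^n) \geq 2^{-n(I(X;Y)+\delta_1(\eps))}$ for all $x^n \in \m{T}_{\eps'}^n(P_X)$; everything else is a one-line union-bound-with-independence argument. Since that lower bound is exactly the content of the conditional typicality counting estimates in \cite{El-Gamal--Kim2009}, I would simply invoke it and keep the proof at the level sketched above rather than re-deriving the entropy-counting bounds from scratch.
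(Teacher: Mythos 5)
The paper does not prove this lemma; it cites it directly from \cite{El-Gamal--Kim2009}. Your argument is correct and is essentially the standard proof found in that reference: condition on a typical $X^n=x^n$, use the joint typicality lower bound $q_n(x^n)\geq 2^{-n(I(X;Y)+\delta_1(\eps))}$ uniformly over $x^n\in\m{T}_{\eps'}^n(P_X)$, exploit the conditional independence of the $M$ codewords to get $(1-q_n)^M\leq e^{-Mq_n}$, and then un-condition by splitting on the typicality event.
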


\begin{lemma}[Packing Lemma \cite{El-Gamal--Kim2009}]\label{lem:packing}
Let $\eps>0$, and $X^n$ be an arbitrary random sequence. Also, for each $n$, let $M_n\in\mathbb{Z}^+$ be smaller than $2^{nr}$ for some $r\geq 0$,  and let $\{Y^n(m)\}_{m=1}^M$ be a set of  $P_Y$-i.i.d. random sequences, where each $Y^n(m)$ is independent of $X^n$.  Then, for any law $P_{XY}$ with marginal $P_Y$ there exists $\delta(\eps)\to 0$ as $\eps\to 0$ such that
\begin{equation*}
\lim_{n\to\infty}\Pr\big(\exists m\in[M]\quad\text{s.t.}\quad (X^n,Y^n(m))\in\m{T}_\eps^n(P_{XY})\big) =0
\end{equation*}
if $r<I(X;Y) - \delta(\eps)$.
\end{lemma}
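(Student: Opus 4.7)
The plan is to prove the Packing Lemma by a union bound combined with a uniform upper bound on the probability that a fresh $P_Y$-i.i.d. sequence falls into the conditional typical set associated with a given $x^n$.

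First, I apply the union bound over $m\in[M]$, which reduces the problem to bounding $\Pr((X^n,Y^n(m))\in\m{T}_\eps^n(P_{XY}))$ for a single $m$. Since $Y^n(m)$ is independent of $X^n$, conditioning on $X^n=x^n$ gives
\begin{equation*}
\Pr((X^n,Y^n(m))\in\m{T}_\eps^n(P_{XY}))=\sum_{x^n}P_{X^n}(x^n)\,\Pr(Y^n(m)\in\m{S}(x^n)),
\end{equation*}
where $\m{S}(x^n)\dfn\{y^n:(x^n,y^n)\in\m{T}_\eps^n(P_{XY})\}$ is the conditional slice. The key step is to prove that $\Pr(Y^n(m)\in\m{S}(x^n))\leq 2^{-n(I(X;Y)-\delta(\eps))}$ uniformly in $x^n$, for some $\delta(\eps)\to 0$ as $\eps\to 0$.

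Second, I establish this uniform bound via two standard counting estimates for strongly typical sets. On the one hand, $|\m{S}(x^n)|\leq 2^{n(H(Y|X)+\delta_1(\eps))}$, which follows directly from the relative-frequency definition of $\eps$-typicality together with the usual entropy-estimation inequalities applied to the empirical conditional distribution. On the other hand, every $y^n\in\m{T}_\eps^n(P_Y)$ has $P_Y$-i.i.d. probability at most $2^{-n(H(Y)-\delta_2(\eps))}$, again by the relative-frequency definition. Since the marginal of a jointly typical pair is itself typical, $\m{S}(x^n)\subseteq\m{T}_\eps^n(P_Y)$, and multiplying the cardinality bound by the per-sequence probability bound yields the desired inequality with $\delta(\eps)\dfn\delta_1(\eps)+\delta_2(\eps)$.

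Finally, combining this with the union bound and the hypothesis $M\leq 2^{nr}$ gives
\begin{equation*}
\Pr(\exists m\in[M]:\,(X^n,Y^n(m))\in\m{T}_\eps^n(P_{XY}))\leq M\cdot 2^{-n(I(X;Y)-\delta(\eps))}\leq 2^{-n(I(X;Y)-r-\delta(\eps))},
\end{equation*}
which tends to zero as $n\to\infty$ whenever $r<I(X;Y)-\delta(\eps)$. The only mildly delicate point is tracking the $\eps$-dependence of the two entropy estimates so as to conclude $\delta(\eps)\to 0$; everything else is a direct consequence of standard properties of strongly typical sets, and the distribution of $X^n$ plays no role beyond its independence from the codewords $\{Y^n(m)\}_{m=1}^M$.
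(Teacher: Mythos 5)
The paper does not actually prove the Packing Lemma; it is stated as a standard result and cited from the El Gamal--Kim lecture notes, so there is no in-paper proof to compare against. Your proof is nonetheless correct. A few observations: the union bound plus a uniform single-codeword bound is exactly the standard route, and your two counting estimates ($|\m{S}(x^n)|\leq 2^{n(H(Y|X)+\delta_1(\eps))}$ for typical $x^n$, and $P_{Y^n}(y^n)\leq 2^{-n(H(Y)-\delta_2(\eps))}$ for $y^n\in\m{T}_\eps^n(P_Y)$) are the usual cardinality/probability bounds for robust typicality; the uniformity in $x^n$ is fine because $\m{S}(x^n)=\emptyset$ whenever $x^n\notin\m{T}_\eps^n(P_X)$, a point you could make explicit but which does no harm as written. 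Worth noting: the paper's own proof outline of the closely related multivariate packing lemma (Lemma 4) bounds the single-event probability via a large-deviations/Sanov-type estimate, $\Pr(\m{E}_{ijk})\leq 2^{-n(D(P_{U_1U_2U_3}\|P_{U_1}\times P_{U_2}\times P_{U_3})-\delta(\eps))}$, and then identifies the divergence with a sum of mutual informations. Your counting argument and that divergence argument are two faces of the same estimate — $D(P_{XY}\|P_X\times P_Y)=I(X;Y)$ — so they produce identical exponents; the divergence form extends more cleanly to the multivariate case, whereas the direct count is arguably more elementary for the bivariate one you were asked to prove.
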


The following is a simple multivariate generalization of the packing lemma.
\begin{lemma}[Multivariate Packing Lemma]\label{lem:mv_packing}
Let $\eps>0$, and for each  $n$ let  $M_{1,n}, M_{2,n}, M_{3,n}\in\mathbb{Z}^+$ satisfy $M_{i,n}\leq 2^{nr_i}$, for $i\in\{1,2,3\}$. Also, let $\{U_i^n(m)\}_{m=1}^{M_{i,n}}$ be a set of $P_{U_i}$-i.i.d. random vectors such that $\{U_1^n(m_1),U_2^n(m_2),U_3^n(m_3)\}$ are mutually independent for any $m_1,m_2,m_3$. Then, for any law $P_{U_1U_2U_3}$ with marginals $\{P_{U_i}\}_{i=1}^3$, there exists $\delta(\eps)\to 0$ as $\eps\to 0$ such that
\begin{align*}
&\lim_{n\to\infty}\Pr\Big(\exists\;m_i\in[M_i] \;\;\text{for}\;\;i\in\{1,2,3\}\quad{\text s.t.}\\
&\quad\qquad\qquad (U_1^n(m_1),U_2^n(m_2),U_3^n(m_3))\in T_\eps^n(P_{U_1U_2U_3})\Big) \\
& =0
\end{align*}
if
\begin{align} r_1+r_2+r_3 < I(U_1;U_2) + I(U_3;U_1,U_2) - \delta(\eps).\label{eq:cond}
\end{align}
\end{lemma}
\begin{proof}[Proof outline]
Let $\m{E}_{ijk} \dfn \{U_1^n(i),U_2^n(j),U_3^n(k))\in T_\eps^n(P_{U_1U_2U_3})\}$. We need to show that $\Prv{\bigcup_{ijk}\m{E}_{ijk}}\to 0$ under Constraint~\eqref{eq:cond}. By standard typicality/large deviation arguments we have that
\begin{align*}
\Pr(\m{E}_{ijk}) &\leq 2^{-n\left(D(P_{U_1U_2U_3}\|P_{U_1}\times P_{U_3}\times P_{U_3})-\delta(\eps)\right)}\\ & = 2^{-n\left(D(P_{U_1U_2U_3}\|P_{U_1}\times P_{U_3}\times P_{U_3})-\delta(\eps)\right)}
\\
&= 2^{-n\left(D(I(U_1;U_2) + I(U_3;U_1,U_2)-\delta(\eps)\right)}.
\end{align*}
The result follows by taking the union bound over $\m{E}_{ijk}$, and requiring that it tends to zero.
\end{proof}

\section{Channel Model}\label{sec:model}
We consider the discrete memoryless broadcast channel with generalized feedback in Figure~\ref{fig:DMBCFB}.
\begin{figure}
\centering
\psfrag{M}[rc][rc]{\scriptsize $\begin{matrix}M_0\\M_1\\ M_2\end{matrix}$}
\psfrag{M1h}[lc][lc]{\scriptsize $\begin{matrix} \hat{M}_{0,1}\\\hat{M}_1\end{matrix}$}
\psfrag{M2h}[lc][lc]{\scriptsize $\begin{matrix}\hat{M}_{0,2}\\\hat{M}_2\end{matrix}$}
\psfrag{Channel}[cl][cl]{\scriptsize Channel}
\psfrag{Decoder2}[lc][lc]{\scriptsize Receiver2}
\psfrag{Encoder}[lc][lc]{\scriptsize Transmitter}
\psfrag{Decoder1}[lc][lc]{\scriptsize Receiver1}
\psfrag{X}[cc][cc]{\scriptsize$X$}
\psfrag{Y1}[cc][cc]{\scriptsize $Y_{1}$}
\psfrag{Y2}[cc][cc]{\scriptsize  $Y_{2}$}
\psfrag{Yfb}[cc][cc]{\scriptsize $\tilde{Y}$}
\psfrag{W}[lc][lc]{\scriptsize $P_{Y_1Y_2\tilde{Y}|X}$}
 \includegraphics[width=0.9 \columnwidth]{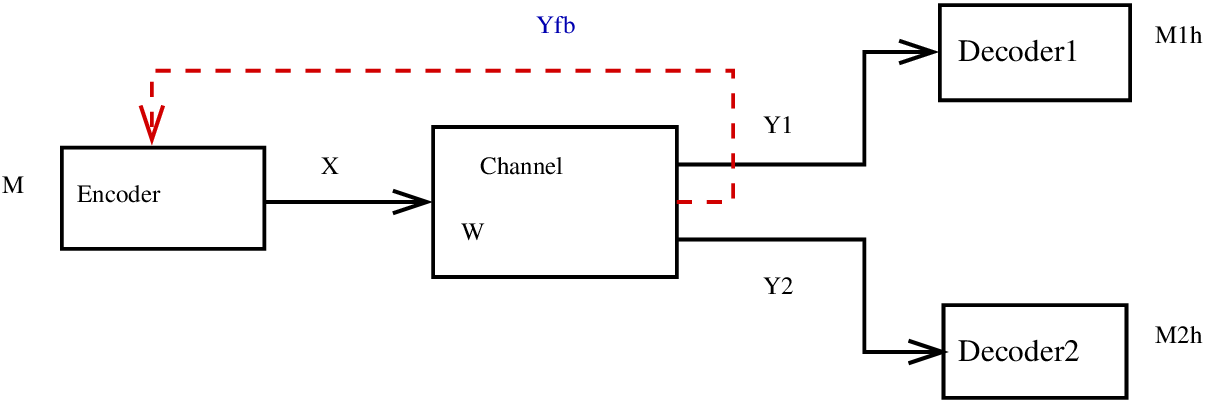}
 \caption{The two-user discrete memoryless BC with generalized feedback.}
 \label{fig:DMBCFB}
 \end{figure}
The goal of the communication is that the transmitter conveys a private Message $M_1$ to a Receiver~1, a private Message $M_2$ to a Receiver~2, and a common message $M_0$ to both receivers. The three messages $M_0, M_1$, and $M_2$ are assumed to be independent and uniformly distributed over the finite sets $[2^{nR_0}]$, $[2^{nR_1}]$, and
$[2^{nR_2}]$ respectively, where $n$ denotes the blocklength and $R_0, R_1, R_2$ are the corresponding common and private transmission rates.

Communication takes place over a DMBC with generalized feedback.
This channel is characterized by a quadruple of finite alphabets
$\m{X}$, $\m{Y}_1, \m{Y}_2,$ and $\wt{\m{Y}}$, and a conditional
probability law $P_{Y_1Y_2\wt{Y}|X}(y_1,y_2,\wt{y}|x)$ where
$x\in\m{X}$, $y_1\in \m{Y}_1$, $y_2 \in \m{Y}_2$, and $\wt{y}\in
\wt{\m{Y}}$. Given that at time $t$ the transmitter feeds the
symbol $x_t$ to the channel, Receiver~1 and Receiver~2 observe the
channel outputs $y_{1,t}\in\m{Y}_1$ and $y_{2,t}\in\m{Y}_2$ respectively, and
the transmitter observes the generalized feedback
$\wt{y}_t\in\wt{\m{Y}}$, with probability
$P_{Y_1Y_2\wt{Y}|X}(y_{1,t},y_{2,t},\wt{y}_t|x_t)$.

Thanks to feedback, the transmitter can produce its time-$t$
channel input $X_t$ as a function of the Messages $M_0,M_1,M_2$ and of
the previously observed feedback outputs
$\wt{Y}^{t-1}\dfn(\wt{Y}_1,\ldots, \wt{Y}_{t-1}):$
\begin{equation}
X_t = \psi^{(n)}_t\left(M_0,M_1, M_2, \wt{Y}^{t-1}\right),
\end{equation}
for some encoding function $\psi^{(n)}_t$, for $t\in\{1,\ldots, n\}$.
The DMBC and its feedback channel are memoryless, which is captured by the following Markov relation for  $t\in[n]$:
\begin{IEEEeqnarray*}{ccccc}
(Y_1^{t-1},Y_2^{t-1},\wt{Y}^{t-1})&\;\markov \;&X_t&\;\markov\;&(Y_{1,t},Y_{2,t},\wt{Y}_t)
\end{IEEEeqnarray*}
where $Y_i^{t-1}\dfn (Y_{i,1},Y_{i,2},\ldots,Y_{i,t-1})$, for $i\in\{1,2\}$.

After $n$ channel uses Receiver~i decodes its intended messages $M_0$ and $M_i$ for $i\in\{1,2\}$. Namely, Receiver~$i$ produces the guess:
\begin{equation}
(\hat{M}_{0,i},\hat{M}_{i} ) = \Psi_{i}^{(n)}(Y_{i}^n),  \quad i\in\{1,2\}
\end{equation}
where $\Psi_i^{(n)}$ denotes Receiver~$i$'s  decoding function.

A rate triplet $(R_0,R_1,R_2)$ is called achievable if for every
blocklength $n$ there exists a set of $n$ encoding functions $\left\{\psi_{t}^{(n)}\right\}_{t=1}^n$ and two decoding functions
$\Psi_1^{(n)}$ and $\Psi_2^{(n)}$ such that the probability of
decoding error, i.e., the probability that
\begin{IEEEeqnarray*}{rCl}
 (M_0,M_1) \neq
  (\hat{M}_{0,1},\hat{M}_1)\; \; \tn{ or } \;\;
(M_0,M_2)\neq
  (\hat{M}_{0,2},\hat{M}_2),
\end{IEEEeqnarray*} tends to 0 as the blocklength $n$
tends to infinity.  The closure of the set of achievable rate triplets
$(R_0,R_1,R_2)$ is called the \emph{feedback capacity-region} of
this setup, and we denote it by $\m{C}_{\tn{GenFB}}$.

The described generalized-feedback setup includes as special cases the
\emph{no-feedback} setup where the feedback outputs are deterministic, e.g.,
$|\m{\wt{Y}}|=1$; the \emph{noiseless-feedback} setup where the feedback
output coincides with the pair of channel outputs, i.e.,
$\wt{Y}=(Y_1,Y_2)$ (see Figure~\ref{fig:noiseless}); and the \emph{noisy-feedback} setup where the
feedback outputs and the channel inputs and outputs satisfy the Markov
relation $X_t \markov (Y_{1,t}, Y_{2,t})\markov \wt{Y}_t$ for all $t\in[n]$ (e.g., the setup in Figure~\ref{fig:noisy}).
In these special cases, we denote the capacity regions by $\m{C}_{\tn{NoFB}}$, $\m{C}_{\tn{NoiselessFB}}$, and $\m{C}_{\tn{NoisyFB}}$, respectively.
\begin{figure}
\centering
\psfrag{M}[rc][rc]{\scriptsize $\begin{matrix}M_0\\M_1\\ M_2\end{matrix}$}
\psfrag{M1h}[lc][lc]{\scriptsize $\begin{matrix} \hat{M}_{0,1}\\\hat{M}_1\end{matrix}$}
\psfrag{M2h}[lc][lc]{\scriptsize $\begin{matrix}\hat{M}_{0,2}\\\hat{M}_2\end{matrix}$}
\psfrag{W}[lc][lc]{\scriptsize $P_{Y_1Y_2\tilde{Y}|X}$}
\psfrag{Channel}[cl][cl]{\scriptsize Channel}
\psfrag{Decoder2}[cc][cc]{\scriptsize Receiver2}
\psfrag{Encoder}[lc][lc]{\scriptsize Transmitter}
\psfrag{Decoder1}[cc][cc]{\scriptsize Receiver1}
\psfrag{X}[cc][cc]{\scriptsize$X$}
\psfrag{Y1}[cc][cc]{\scriptsize $Y_{1}$}
\psfrag{Y2}[cc][cc]{\scriptsize  $Y_{2}$}
 \includegraphics[width=0.9 \columnwidth]{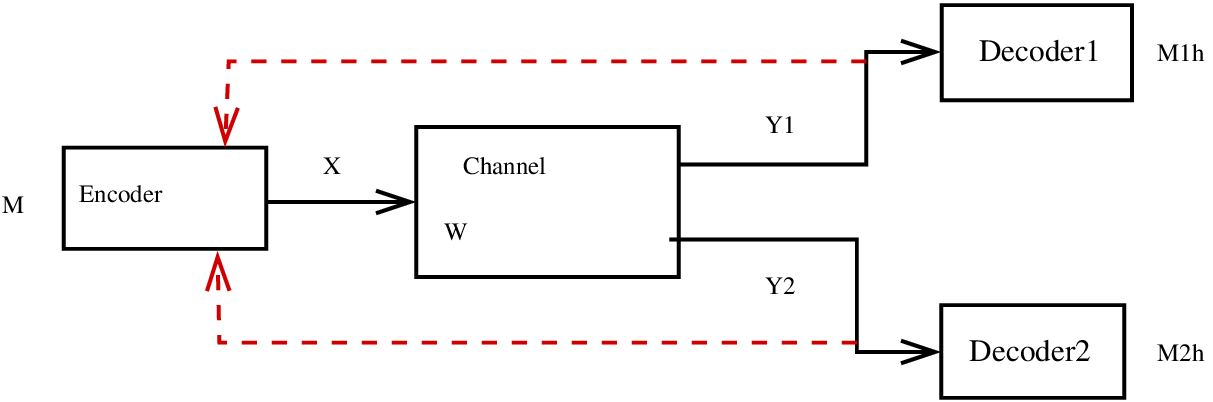}
\caption{The two-user DMBC with noise-free feedback from both outputs.}
\label{fig:noiseless}
\end{figure}
\begin{figure}
\centering
\psfrag{M}[rc][rc]{\scriptsize $\begin{matrix}M_0\\M_1\\ M_2\end{matrix}$}
\psfrag{M1h}[lc][lc]{\scriptsize $\begin{matrix} \hat{M}_{0,1}\\\hat{M}_1\end{matrix}$}
\psfrag{M2h}[lc][lc]{\scriptsize $\begin{matrix}\hat{M}_{0,2}\\\hat{M}_2\end{matrix}$}
\psfrag{W}[lc][lc]{\scriptsize $P_{Y_1Y_2\tilde{Y}|X}$}
\psfrag{Channel}[cl][cl]{\scriptsize Channel}
\psfrag{Decoder2}[lc][lc]{\scriptsize Receiver2}
\psfrag{Encoder}[lc][lc]{\scriptsize Transmitter}
\psfrag{Decoder1}[lc][lc]{\scriptsize Receiver1}
\psfrag{X}[cc][cc]{\scriptsize$X$}
\psfrag{Y1}[cc][cc]{\scriptsize $Y_{1}$}
\psfrag{Y2}[cc][cc]{\scriptsize  $Y_{2}$}
\psfrag{fb-channel}[cc][cc]{\scriptsize Feedback channel}
\psfrag{fb-channel2}[cc][cc]{\scriptsize Feedback channel}
 \includegraphics[width=0.9 \columnwidth]{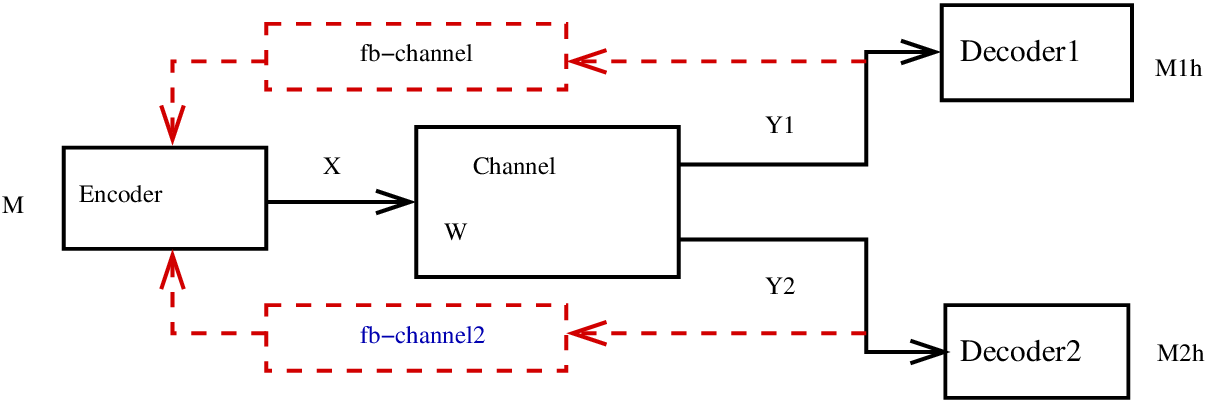}
\caption{Example of a two-user DMBC with noisy feedback.}
\label{fig:noisy}
\end{figure}

\section{Marton's No-Feedback Scheme}\label{sec:Marton}

\mw{We review Marton's achievable region with a common message \cite{marton79,GelfandPinsker80,El-Gamal--Kim2009} and the coding scheme achieving this region.
Redescribing the scheme simplifies the
description of our feedback scheme in Section~\ref{sec:total_scheme}.}

\subsection{Marton's Achievable Region}\label{sec:Marton_Ach}

Let $\m{R}_{\tn{Marton}}$ be the closure of the set of all nonnegative rate triplets $(R_0,R_1,R_2)$ that for some choice of random variables $U_0,U_1,U_2$ over finite alphabets $\m{U}_0$, $\m{U}_1$, $\m{U}_2$  and some function $f\colon \m{U}_0\times \m{U}_1\times \m{U}_2 \to \m{X}$ satisfy
\begin{subequations}\label{eq:marton}
\begin{IEEEeqnarray}{rCl}
R_0+R_1 &<& I(U_0,U_1;Y_1)
\\
R_0+R_2 &<& I(U_0,U_2;Y_2)
\\
R_0+R_1+R_2 &<& I(U_1; Y_1|U_0) + I(U_2; Y_2|U_0) \nonumber \\
&& + \min_{i}I(U_0;Y_i)- I(U_1;U_2|U_0)
\\
2R_0+R_1+R_2 &<& I(U_0,U_1; Y_1) + I(U_0,U_2; Y_2) \nonumber \\
&& - I(U_1;U_2|U_0)\label{eq:marton2r}
\end{IEEEeqnarray}
\end{subequations}
where $X=f(U_0,U_1,U_2)$,
\begin{equation*}
(U_0, U_1,U_2)\markov X \markov (Y_1,Y_2)
\end{equation*}
forms a Markov chain, and  $(Y_1,Y_2)\sim P_{Y_1Y_2|X}$ given $X$.

\begin{theorem}[From \cite{marton79,GelfandPinsker80}]\label{th:marton}
$\m{R}_{\tn{Marton}}\subseteq \m{C}_{\tn{NoFB}}$.
\end{theorem}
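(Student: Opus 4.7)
The plan is to prove achievability by random coding: superposition carries the common message, Marton's ``binning'' (a covering step at the transmitter) resolves the conflict between the two private descriptions, joint typicality decoding is performed at each receiver, and Fourier--Motzkin elimination extracts the region \eqref{eq:marton}. Fix a distribution $P_{U_0}P_{U_1|U_0}P_{U_2|U_0}$ and a function $f$ meeting the bounds with some slack, and pick auxiliary rates $R_1',R_2'\geq 0$ to be chosen later. I would generate $2^{nR_0}$ cloud-center codewords $u_0^n(m_0)$ i.i.d.\ according to $P_{U_0}$, and for each $m_0$ and $i\in\{1,2\}$ a satellite codebook $\{u_i^n(m_0,m_i,\ell_i)\}$ of size $2^{n(R_i+R_i')}$ generated conditionally i.i.d.\ according to $P_{U_i|U_0}$ given $u_0^n(m_0)$. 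To encode $(m_0,m_1,m_2)$, the transmitter searches for indices $(L_1,L_2)$ making $(u_0^n(m_0),u_1^n(m_0,m_1,L_1),u_2^n(m_0,m_2,L_2))$ jointly $\eps$-typical w.r.t.\ $P_{U_0U_1U_2}$, and then transmits $X_t=f(u_{0,t},u_{1,t},u_{2,t})$ coordinatewise.

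By a conditional bivariate (``mutual'') version of Lemma~\ref{lem:covering}, such a pair exists with probability tending to one as $n\to\infty$, provided
\[
R_1'+R_2' > I(U_1;U_2\mid U_0) + \delta(\eps).
\]
Receiver~$i$ then looks for a unique $(\hat m_0,\hat m_i)$ such that $(u_0^n(\hat m_0),u_i^n(\hat m_0,\hat m_i,\ell_i),Y_i^n)\in\m{T}_\eps^n(P_{U_0U_iY_i})$ for some $\ell_i$. By Lemma~\ref{lem:cond_typ} the transmitted triple is typical with high probability. Decomposing the error event into (a) $\hat m_0\neq m_0$ with arbitrary $(\hat m_i,\ell_i)$, and (b) $\hat m_0=m_0$ with $(\hat m_i,\ell_i)\neq(m_i,L_i)$, and applying Lemma~\ref{lem:packing} to each, the probability of error vanishes provided
\[
R_0+R_i+R_i' < I(U_0,U_i;Y_i)-\delta(\eps), \qquad R_i+R_i' < I(U_i;Y_i\mid U_0)-\delta(\eps),
\]
for $i\in\{1,2\}$.

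It remains to eliminate $R_1',R_2'$ by Fourier--Motzkin from these inequalities together with $R_1',R_2'\geq 0$. Adding the two $R_i+R_i'$ constraints and subtracting the covering lower bound on $R_1'+R_2'$ yields the Marton sum bound $R_1+R_2<I(U_1;Y_1\mid U_0)+I(U_2;Y_2\mid U_0)-I(U_1;U_2\mid U_0)$. Combining instead one $R_0+R_i+R_i'$ bound with the companion $R_j+R_j'$ bound ($j\neq i$) and the covering constraint gives $R_0+R_1+R_2<I(U_0;Y_i)+I(U_1;Y_1\mid U_0)+I(U_2;Y_2\mid U_0)-I(U_1;U_2\mid U_0)$, and taking the tighter of the two cases $i=1,2$ produces the $\min_i I(U_0;Y_i)$ in \eqref{eq:marton}. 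The individual bounds $R_0+R_i<I(U_0,U_i;Y_i)$ are immediate, and $R_0<\min_i I(U_0;Y_i)$ arises by pushing $R_i+R_i'$ to its maximal value $I(U_i;Y_i\mid U_0)$ in the joint bound. Letting $n\to\infty$ and $\eps\to 0$ closes the region and yields $\m{R}_{\tn{Marton}}\subseteq \m{C}_{\tn{NoFB}}$.

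The main obstacle is the encoding step, since the preliminaries state only the single-codebook covering Lemma~\ref{lem:covering}, whereas here two codebooks conditioned on a common $u_0^n$ must jointly contain a typical pair. The cleanest route is a second-moment (Chebyshev) argument on the number $N$ of jointly $\eps$-typical pairs $(\ell_1,\ell_2)$: one shows $\mathbb{E}[N]$ is of order $2^{n(R_1'+R_2'-I(U_1;U_2\mid U_0))}$ and, since two distinct index pairs share at most one satellite codeword, $\mathrm{Var}(N)/\mathbb{E}[N]^2\to 0$ once the rate inequality above holds with slack, forcing $N\geq 1$ with probability approaching one. All remaining steps are standard random-coding bookkeeping.
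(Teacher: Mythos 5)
Your random-coding skeleton (cloud center, conditionally independent satellites, joint-typicality covering at the encoder, joint-typicality decoding, Fourier--Motzkin) matches the paper's high-level strategy, and your second-moment argument for the covering step is a standard and perfectly valid way to prove the ``mutual covering'' fact that the paper hides behind a ``conditional version of the covering lemma.'' However, your scheme is missing \textbf{rate-splitting}, and this is not a cosmetic omission: it makes the achieved region strictly smaller than $\m{R}_{\tn{Marton}}$.

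In your construction the cloud codebook $\m{C}_0$ has only $2^{nR_0}$ codewords indexed by the common message $m_0$. Carrying the Fourier--Motzkin elimination of $R_1',R_2'$ through honestly from
\[
R_1'+R_2'>I(U_1;U_2|U_0),\quad R_0+R_i+R_i'<I(U_0,U_i;Y_i),\quad R_i+R_i'<I(U_i;Y_i|U_0),\quad R_i'\ge 0,
\]
produces, besides the bounds you list, the constraints $R_i<I(U_i;Y_i|U_0)$ for $i\in\{1,2\}$ and the sum bound $R_1+R_2<I(U_1;Y_1|U_0)+I(U_2;Y_2|U_0)-I(U_1;U_2|U_0)$ with no $\min_i I(U_0;Y_i)$ credit. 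These are strictly more restrictive than the Marton region in \eqref{eq:marton}: the latter permits, for instance, $R_0=R_2=0$ and $R_1>I(U_1;Y_1|U_0)$ as long as $R_1<I(U_0,U_1;Y_1)$ and the sum constraint holds, whereas your region forbids it. You actually write one of the extra constraints ($R_1+R_2<I(U_1;Y_1|U_0)+I(U_2;Y_2|U_0)-I(U_1;U_2|U_0)$) and then declare that the remaining bounds fall out as \eqref{eq:marton}; they do not.

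The paper circumvents this exactly where you omit it: each private message $M_i$ is split into a common part $M_{i,c}$ and a private part $M_{i,p}$, the effective ``cloud'' message is $M_c=(M_0,M_{1,c},M_{2,c})$ of rate $R_c=R_0+R_{1,c}+R_{2,c}$, and the cloud codebook is of size $2^{nR_c}$. After Fourier--Motzkin over the split rates and the binning rates, the constraints $R_i<I(U_i;Y_i|U_0)$ no longer appear, because a piece of $R_i$ can be routed through the cloud. To repair your proof, you should introduce this rate-splitting step and redo the elimination; alternatively, you would need an additional (nontrivial) argument that the union over all $(U_0,U_1,U_2,f)$ of your smaller per-choice region coincides with $\m{R}_{\tn{Marton}}$, which is a separate result and not what your writeup claims.
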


\subsection{Marton's Scheme}\label{sec:Marton_scheme}

We describe the scheme for a DMBC $(\mathcal{X}$, $\mathcal{Y}_1$, $\mathcal{Y}_2$, $P_{Y_1Y_2|X})$.
The scheme has parameters $(\mathcal{U}_0$, $\mathcal{U}_1$, $\mathcal{U}_2$, $P_{U_0U_1U_2}$, $f$, $R_0$, $R_{1,p}$, $R_{1,c}$, $R_{2,p}$, $R_{2,c}$, $R_1'$, $R_2'$, $\eps$, $n)$ where
\begin{itemize}
\item $\mathcal{U}_0, \mathcal{U}_1, \mathcal{U}_2$ are auxiliary finite alphabets;
\item $P_{U_0U_1U_2}$ is a joint law over these auxiliary alphabets;
\item $f\colon\mathcal{U}_0\times \mathcal{U}_1\times \mathcal{U}_2\to \mathcal{X}$ is a function mapping the auxiliary inputs into effective inputs;
\item $R_0,R_{1,p}, R_{2,p}, R_{1,c}, R_{2,c}$ are nonnegative communication rates where $R_1\dfn R_{1,p}+R_{1,c}$ and $R_2\dfn R_{2,p}+R_{2,c}$;
\item $R_1',R_2'$ are nonnegative binning rates;
\item $\eps>0$ is a small number; and
\item $n$ denotes the scheme's blocklength.
\end{itemize}

\subsubsection{Code Construction}\label{sec:Marton_code}
 \renewcommand{\ll}{\left \lfloor}
 \renewcommand{\l}{\lfloor}
\renewcommand{\r}{\rfloor}
 \newcommand{\rr}{\right \rfloor}

\mw{Define $R_c\dfn R_0+R_{1,c}+R_{2,c}$.}
The code consists of a single codebook $\mathcal{C}_0$, of $\l2^{nR_c}\r$ codebooks $\{\mathcal{C}_1(\mw{m_c})\}_{\mw{m_c}=1}^{\l2^{nR_c}\r}$, and of $\l2^{nR_c}\r$ codebooks $\{\mathcal{C}_2(\mw{m_c})\}_{\mw{m_c}=1}^{\l2^{nR_c}\r}$.

Codebook $\mathcal{C}_0$ consists of $\l2^{nR_c}\r$ \mw{length-$n$ codewords $\{u_0^n(\mw{m_c})\}_{\mw{m_c}=1}^{\l2^{nR_c}\r}$ whose entries are randomly and independently drawn according
$P_{U_0}$.}
\mw{For $i=1,2$  and $m_c\in[2^{nR_c}]$, Codebook $\mathcal{C}_i(\mw{m_c})$  consists of $\l2^{nR_{i,p}}\r$ bins where each bin $m_{i,p}\in[ 2^{nR_{i,p}}]$ contains  $\l2^{nR_i'}\r$ length-$n$ codewords $\{u_i^n(\mw{m_c},m_{i,p}, \ell_i)\}_{\ell_i=1}^{\l2^{nR_i'}\r}$ that are
randomly drawn $P_{U_i|U_0}$-independent given $u_0^n(\mw{m_c})$. }

\mw{Reveal all codebooks to the transmitter and codebooks $\mathcal{C}_0$ and $\{\mathcal{C}_i(\cdot)\}_{\mw{m_c}=1}^{\l2^{nR_c}\r}$ to Receiver~$i\in\{1,2\}$.}

 \subsubsection{Encoding}\label{sec:Marton_encoding}
 The encoder parses both private messages $M_{1}\in[2^{nR_1}]$ and $M_{2}\in[ 2^{nR_2}]$
 into pairs of independent submessages $(M_{1,p}, M_{1,c})\in [2^{nR_{1,p}}]\times [2^{nR_{1,c}}]$ and $(M_{2,p}, M_{2,c})\in [2^{nR_{2,p}}]\times [2^{nR_{2,c}}]$, and forms the new common message $M_c=(M_0, M_{1,c}, M_{2,c})$ \mw{of rate $R_c$.}

\mw{Now, given that $M_{c}=\mw{m_c}$, \mw{$M_{1,p}=m_{1,p}$, $M_{2,p}=m_{2,p}$}, the encoder makes a list of all pairs $(\ell_1, \ell_2)$ such that}
\footnote{The choice of $\eps/32$ will be helpful later. Here, any $\eps'<\eps$ suffices.}
\begin{IEEEeqnarray}{rCl}\label{eq:enc}
\lefteqn{
(u_0^n(\mw{m_c}), u_1^{n}(\mw{m_c,m_{1,p}},\ell_1) , u_2^{n}(\mw{m_c,m_{2,p}},\ell_2))}\qquad \nonumber \\
&&   \hspace{5cm}\in \mathcal{T}_{\eps/32}^{(n)}(P_{U_0U_1U_2}),\IEEEeqnarraynumspace
\end{IEEEeqnarray}
and chooses one pair from this list at random. We call the chosen pair $(\ell_1^*, \ell_2^*)$. If the list is empty, it chooses $(\ell_1^*,\ell_2^*)$ randomly from the set of all indices $[2^{nR_1'}]\times [2^{nR_2'}]$.

The inputs $x^n$ are obtained from the codewords $u_0^n(\mw{m_c})$, $u_1^n(\mw{m_c, m_{1,p}}, \ell_1^*)$,  $u_2^n(\mw{m_c, m_{2,p}}, \ell_2^*)$ by applying the function $f$ componentwise to these three sequences:
\begin{align*}x_j= f\big(u_{0,j}(\mw{m_c}), u_{1,j}(\mw{m_c,m_{1,p}}, \ell_1^*), u_{2,j}(\mw{m_c,m_{2,p}}, \ell_2^*)\big), \qquad \\
 \hfill\mw{ j\in[n]}.\end{align*}

\subsubsection{Decoding} \label{sec:Marton_decoding}
Given that Receiver~$1$ observes the sequence $y_1^n$, it forms a list of all the tuples $(\mw{\hat{m}_c, \hat{m}_{1,p}}, \hat{\ell}_1)$ that satisfy
\begin{equation}
(u_0^n(\mw{\hat{m}_c}), u_1^{n}(\mw{\hat{m}_c,\hat{m}_{1,p}},\hat{\ell}_1),y_1^n ) \in \mathcal{T}_{\eps}^{(n)}(P_{U_0U_1Y_1}).
\end{equation}
It randomly chooses a tuple $(\mw{\hat{m}_c,\hat{m}_{1,p}}, \hat{\ell}_1)$  from this list
(if the list is empty, it randomly chooses a pair $(\mw{\hat{m}_c,\hat{m}_{1,p}})$ from $[2^{nR_{c}}] \times [ 2^{nR_{1,p}}]$)
and parses $\mw{\hat{m}_c}$ as  $(\hat{m}_{0,1}, \hat{m}_{1,c,1}, \hat{m}_{2,c,1})$. It finally produces  $\hat{m}_{0,1}$ as its guess of message $M_0$ and $\hat{m}_1=(\hat{m}_{1,p}, \hat{m}_{1,c,1})$ as its guess of $M_1$.

 Receiver~$2$ produces its guesses \mw{$\hat{m}_{0,2}$ and $\hat{m}_2$} of the messages $M_0$ and ${M}_2$  in a similar way.

 \subsubsection{Analysis}
\mw{ See Appendix~\ref{sec:Marton_analysis}.}

\section{Lossy Gray-Wyner Coding with Side Information (LGW-SI)}\label{sec:lossy-gray-wyner}
In this section we study a distributed source-coding problem and present an achievable region for this problem. The associated scheme will be used as part of our construction for the DMBC with feedback in Section~\ref{sec:main}.

\begin{figure}
\centering
\psfrag{J0}[rc][rc]{\scriptsize  $K_0$}
\psfrag{J1}[lc][lc]{\scriptsize $K_1$}
\psfrag{J2}[lc][lc]{\scriptsize $K_2$}
\psfrag{Enc}[lc][lc]{\scriptsize  Encoder}
\psfrag{Dec1}[lc][lc]{\scriptsize Decoder1}
\psfrag{Dec2}[lc][lc]{\scriptsize Decoder2}
\psfrag{XY1Y2}[rc][rc]{\scriptsize $X^n$}
\psfrag{Y1}[rc][rc]{\scriptsize $Y_{1}^n$}
\psfrag{Y2}[rc][rc]{\scriptsize  $Y_{2}^n$}
\psfrag{V1}[rc][rc]{\scriptsize $V_{1}^n$}
\psfrag{V2}[rc][rc]{\scriptsize  $V_{2}^n$}
{\includegraphics[width=0.95\columnwidth]{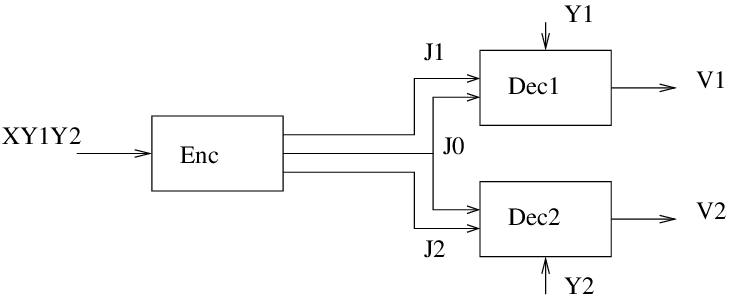}}
\caption{Lossy Gray-Wyner setup with side-information.}
\label{fig:LGW}
\end{figure}

Our source coding problem is depicted in Figure~\ref{fig:LGW}.
Unlike in classical rate-distortion problems where the decoders have to produce sequences that satisfy certain average per-symbol distortion constraints, here, we require that the  sequences produced at the decoders are almost jointly-typical with the source sequence. Thus, our problem is a coordination capacity problem \cite{cuffpermutercover10}.

The rate-distortion problem corresponding to our setup is a lossy version  of the Gray-Wyner distributed source-coding problem in  \cite{graywyner74} with  additional side-information at the decoders. Our achievable region directly leads to an achievable region for this   rate-distortion problem, see \cite{wigger12}.
Special cases of this rate-distortion problem have been considered by    Heegard and Berger  \cite{HeegardBerger85}, Tian and Diggavi  \cite{tiandiggavi08}, and Steinberg and Merhav \cite{steinbergmerhav}, and the lossless counterpart  by
Timo et al. \cite{timograntchankramer08}, \cite{timooechteringwigger12-1}.

\subsection{Setup and Achievable Region}\label{sec:GWSI_ach}

Our setup is parameterized by the tuple $(\mathcal{X}, \mathcal{Y}_1, \mathcal{Y}_2, \mathcal{V}_1, \mathcal{V}_2, P_{XY_1Y_2},P_{V_1|X},P_{V_2|X}, n)$, where
\begin{itemize}
\item $\mathcal{X}, \mathcal{Y}_1, \mathcal{Y}_2, \mathcal{V}_1, \mathcal{V}_2$  are discrete finite alphabets;
\item $P_{XY_1Y_2}$ is a joint probability distribution over the alphabet $\mathcal{X} \times \mathcal{Y}_1\times \mathcal{Y}_2$;
\item $P_{V_1|X}$ and $P_{V_2|X}$ are conditional probability distributions over  $\mathcal{V}_1$ and $\mathcal{V}_2$ given some random variable $X\in\mathcal{X}$;
\item $n$ is the blocklength.
 \end{itemize}

 In the following let $\{(X_t,Y_{1,t},Y_{2,t})\}_{t=1}^n$ be an i.i.d. sequence of triplets of discrete random variables, with marginal distribution $P_{XY_1Y_2}$.  Consider a distributed source coding setting where a sender observes the source sequence $X^n$, Receiver~1 observes the side-information $Y_1^n$, and Receiver~2 observes the side-information $Y_{2}^n$. It is assumed that the sender can noiselessly send three rate-limited messages $K_0, K_1, K_2$ to the receivers:  a common message $K_0$ to both receivers, a private message $K_1$ to Receiver~1 only, and another private message $K_2$ to Receiver~2 only.  More precisely, the encoding procedure is described by an encoding function $\lambda^{(n)} \colon \mathcal{X}^n \rightarrow [2^{nR_0}]\times[2^{nR_1}]\times[2^{nR_2}]$,  which for a sequence $X^n$ produces the messages $(K_0,K_1,K_2) =\lambda^{(n)}(X^n)$.
Each Receiver~$i$, for $i\in\{1,2\}$, produces a reconstruction sequence $\hat{V}_i^n = \Lambda_i^{(n)}(K_0,K_i, Y_i^n)$ by applying a reconstruction function $\Lambda_{i}^{(n)}\colon[2^{nR_0}]\times [2^{nR_i}]\times \mathcal{Y}_i^n\rightarrow \m{V}_i^n$ to the  messages $K_0$ and $K_i$  and the side-information $Y_i^n$.
 The goal of the communication is that for each $i\in\{1,2\}$, the reconstruction sequence $\hat{V}_i^n$ is jointly typical with the source sequence $X^n$ according to $P_X\times P_{V_i|X}$.

  A rate triplet $(R_0,R_1,R_2)$ is said to be \textit{$\eps$-achievable} if there exists a sequence of encoding and reconstruction functions $(\lambda^{(n)}, \Lambda_{1}^{(n)}, \Lambda_{2}^{(n)})$ such that:
\begin{equation*}
\Pr\left((X^{n},\hat{V}_i^{n})\not\in\m{T}^{n}_\eps(P_{XV_i})\right) \to0
\end{equation*}
as $n\to\infty$, for $i\in\{1,2\}$. A triplet is said to be \textit{achievable} if it is $\eps$-achievable for all $\eps>0$. The closure of the set of all achievable rate triplets is denoted $\m{R}_{\tn{LGW}}$.

Let $\m{R}_{\tn{LGW}}^{\tn{inner}}$ be the closure of the set of all nonnegative rate triplets $(R_0,R_1,R_2)$  satisfying
\begin{subequations} \label{eq:GW_FM}
\begin{IEEEeqnarray}{rCl}
R_0 + R_1 &>& I(X;V_0,V_1|Y_1)\label{eq:FM_01}\\
R_0+R_2 &>& I(X;V_0,V_2|Y_2),\label{eq:FM_02}\\
R_0+R_1+R_2 & > & I(X;V_1|Y_1,V_0) +I(X;V_2|Y_2, V_0) \nonumber \\ &&+\max_{i\in\{1,2\}} I(X;V_0|Y_i) \label{eq:FM_012}
\end{IEEEeqnarray}
\end{subequations}
for some choice of  the random variable $V_0$ such that
\begin{equation}\label{eq:GWMarkov}
(V_0,V_1,V_2) \markov X \markov (Y_1,Y_2).
\end{equation}

\begin{theorem}\label{thrm:LGW}
$\m{R}_{\tn{LGW}}^{\tn{inner}}\subseteq \m{R}_{\tn{LGW}}$.
Furthermore,
 $\m{R}_{\tn{LGW}}^{\tn{inner}}$ is convex.
\end{theorem}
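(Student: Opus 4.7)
The plan is to dispose of the three claims in order: the set inclusion $\m{R}_{\tn{LGW}}^{\tn{inner*}} \subseteq \m{R}_{\tn{LGW}}^{\tn{inner}}$ first, then the achievability $\m{R}_{\tn{LGW}}^{\tn{inner}} \subseteq \m{R}_{\tn{LGW}}$, and finally the convexity of $\m{R}_{\tn{LGW}}^{\tn{inner}}$. The first inclusion is immediate from the chain rule: any $V_0$ witnessing a triple $(R_0,R_1,R_2)\in\m{R}_{\tn{LGW}}^{\tn{inner*}}$ also satisfies
$R_0+R_i > \max_j I(X;V_0|Y_j)+I(X;V_i|V_0,Y_i) \geq I(X;V_0|Y_i)+I(X;V_i|V_0,Y_i) = I(X;V_0,V_i|Y_i)$,
together with $R_i>I(X;V_i|V_0,Y_i)$ directly from \eqref{eq:lWZ_1}--\eqref{eq:lWZ_2}, so the same $V_0$ witnesses membership in $\m{R}_{\tn{LGW}}^{\tn{inner}}$.

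For the main inclusion $\m{R}_{\tn{LGW}}^{\tn{inner}} \subseteq \m{R}_{\tn{LGW}}$, I would use a Wyner--Ziv / Gray--Wyner style random code in which a single shared $V_0^n$ codebook is binned \emph{independently} for the two receivers, allowing the rate describing $V_0^n$ to be split flexibly among $K_0,K_1,K_2$. Fix a joint distribution satisfying \eqref{eq:GWMarkov}. Generate $2^{nT_0}$ i.i.d.\ $V_0^n$-codewords with $T_0=I(X;V_0)+\delta$, and for each $m_0$ generate $2^{nT_i}$ conditionally i.i.d.\ codewords $V_i^n(m_0,\cdot)$ with $T_i=I(X;V_i|V_0)+\delta$. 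Further partition the $V_0^n$ codebook into $2^{nT_{0,0}}$ sub-codebooks (sub-index sent via $K_0$) and independently assign each codeword a bin label $B_i\in[2^{nT_{0,i}'}]$ for receiver~$i$ (sent via $K_i$); also Wyner--Ziv bin each conditional $V_i^n$-codebook into $2^{nR_i^{(V)}}$ bins, whose index is also sent via $K_i$. The encoder finds indices $(m_0,m_1,m_2)$ such that $(X^n,V_0^n(m_0),V_1^n(m_0,m_1),V_2^n(m_0,m_2))$ is jointly typical by successive covering (Lemma~\ref{lem:covering}), and transmits the associated sub-index and three bin indices. Receiver~$i$ first decodes $V_0^n$ from within the intersection of its $K_0$-sub-codebook and its $B_i$-bin by joint typicality with $Y_i^n$, which succeeds by Lemma~\ref{lem:packing} provided $T_{0,0}+T_{0,i}'>I(X;V_0|Y_i)+\delta$; it then decodes $V_i^n$ from its bin given $(V_0^n,Y_i^n)$ provided $R_i^{(V)}>I(X;V_i|V_0,Y_i)+\delta$.

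Imposing the rate budget $R_0\geq T_{0,0}$ and $R_i\geq T_{0,i}'+R_i^{(V)}$ and Fourier--Motzkin eliminating $T_{0,0},T_{0,1}',T_{0,2}',R_1^{(V)},R_2^{(V)}$, the chain $R_0+R_i \geq T_{0,0}+T_{0,i}'+R_i^{(V)}$ combined with the two decoding conditions delivers $R_0+R_i>I(X;V_0,V_i|Y_i)$, while $R_i\geq R_i^{(V)}>I(X;V_i|V_0,Y_i)$ holds directly, recovering precisely \eqref{eq:FM_1}--\eqref{eq:FM_02}. The main obstacle I anticipate is identifying the right binning structure: a naive scheme that splits a single codeword index $m_0$ across the three messages forces $T_{0,0}+T_{0,1}'+T_{0,2}'=T_0$ and hence produces an extra sum-rate constraint of the form $R_0+R_1+R_2>I(X;V_0)+I(X;V_1|V_0,Y_1)+I(X;V_2|V_0,Y_2)$ tied to the covering requirement $T_0\geq I(X;V_0)$; binning the $V_0^n$-codebook \emph{independently} for each receiver avoids this, since nothing then forces $T_{0,0}+T_{0,i}'$ to sum across $i$ to anything in particular. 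Convexity of $\m{R}_{\tn{LGW}}^{\tn{inner}}$ finally follows by standard time-sharing: given two triples realized by auxiliaries $(V_0^{(a)},V_1^{(a)},V_2^{(a)})$ and $(V_0^{(b)},V_1^{(b)},V_2^{(b)})$, let $Q\in\{a,b\}$ be independent of $(X,Y_1,Y_2)$ with $\Pr(Q{=}a)=\alpha$ and set $V_0'\dfn(V_0^{(Q)},Q)$, $V_i'\dfn V_i^{(Q)}$; the Markov relation \eqref{eq:GWMarkov} is preserved, and since $I(X;Q|Y_i)=0$ every mutual information appearing in \eqref{eq:GW_FM} evaluated at the primed auxiliaries becomes the exact convex combination of its two original values, so $\alpha\vec R^{(a)}+(1-\alpha)\vec R^{(b)}\in\m{R}_{\tn{LGW}}^{\tn{inner}}$.
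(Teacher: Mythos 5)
Your proposal is correct, and for the first inclusion and the convexity argument it coincides with the paper's proof: the former is the identical chain-rule observation that $\max_j I(X;V_0|Y_j)+I(X;V_i|V_0,Y_i)\geq I(X;V_0,V_i|Y_i)$, and the latter is the identical time-sharing device $\wt{V}_0=(Q,\bar V_0)$ with $Q$ independent of $(X,Y_1,Y_2)$ (see Appendix~B). The achievability scheme for $\m{R}_{\tn{LGW}}^{\tn{inner}}\subseteq\m{R}_{\tn{LGW}}$, however, is genuinely different. The paper generates three \emph{unconditional} i.i.d.\ codebooks $\m{C}_0,\m{C}_1,\m{C}_2$ (all entries drawn from the marginals $P_{V_j}$), each with a single bin structure of $\l 2^{nR_j}\r$ bins $\times\,\l 2^{nR_j'}\r$ codewords; the common link carries only the $V_0$-bin index $k_0$, each private link only the $V_i$-bin index $k_i$, and Receiver~$i$ resolves the remaining $(\ell_0,\ell_i)$ ambiguity \emph{jointly} by typicality with $Y_i^n$, with the error analysis relying on the multivariate packing lemma (Lemma~\ref{lem:mv_packing}) for the doubly-wrong error event $\m{E}_{5,i}$. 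You instead build a conditional codebook $V_i^n(m_0,\cdot)\sim P_{V_i|V_0}$, and --- the key structural departure --- you bin the $V_0$-codebook \emph{twice independently}, once for the common link and once per private link, so that the $V_0$-describing rate may be split flexibly across $K_0$ and $K_i$; then a two-stage Wyner--Ziv decoding suffices. Your route gives a cleaner Fourier--Motzkin elimination that lands exactly on \eqref{eq:GW_FM} with no parasitic $R_0$-constraint; the paper's single-binning structure yields constraints such as $R_0'<\min_i I(V_0;Y_i)$ (listed in \eqref{eq:GW_before}) that, strictly applied, produce an extra bound $R_0>\max_i I(X;V_0|Y_i)$ that your per-receiver binning removes, and you correctly anticipate this in the paragraph on the ``naive scheme.'' So the student's scheme is a valid, arguably tidier, alternative; the paper avoids the extra binning layer at the cost of a more delicate joint-decoding error analysis.
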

\begin{proof}
Inclusion $\m{R}_{\tn{LGW}}^{\tn{inner}}\subseteq \m{R}_{\tn{LGW}}$ is established  in Section~\ref{sec:GWSI_scheme}.
The convexity of
$\m{R}_{\tn{LGW}}^{\tn{inner}}$ is proved in Appendix~\ref{sec:app_conv}. \end{proof}
\mw{Notice that the region depends on the joint conditional distribution $P_{V_1V_2|V_0X}$ only through the marginal conditional distributions $P_{V_1|V_0X}$ and $P_{V_2|V_0X}$.
}

\subsection{Scheme}\label{sec:GWSI_scheme}

In this section we describe a scheme achieving the region $\set{R}_{\textnormal{LGW}}^{\textnormal{inner}}$.
Our scheme is  similar to Heegard and Berger's scheme for the Wyner-Ziv setup with several, differentely informed receivers \cite[Theorem~2]{HeegardBerger85}. However, our scheme also uses the double-binning technique for the common codebook proposed in \cite{tiandiggavi08}, but where here the double-binning is performed in two different ways, one way that is relevant for Receiver~1 and the other way relevant for Receiver~2. This is beneficial when the quality of the side-information at the two receivers is very different.

The  scheme we propose has parameters $\mathcal{V}_0$, $P_{V_0 V_1 V_2|X}$,  $R_{0,0}$, $R_{0,1}$, $R_{0,2}$, $R_{1,0}$, $R_{1,1}$, $R_{2,0}$, $R_{2,2}$, ${R}_0'$, $R_1'$, $R_2'$, $\eps$, $n$, where
\begin{itemize}
\item $\mathcal{V}_0$ is an auxiliary alphabet;
\item $P_{V_0 V_1 V_2|X }$ is a conditional joint probability distribution over $\mathcal{V}_0\times \mathcal{V}_1\times \mathcal{V}_2$ given some $X\in\mathcal{X}$ such that its marginals satisfy $\sum_{v_0,v_2} P_{V_0V_1V_2|X}(v_0,v_1,v_2|x)= P_{V_1|X}(v_1|x)$ and $\sum_{v_0,v_1} P_{V_0V_1V_2|X}(v_0,v_1,v_2|x)=P_{V_2|X}(v_2|x)$;
\item $R_{0,0}, R_{0,1}, R_{0,2}, R_{1,0}, R_{1,1}, R_{2,0}, R_{2,2}\geq 0$ are nonnegative communication rates;
\item ${R}_0', R_1', R_2'\geq0$ are nonnegative binning rates, where ${R}_0'$ cannot be smaller than $\max\{R_{1,0}, R_{2,0}\}$;
\item $\eps>0$ is a small number; and
\item $n$ is the scheme's blocklength.
\end{itemize}

\subsubsection{Codebook Generation}\label{sec:GWSI_code}
Generate three codebooks $\m{C}_0, \m{C}_1, \m{C}_2$ independentely of each other in the following way.

Codebook $\m{C}_0$ consists of $\l2^{nR_{0,0}}\r$ superbins,
each containing $\l2^{n{R}_{0}'}\r$  length-$n$ codewords whose entries are
randomly and independently generated according to
the law $P_{V_0}$.
\begin{figure}
 \centering
 \psfrag{Bin1}[cc][cc]{\scriptsize Bin 1}
  \psfrag{Bin2}[cc][cc]{\scriptsize Bin 2}
  \psfrag{Bin_1}[cb][ct]{\scriptsize a subbin of user~{\color{red}1}}
  \psfrag{Bin_2}[cc][cc]{\scriptsize a subbin of user~{\color{blue}2} }
 \psfrag{Superbin1}[cc][cc]{\scriptsize superbin 1}
  \psfrag{Superbin2}[cc][cc]{\scriptsize superbin 2}
  \psfrag{Codebook V0}[cc][cc]{}
  \psfrag{Codebook V1}[cb][ct]{\scriptsize Codebook $V_1^n$}
   \psfrag{Codebook V2}[cb][ct]{\scriptsize Codebook $V_2^n$}
   \psfrag{dots}[cb][ct]{\scriptsize $\vdots$}
{\includegraphics[width=0.25\textwidth]{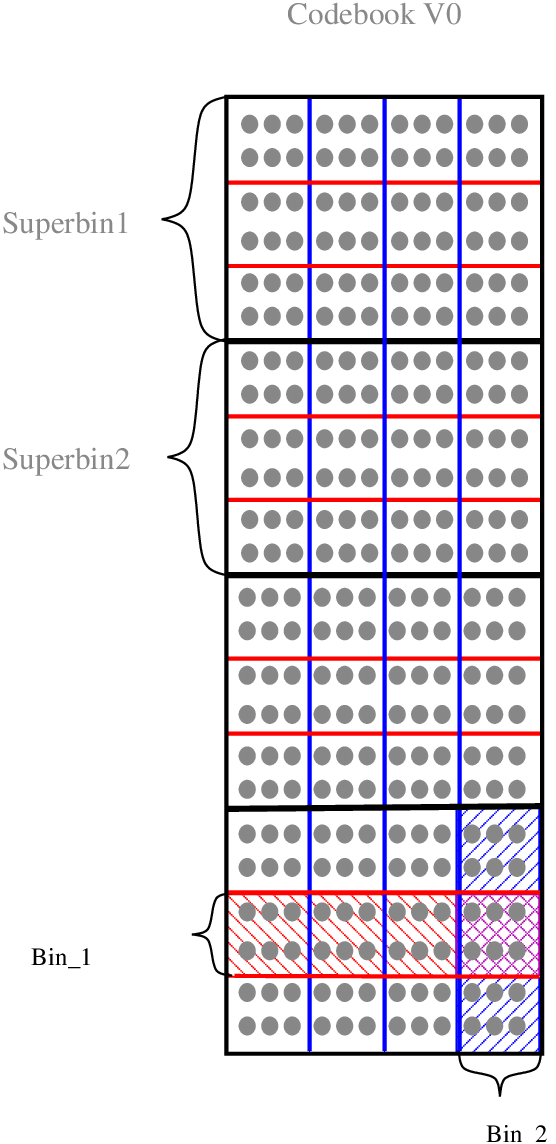}}
\caption{Double-binning structure of codebook $\set{C}_0$ in our lossy Gray-Wyner scheme with side-information. The dots depict the codewords.}
\label{superbin}
\end{figure}
\mw{We make two partitions of the codewords in each superbin, see Figure~\ref{superbin}. In the first partition
the codewords of each superbin are assigned to
$\l2^{n{R}_{1,0}}\r$ subbins, each containing $\l2^{n({R}_0'-R_{1,0})}\r$ codewords; in the second partition they are assigned to
$\l2^{n{R}_{2,0}}\r$ subbins, each containing $\l2^{n({R}_0'-R_{2,0})}\r$ codewords.} There are thus two different ways
to refer to a specific codeword in $\m{C}_0$.
When we consider the first partition, we denote the
codewords in the $k_{1,0}\in[2^{n{R}_{1,0}}]$-th
subbin of  superbin $k_{0,0}\in[2^{nR_{0,0}}]$ by
\[\{v_{0}^n(1; k_{0,0}, k_{1,0},\ell_{1,0})\}_{\ell_{1,0}=1}^{\l2^{n({R}_0'-R_{1,0})}\r};\]
when we consider the second
partition, we denote the
codewords in the   $k_{2,0}\in[2^{n{R}_{2,0}}]$-th subbin of  superbin
$k_{0,0}\in[2^{nR_{0,0}}]$ by
\[\{v_{0}^n(2;k_{0,0}, k_{2,0},\ell_{2,0})\}_{\ell_{2,0}=1}^{\l2^{n({R}_0'-R_{2,0})}\r}.\]
Thus, here the first index indicates whether
the last two indices refer to the first or the second partition of the superbins.

\mw{For $i\in\{1,2\}$, Codebook $\m{C}_i$} consists of  $\l2^{nR_{0,i}}\r$ superbins each containing $\l2^{nR_{i,i}}\r$ subbins with $\l2^{nR_{i}'}\r$ codewords of length $n$, where all entries of all codewords are randomly and independently drawn according to $P_{V_i}$.
For $k_{i,i}\in[2^{nR_{i,i}}]$, we denote the codewords in the $k_{i,i}$-th subbin of superbin $k_{0,i} \in[2^{nR_{0,i}}]$ by \[\{v_{i}^n(k_{0,i}, k_{i,i}, \ell_i)\}_{\ell_i=1}^{\l2^{nR_i'}\r}.\]

All codebooks are revealed to the sender, and codebooks $\{\m{C}_0,\m{C}_i\}$  are revealed to Receiver~$i\in\{1,2\}$.

\subsubsection{LGW-SI Encoder}\label{sec:GWSI_encoder}
Given that the encoder observes the source sequence $X^n=x^n$, it searches the codebooks  $\m{C}_0,\m{C}_1,\m{C}_2$ for a triplet of codewords $v_0^{n}(1;k_{0,0}, k_{1,0}, \ell_{1,0})\in\m{C}_0$, $v_1^n(k_{0,1}, k_{1,1},\ell_1)\in\m{C}_1$, $v_2^n(k_{0,2}, k_{2,2},\ell_2)\in\m{C}_2$ such that for $i\in\{1,2\}$:
\begin{equation}\label{eq:typcond}
(X^n,v_0^n(1;k_{0,0}, k_{1,0}, \ell_{1,0}),v_i^n(k_{0,i}, k_{i,i},\ell_{i}))\in\m{T}_{\eps/2}^n(P_{XV_0V_i}).
\end{equation}
It then forms a list of all tuples of indices $(k_{0,0}, k_{1,0},  \ell_{1,0}, k_{0,1}, k_{1,1}, \ell_1, k_{0,2}, k_{2,2},\ell_2)$ satisfying \eqref{eq:typcond}. If the
list is non-empty, the sender  chooses one tuple from this list at random; otherwise,
it randomly chooses a tuple
$(k_{0,0}, k_{1,0},  \ell_{1,0}, k_{0,1}, k_{1,1}, \ell_1, k_{0,2}, k_{2,2}, \ell_2)$ from the set
$[2^{nR_{0,0}}]\times [2^{nR_{1,0}}]\times [2^{n({R}_0'-R_{1,0})}]\times[2^{nR_{0,1}}]\times [2^{nR_{1,1}}]\times [2^{nR_1'}]\times [2^{nR_{0,2}}]\times [2^{nR_{2,2}}]\times [2^{nR_2'}]$.
We denote the chosen indices by $k_{0,0}^*, k_{1,0}^*, \ell_{1,0}^*, k_{0,1}^*, k_{1,1}^*, \ell_1^*, k_{0,2}^*, k_{2,2}^*, \ell_2^*$. Also, define $(k_{2,0}^*, \ell_{2,0}^*)$ such that  $v_{0}^n(2;k_{0,0}^*, k_{2,0}^*,\ell_{2,0}^*)$ and $v_{0}^n(1;k_{0,0}^*, k_{1,0}^*,\ell_{1,0}^*)$ refer to the same codeword in $\mathcal{C}_0$.

The encoder then sends the product message $K_0= (k_{0,0}^*, k_{0,1}^*,  k_{0,2}^*)$  to both receivers, the product message $K_1=(k_{1,0}^*, k_{1,1}^*)$  to Receiver~1 only, and the product message $K_2=(k_{2,0}^*,k_{2,2}^*)$ to Receiver~2 only.

\subsubsection{LGW-SI Decoder}\label{sec:GWSI_decoder}
Receiver $i\in\{1,2\}$ first parses the common message $K_0$ as $(K_{0,0}, K_{0,1}, K_{0,2})$ and its private message $K_i$ as $K_i=(K_{i,0}, K_{i,i})$. Then, given that Receiver~$i$'s
side-information is $Y_i^n=y_i^n$ and that  $K_{0,0}=k_{0,0}$,  $K_{0,i}=k_{0,i}$,  $K_{i,0}=k_{i,0}$, and
$K_{i,i}=k_{i,i}$, Receiver~$i$ seeks a codeword ${v}_0^{n}(i;k_{0,0}, k_{i,0}, \ell_{i,0})$ in codebook $\m{C}_0$ and a codeword
${v}_i^{n}(k_{0,i}, k_{i,i}, \ell_i)$ in codebook $\m{C}_{i}$ such that
\[({v}_0^{n}(i; k_{0,0}, k_{i,0},\ell_{i,0}),{v}_i^{n}(k_{0,i}, k_{i,i},  \ell_i),y_i^{n})\in\m{T}_{\eps}^n(P_{V_0V_iY_i}).\] If exactly one such pair of codewords
exists, Receiver~$i$ produces as its reconstruction sequence  $\hat{V}_{i}^n= {v}_i^{n}(k_{0,i}, k_{i,i}, \ell_i)$. Otherwise, it randomly chooses  a
 triplet $( k_{0,i}', k_{i,i}',  \ell_i')$ from the set  $[2^{nR_{0,i}}]\times [2^{nR_{i,i}}]\times [2^{nR_i'}]$ and  produces  as its reconstruction sequence
 $\hat{V}_{i}^n= {v}_i^{n}(k_{0,i}', k_{i,i}', \ell_i')$.

\mw{\subsubsection{Analysis}
In  Appendix~\ref{sec:GWSI_analysis} we show that under Constraints~\eqref{eq:GW_FM} the failure probability of our scheme tends to 0 as $n\to \infty$.
 The existence of a deterministic coding scheme with vanishing failure probability follows from standard arguments. }

\section{Main Result \mw{for DMBCs with Generalized Feedback}}\label{sec:main}

\subsection{Achievable Region}\label{sec:achregion}
Consider a DMBC with generalized feedback given by $\m{X}, \m{Y}_1, \m{Y}_2, \tilde{\m{Y}}, P_{Y_1Y_2\tilde{Y}|X}$. Let $\m{R}_{\tn{inner}}$ be the closed convex hull of the set of all nonnegative triplets $(R_0,R_1,R_2)$ that satisfy Inequalities (\ref{eq:inner}) shown on top of the next page, for some choice of auxiliary random variables $(U_0,U_1,U_2, V_0, V_1, V_2)$ and function $f$ such that $X={f}(U_0,U_1,U_2)$,
\begin{align}\label{eq:M31}
(V_0,V_1,V_2)\markov (U_0,&U_1,U_2,\wt{Y}) \markov (Y_1,Y_2)
\end{align}
and
\begin{equation}\label{eq:M41}
(U_0,U_1,U_2)\markov X\markov (Y_1,Y_2,\wt{Y})
\end{equation}
form Markov chains, and $(Y_1,Y_2, \tilde{Y})\sim P_{Y_1Y_2\tilde{Y}|X}$.

Notice that for noise-free feedback where $\tilde{Y}=(Y_1,Y_2)$ the Markov chain \eqref{eq:M31} is satisfied for any choice of the auxiliary random variables $(U_0,U_1,U_2, V_0,V_1,V_2)$.
\begin{figure*}
\begin{subequations}\label{eq:inner}
\begin{IEEEeqnarray}{rCl}
 R_0 +R_1 &\leq& I(U_0,U_1;Y_1,V_1) - I(U_0,U_1,U_2,\wt{Y};V_0,V_1|Y_1)\label{eq:inner1}
\\
 R_0+R_2 &\leq& I(U_0,U_2;Y_2,V_2) - I(U_0,U_1,U_2,\wt{Y};V_0,V_2|Y_2)\label{eq:inner2}
\\
 R_0+R_1+R_2 &\leq& I(U_1; Y_1,V_1|U_0) + I(U_2; Y_2,V_2|U_0) + \min_{i\in\{1,2\}}I(U_0;Y_i,V_i)- I(U_1;U_2|U_0)\nonumber \\
 && -I(U_0,U_1,U_2,\wt{Y};V_1|V_0,Y_1) -I(U_0,U_1,U_2,\wt{Y};V_2|V_0,Y_2) - \max_{i\in\{1,2\}} I(U_0,U_1,U_2,\wt{Y};V_0|Y_i)
\\
\nonumber 2R_0+R_1+R_2 &\leq& I(U_1U_0; Y_1,V_1) + I(U_2,U_0; Y_2,V_2) - I(U_1;U_2|U_0) \\
&& - I(U_0,U_1,U_2,\wt{Y};V_0,V_1|Y_1) -I(U_0,U_1,U_2,\wt{Y};V_0,V_2|Y_2)\label{eq:inner4}
\end{IEEEeqnarray}
\end{subequations}
\rule{\textwidth}{0.6pt}
\end{figure*}

\begin{theorem}\label{thrm:innerbound2}
$\m{R}_{\tn{inner}} \subseteq \m{C}_{\tn{GenFB}}$.
\end{theorem}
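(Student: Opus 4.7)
The inclusion $\m{R}_{\tn{inner}}^*\subseteq \m{R}_{\tn{inner}}$ is a direct inequality-level comparison analogous to the one in Theorem \ref{thrm:LGW}. Under the Markov chain $(V_0,V_1,V_2)\markov (X,\wt{Y})\markov (Y_1,Y_2,U_0,U_1,U_2)$, the pair $(X,\wt{Y})$ is a sufficient statistic for $(V_0,V_1,V_2)$, so the mutual-information terms involving $(X,\wt{Y})$ in \eqref{eq:inner2} upper-bound the corresponding terms involving $(U_0,U_1,U_2,\wt{Y})$ in \eqref{eq:inner}. Moreover, splitting the single $\max$ term in the $R_0+R_1+R_2$ bound of \eqref{eq:inner2} into two separate maxima (one inside each private refinement term and one common $\max$) only weakens the bound, exactly as in the passage from $\m{R}_{\tn{LGW}}^{\tn{inner*}}$ to $\m{R}_{\tn{LGW}}^{\tn{inner}}$. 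A short term-by-term inspection therefore suffices.

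For the main inclusion $\m{R}_{\tn{inner}}\subseteq \m{C}_{\tn{GenFB}}$, I would use a block-Markov scheme over $B$ blocks of length $n$ and let $B\to\infty$ after the error analysis. In block $b\in[B]$ the encoder sends the fresh messages $(M_0^{(b)},M_1^{(b)},M_2^{(b)})$ together with update indices $(K_0^{(b-1)},K_1^{(b-1)},K_2^{(b-1)})$ pertaining to the auxiliaries of block $b-1$. Concretely, Marton's scheme of Section \ref{sec:Marton_scheme} is invoked with effective common message $(M_0^{(b)},M_{1,c}^{(b)},M_{2,c}^{(b)},K_0^{(b-1)})$ and effective private-$i$ message $(M_{i,p}^{(b)},K_i^{(b-1)})$. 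At the end of block $b-1$ the encoder has the Marton auxiliaries $(U_0^n,U_1^n,U_2^n)$, which by Remark \ref{rem:typU} are jointly typical w.r.t.\ $P_{U_0U_1U_2}$; the memoryless feedback channel and the conditional typicality lemma then make $(U_0^n,U_1^n,U_2^n,\wt{Y}^n)$ jointly typical w.r.t.\ $P_{U_0U_1U_2\wt Y}$. This tuple plays the role of the source $X^n$ in the LGW-SI scheme of Section \ref{sec:GWSI_scheme}, with receiver-$i$ side information $Y_i^n$ (from block $b-1$), yielding the update indices $(K_0^{(b-1)},K_i^{(b-1)})$ and a receiver-$i$ reconstruction $\hat V_i^n$ jointly typical with the source.

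The receivers perform backward decoding. In the terminating block $B$ each receiver runs Marton decoding on $Y_i^{n,(B)}$ alone, which recovers the fresh messages and the update indices $(K_0^{(B-1)},K_i^{(B-1)})$. Inductively, having decoded block $b$, receiver $i$ applies Remark \ref{rem:LGW} to reconstruct $V_i^{n,(b-1)}$ from $Y_i^{n,(b-1)}$ and the update indices, and then treats the enhanced pair $(Y_i^{n,(b-1)},V_i^{n,(b-1)})$ as its effective channel output when decoding block $b-1$ via Remark \ref{rem:Marton}. The resulting channel-coding constraints are exactly those of \eqref{eq:M_after} with $Y_i$ replaced everywhere by $(Y_i,V_i)$, applied to the augmented common and private messages; the source-coding constraints are \eqref{eq:GW_FM} with $X$ replaced by $(U_0,U_1,U_2,\wt{Y})$. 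Eliminating the auxiliary rates $(R_{K_0},R_{K_1},R_{K_2})$ by Fourier--Motzkin yields precisely \eqref{eq:inner}. The closed convex hull in the definition of $\m{R}_{\tn{inner}}$ is obtained by the usual time-sharing argument, and first letting $n\to\infty$ and then $B\to\infty$ wipes out the rate penalty from the terminating block.

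The main obstacle is the simultaneous verification of the hypotheses of Remarks \ref{rem:Marton} and \ref{rem:LGW} in each block under the backward-decoding recursion. One must check that the coupling of the encoder, the DMBC, the feedback channel, and the embedded LGW-SI scheme induces a joint law on $(U_0,U_1,U_2,\wt Y,V_0,V_1,V_2,Y_1,Y_2)$ that factors as the single-letter distribution appearing in the definition of $\m{R}_{\tn{inner}}$; this is precisely where the Markov chains $(U_0,U_1,U_2)\markov X\markov (Y_1,Y_2,\wt Y)$ and $(V_0,V_1,V_2)\markov (U_0,U_1,U_2,\wt Y)\markov (Y_1,Y_2)$ enter. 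The Fourier--Motzkin bookkeeping, while tedious, is routine once these joint-typicality properties are in hand.
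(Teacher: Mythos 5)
Your overall architecture matches the paper's: block-Markov transmission in which each block uses Marton coding to carry fresh data together with LGW-SI update indices for the previous block, backward decoding, the typicality guarantees from Remarks~\ref{rem:typU}, \ref{rem:Marton}, and \ref{rem:LGW}, and a final Fourier--Motzkin elimination. The genuine gap is in how you initialize the backward decoding. You propose decoding block $B$ from $Y_i^{n,(B)}$ alone, i.e., without the $V_i$ enhancement. But block $B$ carries a full-rate Marton codeword encoding $(M^{(B)},K^{(B-1)})$ at the augmented rates $\bar{R}_j=R_j+\wt{R}_j$; these rates are deliberately chosen beyond the ordinary Marton constraints $I(U_0,U_i;Y_i)$ and become decodable only once the receiver's effective output is upgraded to $(Y_i,V_i)$. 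Forcing block $B$ to be decodable without enhancement re-imposes the no-feedback constraints on $\bar{R}_j$ and collapses the achievable region back to Marton's. Taking $B\to\infty$ cannot repair this: the obstruction is an infeasibility of block $B$ itself, not a vanishing fractional rate loss. The paper resolves it by appending a terminating block $B{+}1$ of \emph{stretched} length $n'=\gamma n$ that carries only $K^{(B)}$ (no fresh data) using an ordinary DMBC code at per-symbol rate $\gamma^{-1}\wt{R}_j$; for $\gamma$ large enough this is always feasible (given positive single-user capacities), and its overhead $\gamma/(B+\gamma)$ does vanish as $B\to\infty$. Without this device, or an equivalent flushing mechanism, the backward-decoding recursion never gets started.

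A secondary issue: your argument for $\m{R}_{\tn{inner}}^*\subseteq\m{R}_{\tn{inner}}$ states the data-processing inequality in the wrong direction. Since $X$ is a function of $(U_0,U_1,U_2)$, one always has $I(X,\wt{Y};V_j|\cdot)\leq I(U_0,U_1,U_2,\wt{Y};V_j|\cdot)$, and these terms appear with a negative sign on the right-hand sides of the rate bounds, so a naive substitution would make the bounds in (\ref{eq:inner2}) \emph{larger}, not smaller. The correct step, as in the paper, is to restrict to $(V_0,V_1,V_2)$ satisfying the stronger Markov chain $(V_0,V_1,V_2)\markov (X,\wt{Y})\markov (Y_1,Y_2,U_0,U_1,U_2)$, under which the two mutual-information quantities are \emph{equal}, and then combine this with the inclusion $\m{R}_{\tn{LGW}}^{\tn{inner*}}\subseteq\m{R}_{\tn{LGW}}^{\tn{inner}}$ on the constraint sets and the fact that restricting the admissible $(V_0,V_1,V_2)$ can only shrink the region.
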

The proof of the theorem is given in Subsection~\ref{sec:total_scheme}. A few remarks are in order:
\begin{remark}
The region $\m{R}_{\tn{inner}}$
includes $\m{R}_{\tn{Marton}}$, because when for a given choice of $(U_0,U_1,U_2)$, constraints \eqref{eq:inner}
are specialized to $(V_0,V_1,V_2) = {\rm const}$, then it results in  the Marton region~\eqref{eq:marton}. The inclusion is also clear from the construction of our scheme in Subsection~\ref{sec:total_scheme} ahead.
\end{remark}

\begin{remark}
In our coding scheme we can allow $f$ to be a \emph{randomized} function. In this case, the scheme achieves the region $\m{R}_{\tn{inner}}$ but where the input $X$ can be an arbitrary random variable satisfying the Markov chain \eqref{eq:M41}.

\mw{We can also superposition all the codebooks on a $P_Q$-i.i.d. random vector $Q^n$ that is known at the transmitter and both receivers. In this case, the joint typicality checks need to be modified accordingly. The new scheme achieves a region as in $\m{R}_{\tn{inner}}$ but where the mutual information constraints~\eqref{eq:inner} need to be conditioned on $Q$ and the Markov chains in \eqref{eq:M31} and \eqref{eq:M41} require $Q$ in the middle position.}

 It is not clear whether \mw{these changes result} in an improved region compared to $\m{R}_{\tn{inner}}$.
\end{remark}
\begin{remark}
Recall that for fixed finite alphabets, the Shannon information measures are continuous (say w.r.t. Euclidean distance) in the joint distribution \cite{Yeung_book}. Fix the channel's input, output, and feedback alphabets. Then for any fixed choice of $(P_{U_0U_1U_2}, f, P_{V_0V_1V_2|U_0U_1U_2\wt{Y}})$, the quantities on the right-hand side of  Inequalities (\ref{eq:inner}) are continuous in $P_{Y_1Y_2\wt{Y}|X}$.
\end{remark}

\begin{remark}\label{rem:contin}
By the previous remark, the following conclusion holds for any DMBC $P_{Y_1Y_2|X}$ with feedback alphabet $\m{\wt{Y}} = \m{Y}_1\times\m{Y}_2$. Assume that the region $\m{R}_{\tn{inner}}$  associated with noiseless feedback (i.e., $\tilde{Y}=(Y_1,Y_2)$) strictly contains $\m{C}_{\tn{NoFB}}$. Now, if we consider a \emph{noisy} feedback channel $P_{\wt{Y}|XY_1Y_2}$ that is close enough to  the noiseless feedback (i.e.,  $\tilde{Y}$ close  to $(Y_1,Y_2)$), then also the region  $\m{R}_{\tn{inner}}$ associated with this noisy feedback strictly contains  $\m{C}_{\tn{NoFB}}$.
\end{remark}

\subsection{Scheme achieving $\m{R}_{\tn{innner}}$}\label{sec:total_scheme}

\begin{figure*}
\psfrag{ldots}{$\ldots$}
\psfrag{newdata}{\small fresh data}
\psfrag{updateinfo}{\small update info.}
\psfrag{t=1}{\footnotesize $t\!=\!1$}
\psfrag{t=np}{\footnotesize $t\!=\!n$}
\psfrag{t=nB}[r][c]{\footnotesize $t\!=Bn+\gamma n$}
\psfrag{M1}{\scriptsize $\!\!M_{0,(1)}\!,\!M_{1,(1)}\!,\!M_{2,(1)}$}
\psfrag{M2}{\scriptsize $\!\!M_{0,(2)}\!,\!M_{1,(2)}\!,\!M_{2,(2)}$}
\psfrag{M3}{\scriptsize $\!\!M_{0,(3)}\!,\!M_{1,(3)}\!,\!M_{2,(3)}$}
\psfrag{M5}{}
\psfrag{M6}{\scriptsize $\!\!M_{0,(B)}\!,M_{1,(B)}\!,M_{2,(B)}$}
\psfrag{V1}{\scriptsize $\!\!K_{0,(1)}\!,K_{1,(1)}\!,K_{2,(1)}$}
\psfrag{V2}{\scriptsize $\!\!K_{0,(2)}\!,K_{1,(2)}\!,K_{2,(2)}$}
\psfrag{V5}{}
\psfrag{V7}{\scriptsize $\!\!K_{0,(\!B)}\!,\!K_{1,(\!B)}\!,\!K_{2,(\!B)}$}
\psfrag{V6}{\scriptsize $\!K_{0,(\!B\!-\!1)}\!,\!K_{1,(\!B\!-\!1)}\!,\!K_{2,(\!B\!-\!1)}$}
\psfrag{Block1}{\small Block 1}
\psfrag{Block2}{\small Block 2}
\psfrag{Block3}{\small Block 3}
\psfrag{BlockBm}{\small Block $B-1$}
\psfrag{BlockB}{\small Block $B$}
\psfrag{BlockB1}{\small Block $B+1$}
 \includegraphics[width=\textwidth]{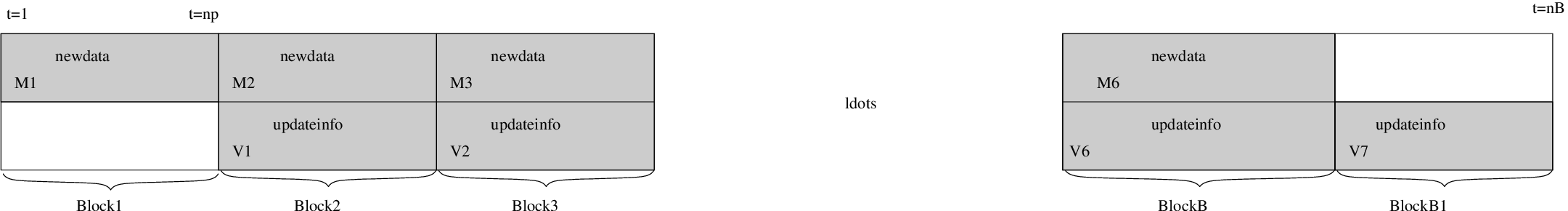}
    \caption{Block-Markov strategy of our feedback-scheme.}
    \label{fig2}

\end{figure*}

 Our scheme combines Marton's no-feedback scheme of Section~\ref{sec:Marton_scheme} with our LGW-SI scheme of Section~\ref{sec:GWSI_scheme} using a block-Markov framework. We first present the high-level idea of the scheme, which is also depicted in Figure~\ref{fig2}.
Transmission takes place over $B+1$ consecutive blocks, where the first $B$ blocks are of length $n$ each, and the last block is of length $\gamma n$ for $\gamma>1$. We denote the input/output/feedback sequences in Block $b\in[B]$ by $X^n_{(b)},Y^n_{i,(b)},\wt{Y}^n_{(b)}$, respectively, and the input/output sequences in Block $B+1$ by  $X^{n'}_{(B+1)},Y^{n'}_{i,(B+1)}$. The messages to be sent are in a product form $M_i  = (M_{i,(1)},\ldots,M_{i,(B)})$, for $i\in\{0,1,2\}$, where each $M_{i,(b)}$ is uniformly distributed over the set $[2^{nR_i}]$.
The effective rates of transmission are thus
\begin{equation}\label{eq:effrates}
\left(\frac{B}{B+ \gamma} R_0, \frac{B}{B+ \gamma} R_1, \frac{B}{B+ \gamma} R_2\right)
\end{equation}
and approach $(R_0, R_1, R_2)$ as the number of blocks $B\to\infty$.

In each block $b$ the transmitter uses Marton's no-feedback  scheme to send the Messages $M_{0,(b)}, M_{1,(b)}, M_{2,(b)}$ together with update information $K_{0,(b-1)}, K_{1,(b-1)}, K_{2,(b-1)}$ pertaining to the messages sent in the previous block. An exception is the first (resp. last) block where only the message tuple (resp. update information) is sent. The update information is constructed in a way that when $(K_{0,(b)}, K_{i,(b)})$ is available at Receiver~$i$, the latter can use it to ``improve'' its block-$b$ observations $Y_{i,(b)}^n$. This facilitates the decoding of the corresponding messages $M_{0,(b)}, M_{1,(b)}, M_{2,(b)}$, which otherwise might not have been possible to decode reliably. The update information is generated via the LGW-code described in Section~\ref{sec:GWSI_scheme}. The code is designed for an LGW-setup where the encoder's \mw{``source sequence"} consists of the auxiliary Marton-codewords and the feedback signal, and where the receivers' \mw{``side-informations"} consist of their respective channel outputs.

Each Receiver $i$, for $i\in\{1,2\}$, performs backward decoding. It starts from the last block and decodes the update information $(K_{0,(B)}, K_{i, (B)})$ based on
$Y_{i,(B+1)}^{n'}$. Denote its guess by $\hat{K}_{0,i, (B)}, \hat{K}_{i,(B)}$. Then, for each block $b\in[B]$, starting from block $B$ and going backwards, it performs the following steps:
\begin{enumerate}
\item Using  $(\hat{K}_{0,b}, \hat{K}_{i,b})$, it ``improves'' its block-$b$ outputs $Y_{i,(b)}^n$.
\item Based on these ``improved'' outputs, it then decodes the data  $(M_{0,(b)}, M_{i,(b)})$ and the update information $(K_{0,(b-1)}, K_{i,(b-1)})$. We denote the corresponding guesses by  $(\hat{M}_{0,(b)}, \hat{M}_{i,(b)})$ and $(\hat{K}_{0,i,(b-1)}, \hat{K}_{i,(b-1)})$.
\end{enumerate}

We now describe the coding scheme in more detail. Our scheme has  parameters $(\mathcal{U}_0$, $\mathcal{U}_1$, $\mathcal{U}_2$, $\mathcal{V}_0$, $\mathcal{V}_1$, $\mathcal{V}_2$,
$P_{U_0U_1U_2}$, $f$, $P_{V_0V_1V_2| U_0U_1 U_2 \wt{Y}}$,  $ {R}_0$, ${R}_{1}$, ${R}_2$, $\bar{R}_1'$, $\bar{R}_2'$, $\tilde{R}_0$, $\tilde{R}_1$, $\tilde{R}_2$, $\tilde{R}_0'$, $\tilde{R}_1'$, $\tilde{R}_2'$, $\eps$, $\gamma$, $n$, $B$), where:
\begin{itemize}
\item $\mathcal{U}_0$, $\mathcal{U}_1$, $\mathcal{U}_2$, $\mathcal{V}_0$, $\mathcal{V}_1$, and $\mathcal{V}_2$ are finite auxiliary alphabets;
\item$P_{U_0U_1U_2}$ is a joint probability law over $\mathcal{U}_0\times \mathcal{U}_1 \times \mathcal{U}_2$;
\item $f$ is a function $f: \mathcal{U}_0\times \mathcal{U}_1\times \mathcal{U}_2\to \mathcal{X}$;
\item $P_{V_0V_1V_2| U_0U_1 U_2 \wt{Y}}$ is a conditional probability law over $ \mathcal{V}_0\times \mathcal{V}_1 \times \mathcal{V}_2$ given a tuple $(U_0,U_1,U_2, \wt{Y})$;
\item ${R}_0,{R}_1,{R}_2, \tilde{R}_0 , \tilde{R}_1, \tilde{R}_2$ are nonnegative communication rates;
\item $\bar{R}_1', \bar{R}_2',\tilde{R}_0', \tilde{R}_1', \tilde{R}_2'$ are nonnegative binning rates;
\item $\eps>0$ is a small number; and
\item $n$, $\gamma$,  and $B$ are  positive integers determining the scheme's blocklength.
\end{itemize}

\subsubsection{\mw{Code Construction}}
For each block $b\in[B]$ we construct a Marton code
 for a DMBC with parameters $(\mathcal{X}, \mathcal{Y}_1 \times \mathcal{V}_1, \mathcal{Y}_2\times \mathcal{V}_2, P_{(Y_1V_1)(Y_2 V_2)|X})$ using the code construction in Subsection~\ref{sec:Marton_code}. As parameters of  this construction we choose:
\begin{itemize}
\item the auxiliary alphabets $\mathcal{U}_0, \mathcal{U}_1, \mathcal{U}_2$;
\item the joint law  $P_{U_0,U_1,U_2}$ over these alphabets;
\item the function $f\colon \mathcal{U}_0\times \mathcal{U}_1\times \mathcal{U}_2\to \mathcal{X}$;
\item the nonnegative
communication rates $\bar{R}_0$, $\bar{R}_{1,p}$, $\bar{R}_{2,p}$, $\bar{R}_{1,c}$, $\bar{R}_{2,c}$ where we require that $\bar{R}_0={R}_0+\tilde{R}_0$, $\bar{R}_{1,p}+\bar{R}_{1,c}={R}_1+\tilde{R}_1$, and $\bar{R}_{2,p} +\bar{R}_{2,c}=R_2 +\tilde{R}_2$;
\item the nonnegative binning rates $\bar{R}_1', \bar{R}_2'$;
\item the small number $\eps$; and
\item the blocklength $n$.
\end{itemize}

For block $B+1$, we use a Marton   scheme  for the DMBC with parameters $(\mathcal{X}, \mathcal{Y}_1,\mathcal{Y}_2, P_{Y_1Y_2|X})$ of block length $\gamma n$ where the scheme is chosen as to achieve the rate triplet $(\gamma^{-1}\wt{R}_0, \gamma^{-1}\wt{R}_1,\gamma^{-1}\wt{R}_2)$.  To make sure that such a scheme exists, we assume throughout the proof that the single-user channels $P_{Y_1|X}$ and $P_{Y_2|X}$  both have positive capacities.\footnote{When one of the two single-user channels has capacity 0, then the broadcast problem is not very interesting. In fact, in this case both the capacity regions with noiseless feedback and with no-feedback are degenerate.
}
Under this assumption, it is readily verified that for $\gamma>1$ large enough such a scheme exists.

In what follows, let $\varphi_{(b)}, \Phi_{1,(b)}, \Phi_{2,(b)}$ denote the encoding and decoding rules corresponding to the Marton-code in block $b$, for any $b\in[B+1]$. Also, let the triplet $(U_{0,(b)}^n, U_{1,(b)}^n, U_{2,(b)}^n)$ denote the auxiliary codewords produced by the block-$b$ Marton encoder $\varphi_{(b)}$, for any $b\in[B]$, and let $X_{(b)}^n, Y_{1,(b)}^n, Y_{2,(b)}^n$ and $\wt{Y}_{(b)}^n$ denote the corresponding blocks of \mw{channel inputs/outputs/feedback outputs.}

Then, consider the
LGW-SI setup with the following parameters:
\begin{itemize}
\item the source alphabet $(\mathcal{U}_0\times  \mathcal{U}_1\times \mathcal{U}_2\times  \tilde{\mathcal{Y}})$;
\item the decoder side-information alphabets $\mathcal{Y}_1$ and $\mathcal{Y}_2$;
\item the reconstruction alphabets $\mathcal{V}_1$ and $\mathcal{V}_2$;
\item the source-side-information law $P_{(U_0U_1U_2 \tilde{Y})Y_1 Y_2}$; and
\item the reconstruction laws $P_{V_1|U_0U_1U_2\tilde{Y}}(v_1|u_0,u_1,u_2, \tilde{y})= \sum_{ v_0, v_2}P_{V_0V_1V_2|U_0U_1U_2 \tilde{Y} }(v_0, v_1, v_2| u_0,u_1,u_2, \tilde{y})$ and  $P_{V_2|U_0U_1U_2\tilde{Y}}(v_2|u_0,u_1,u_2, \tilde{y})= \sum_{ v_0, v_1}P_{V_0V_1V_2|U_0U_1U_2 \tilde{Y} }(v_0, v_1, v_2| u_0,u_1,u_2, \tilde{y})$.
\end{itemize}
For this LGW-SI setup we construct for each block $b\in[B]$ an LGW-SI code  as described in Subsection \ref{sec:GWSI_code}. Our construction has the following parameters:
\begin{itemize}
\item the auxiliary alphabet $\mathcal{V}_0$;
\item the conditional law $P_{V_0V_1V_2|U_0U_1U_2 \tilde{Y} }$;
\item the nonnegative rates $\tilde{R}_{0,0}$, $\tilde{R}_{0,1}$, $\tilde{R}_{0,2}$, $\tilde{R}_{1,0}$, $\tilde{R}_{1,1}$, $\tilde{R}_{2,0}$, $\tilde{R}_{2,2}$, $\tilde{R}_0'$, $\tilde{R}_1'$, $\tilde{R}_2'$;
\item the binning rates  $\tilde{R}_0', \tilde{R}_1', \tilde{R}_2'\geq0$, where $\tilde{R}_0'$ cannot be smaller than $\max\{\tilde{R}_{1,0}, \tilde{R}_{2,0}\}$;
\item the small number $\eps/2$; and
\item the blocklength $n$.
\end{itemize}
In what follows, let $\lambda_{(b)}$, $\Lambda_{1,(b)}$,  and $\Lambda_{2,(b)}$ denote the LGW-SI  encoding and decoding  rules corresponding to these codes.

\subsubsection{Encoding} In the first block $b=1$, the transmitter forms the product messages  $J_{0,(1)}\dfn(M_{0,(1)},1)$, $J_{1,(1)}\dfn(M_{1,(1)},1)$, and $J_{2,(1)}\dfn(M_{2,(1)}, 1)$, and applies the Marton encoding rule $\varphi_{(1)}$ to this triplet $J_{0,(1)}, J_{1,(1)}, J_{2,(1)}$.

In blocks $b\in{2,\ldots,B}$ the transmitter first applies the LGW-SI encoding function $\lambda_{(b-1)}$ to its \mw{``source sequence"} $(U_{0,(b-1)}^n, U_{1,(b-1)}^n, U_{2,(b-1)}^n, \tilde{Y}^n_{(b-1)})$ to generate the update messages $(K_{0, (b-1)}, K_{1,(b-1)}, K_{2,(b-1)})$.  (Recall that $U_{0,(b-1)}^n, U_{1,(b-1)}^n, U_{2,(b-1)}^n$  denote the Marton auxiliary codewords produced in the previous  encoding step.) The transmitter then generates the messages $J_{i,(b)}\dfn (M_{i,(b)},K_{i,(b-1)})$, and encodes them via the Marton encoding rule $\varphi_{(b)}$. It finally sends the outcome of this encoding over the channel.

In the last block $B+1$, the transmitter first applies the LGW-SI encoding function $\lambda_{(B)}$ to the sequences $(U_{0,(B)}^n, U_{1,(B)}^n, U_{2,(B)}^n, \tilde{Y}^n_{(B)})$ to generate the update messages $K_{i,(B)}$, for $i\in\{0,1,2\}$. It then forms the tuple $J_{0,(B+1)}\dfn(1,K_{0,(B)})$, $J_{1,(B+1)}\dfn(1,K_{1,(B)})$, and $J_{2,(B+1)}\dfn(1,K_{2,(B)})$ and encodes them via the Marton encoding rule $\varphi_{(B+1)}$.

\subsubsection{Decoding at Receiver~$i$}
Decoding is performed backwards, starting from the last block. Receiver~$i$ first applies the decoding rule $\Phi_{i,(B+1)}$ to the outputs $Y^{n}_{i,(B+1)}$ attempting to decode the indices
 $(J_{0,(B+1)},J_{i,(B+1)})$, and parses its guess  $(\hat{J}_{0,i,(B+1)},\hat{J}_{i,(B+1)})$ as $\hat{J}_{0,i,(B+1)} = (1,\hat{K}_{0,i,(B)})$ and $\hat{J}_{i,(B+1)} = (1,\hat{K}_{i,(B)})$.

Now, for every block $b\in\{\mw{1}, \ldots, B\}$, starting with block $B$ and going backwards, the receiver performs the following steps. It applies the LGW-SI decoder
$\Lambda_{i, (b)}$ to its guess of the update messages  $(\hat{K}_{0,i, (b)}, \hat{K}_{i,(b)})$ obtained in block $b+1$, and to its \mw{``side-information"} $Y_{i,(b)}^n$. It then applies Marton's decoding rule $\Phi_{i,(b)}$ to the pair  $(Y_{i,(b)}^n, \hat{V}_{i,(b)}^n)$, where $\hat{V}_{i,(b)}^n$ denotes the reconstruction sequence produced by the LGW-SI decoder $\Lambda_{i,(b)}$. Finally, it
parses the   guess  produced by Marton's decoding rule $(\hat{J}_{0,i,(b)},\hat{J}_{i,(b)})$ as $\hat{J}_{0,i,(b)} = (\hat{M}_{0,i,(b)},\hat{K}_{0,i,(\mw{b-1})})$ and $\hat{J}_{i,(b)} = (\hat{M}_{i,(b)},\hat{K}_{i,(\mw{b-1})})$.

Receiver~$i$'s guess of the messages $M_0$ and $M_i$ are the products $\hat{M}_{0,i} = (\hat{M}_{0,i, (1)}, \ldots, \hat{M}_{0,i, (B)})$ and $\hat{M}_i = (\hat{M}_{i, (1)}, \ldots, \hat{M}_{i, (B)})$.

\mw{\subsubsection{Analysis}
In  Appendix~\ref{eq:comerror} we show that under Constraints~\eqref{eq:inner} the error probability  of our scheme tends to 0 as $n\to \infty$.
 The existence of a deterministic coding scheme with vanishing error probability follows from standard arguments. }

\section{Examples}\label{sec:example}
\subsection{The Generalized Dueck DMBC} \label{sec:example_dueck}

In \cite{dueck80} Dueck presented the first example of a DMBC where noise-free feedback increases  capacity. In his setup, the channel input consists of three bits, $X=(X_0,X_1,X_2)$, and each of the two outputs of two bits, $Y_1=(Y_{1,1}, Y_{1,0})$ and
$Y_2=(Y_{2,0},Y_{2,2})$ where
\begin{IEEEeqnarray*}{rCl}
Y_{1,0}=Y_{2,0} &=& X_0, \\
Y_{1,1}&=& X_1 \oplus Z,\\ Y_{2,2} &=&  X_2 \oplus Z.
\end{IEEEeqnarray*}
Here, the noise $Z$ is Bern(1/2) and independent of the inputs, and $\oplus$ denotes
addition modulo 2.

Obviously, without feedback, the outputs $Y_{1,1}$ and $Y_{2,2}$ are useless. Thus, the no-feedback-capacity is given by the set of all nonnegative rate triplets $(R_0,R_1,R_2)$ satisfying
\begin{equation*}
R_0+ R_1+R_2\leq 1.
 \end{equation*}
With noiseless feedback, the capacity is increased.
\begin{theorem}[Dueck \cite{dueck80}]
The noiseless feedback capacity of Dueck's DMBC is given by the set of all nonnegative rate triplets $(R_0,R_1,R_2)$ satisfying
\begin{IEEEeqnarray}{rCl}
R_0+R_1 & \leq & 1 \qquad \textnormal{and} \qquad R_0+R_2  \leq  1.
\end{IEEEeqnarray}
\end{theorem}
\begin{proof} The converse follows from the cutset bound. The achievability by the following simple blocklength-$(n+1)$ scheme. The transmitter sends   lossless descriptions of the Message pairs $(M_0, M_1)$ and $(M_0,M_2)$ using the inputs $\{X_{1,t}\}_{t=1}^n$ and $\{X_{2,t}\}_{t=1}^n$, respectively. Additionally,  for $t=2,\ldots, (n+1)$, it repeats the previous noise symbol as $X_{0,t}=Z_{t-1}$. The transmitter knows $Z_{t-1}$ at time $t$  because it is cognizant of the input $X_{1,t-1}$ (or $X_{2,t-1}$) and, through the feedback, also of $Y_{1,t-1}=X_{1,t-1}+Z_{t-1}$ (or $Y_{2,t-1}=X_{2,t-1}+Z_{t-1}$).

Notice that each Receiver~$i\in\{1,2\}$ learns the noise sequence $\{Z_t\}_{t=1}^n$ from its sequence of outputs $\{Y_{i,0,t}\}_{t=2}^{n+1}$. Receiver~$i$ can thus compute the channel inputs $X_{i,t}= Y_{i,i,t}-Z_{t}$, for $t=1, \ldots, n$, and recover the desired pair of messages $(M_0,M_i)$ whenever the sum-rate $R_0+R_i$ is smaller than $\frac{n}{n+1}$. Letting the block-length $n$ tend to infinity, we get the desired achievability result.
\end{proof}

We generalize Dueck's setup to the DMBC depicted in Figure~\ref{fig:Gendueck}. We assume that all three
binary channels are noisy, and the first and third
channels are corrupted by different noises.
\begin{figure}
\centering
             \vspace{0.2cm}
\psfrag{M}[rc][rc]{\scriptsize $\begin{matrix} M_0\\M_1\\ M_2\end{matrix}$}
\psfrag{Mh1}[lc][lc]{\scriptsize $\begin{matrix} \hat{M}_{0,1}\\\hat{M}_1\end{matrix}$}
\psfrag{Mh2}[lc][lc]{\scriptsize $\begin{matrix} \hat{M}_{0,2}\\\hat{M}_2\end{matrix}$}
\psfrag{Channel}[cl][cl]{\scriptsize Channel}
\psfrag{Receiver2}[lc][lc]{\scriptsize Receiver 2}
\psfrag{Encoder}[lc][lc]{\scriptsize Transmitter}
\psfrag{Receiver1}[lc][lc]{\scriptsize Receiver 1}
\psfrag{X}[cc][cc]{\scriptsize$X$}
\psfrag{Y1}[cc][cc]{\scriptsize $Y_{1}$}
\psfrag{Y2}[cc][cc]{\scriptsize  $Y_{2}$}
\psfrag{W}[lc][lc]{\scriptsize $W(\cdot,\cdot|\cdot)$}
\psfrag{B1}[cc][cc]{\scriptsize $X_{1}$}
\psfrag{B0}[cc][cc]{\scriptsize $X_{0}$}
\psfrag{B2}[cc][cc]{\scriptsize $X_{2}$}
\psfrag{Z_0}[cc][cc]{\scriptsize$Z_{0}$}
\psfrag{Z_1}[cc][cc]{\scriptsize$Z_{1}$}
\psfrag{900}[cc][cc]{\scriptsize $Y_{01}$}
\psfrag{909}[cc][cc]{\scriptsize $Y_{11}$}
\psfrag{0}[cc][cc]{\scriptsize $Y_{02}$}
\psfrag{990}[cc][cc]{\scriptsize $Y_{22}$}
\psfrag{Z_2}[cc][cc]{\scriptsize$Z_{2}$}
{\includegraphics[width=0.9 \columnwidth]{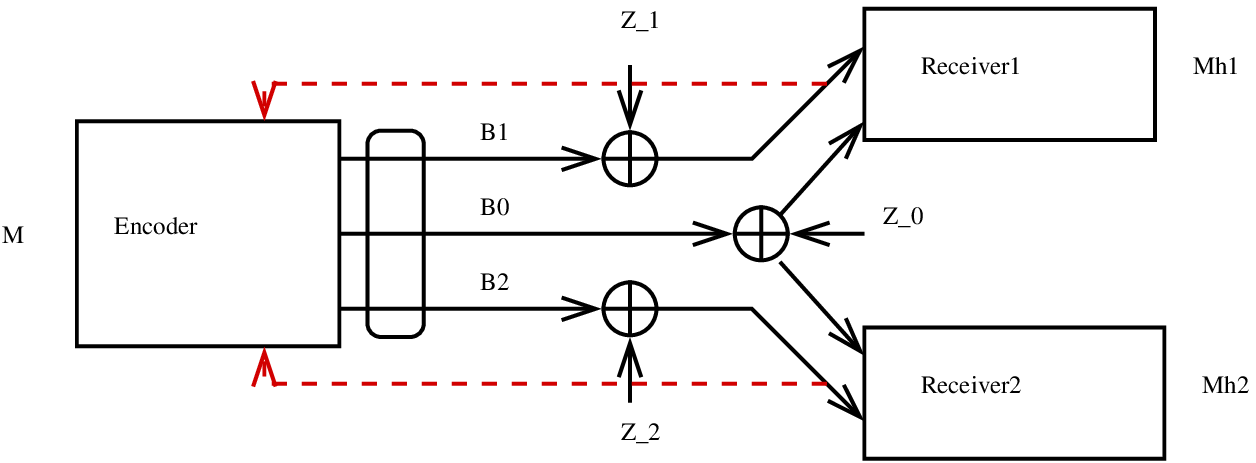}}
\caption{Generalization of Dueck's DMBC with feedback example.}
\label{fig:Gendueck}
\end{figure}
 Thus, as before, the channel input consists of three bits,
$X=(X_1,X_0,X_2)$, and each output of two bits,
$Y_1=(Y_{1,1}, Y_{1,0})$ and $Y_{2}=(Y_{2,0},Y_{2,2})$. However, now,
\begin{IEEEeqnarray*}{rCl}
Y_{1,0}=Y_{2,0}  &=  &X_0  \oplus Z_0,\\  Y_{1,1}&  = & X_1 \oplus Z_1, \\ Y_{2,2} &=&
  X_2  \oplus Z_2,
\end{IEEEeqnarray*}
where $Z_0,Z_1,Z_2$ are binary random variables of a given joint
law $P_{Z_0Z_1Z_2}$.

\begin{proposition}\label{obs:nofb}
The no-feedback capacity region of the
  generalized Dueck DMBC is the set of all nonnegative rate triplets
  $(R_o,R_1,R_2)$ that satisfy
  \begin{subequations}\label{eq:nofb}
\begin{IEEEeqnarray}{rCl}
R_0+R_1 & \leq & 2- H(Z_0,Z_1),\\
R_0+R_2 & \leq & 2- H(Z_0, Z_2),\\
R_0+R_1+R_2 & \leq & \mw{3- H(Z_0,Z_1)-H(Z_0, Z_2)}. \IEEEeqnarraynumspace
\end{IEEEeqnarray}
\end{subequations}
\end{proposition}
\begin{proof}
The no-feedback capacity of a DMBC depends on the channel law $P_{Y_1Y_2|X}(y_1,y_2|x)$ only through the marginal laws $P_{Y_1|X}(y_1|x)$ and $P_{Y_2|X}(y_2|x)$ (see e.g., \cite{sato78}). We therefore assume in the following that $Z_2\markov Z_0 \markov Z_1$. The converse follows then simply by applying the cutset bound to this modified setup. The achievability follows from Marton's achievable region. More precisely, if in the region in \eqref{eq:marton} we choose $U_0, U_1, U_2$ to be i.i.d. Bern$(1/2)$ and $X_i=U_i$, for $i\in\{0,1,2\}$, then it evaluates to our region in \eqref{eq:nofb}. (Notice that since we choose $U_0, U_1, U_2$ independent,  constraint~\eqref{eq:marton2r} on $2R_0+R_1+R_2$ is not active.)
\end{proof}

\mw{Our scheme in Section~\ref{sec:total_scheme} allows us to obtain the capacity region for the Generalized Dueck DMBC with noiseless feedback when}
\begin{IEEEeqnarray}{rCl}
H(Z_{0},Z_{1}) &\leq &1 \qquad  \tn{and} \qquad
H(Z_{0}, Z_{2}) \leq 1.\label{eq:condition}
\end{IEEEeqnarray}

\begin{theorem}\label{th:1}
Under condition~\eqref{eq:condition} and when no common message is sent, i.e.,  $R_0=0$, the noiseless-feedback capacity of the Generalized Dueck DMBC
  is the set of all nonnegative rate pairs $(R_1,R_2)$
  satisfying
  \begin{subequations}\label{eq:cap1}
\begin{IEEEeqnarray}{rCl}
R_1 & \leq & 2- H(Z_0,Z_1),  \\
R_2 & \leq & 2- H(Z_0,Z_2),  \\
R_1+R_2 &\leq & 3 - H(Z_0,Z_1,Z_2).
\end{IEEEeqnarray}
\end{subequations}
\end{theorem}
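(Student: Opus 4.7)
The plan is to prove the converse and the achievability separately.

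For the converse, I would use Fano's inequality together with the noiseless-feedback assumption that $X_t$ is a deterministic function of $(M_0, M_1, M_2, Y_1^{t-1}, Y_2^{t-1})$. Starting from $nR_1 \leq I(M_1; Y_1^n) + n\epsilon_n \leq I(M_1; Y_1^n \mid M_2) + n\epsilon_n$ (by independence of $M_1, M_2$), I would expand $I(M_1; Y_1^n \mid M_2) = H(Y_1^n \mid M_2) - H(Y_1^n \mid M_1, M_2)$. The first term is at most $2n$ since $Y_{1,t}$ takes values in $\{0,1\}^2$. For the second, conditioning additionally on $Y_2^{t-1}$ only decreases the entropy and makes $X_t$ determined by the conditioning variables; memorylessness then gives each term value $H(Y_{1,t}\mid X_t) = H(Z_0, Z_1)$, yielding $R_1 \leq 2 - H(Z_0, Z_1)$, and symmetrically $R_2 \leq 2 - H(Z_0, Z_2)$. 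For the sum-rate, $n(R_1+R_2) \leq H(Y_1^n, Y_2^n) - H(Y_1^n, Y_2^n \mid M_1, M_2) + n\epsilon_n$, where the first term is at most $3n$ because the shared coordinate $Y_{1,0,t}=Y_{2,0,t}=X_{0,t}\oplus Z_{0,t}$ means the joint output carries only three effective bits, and the second equals $nH(Z_0, Z_1, Z_2)$ by the same determinism and memorylessness.

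For achievability, I plan to use the main scheme of Section~\ref{sec:total_scheme} and evaluate $\m{R}_{\tn{inner}}^*$ at a handful of carefully chosen auxiliary distributions, relying on the convex-hull definition of that region. Set $U_0=X_0$, $U_1=X_1$, $U_2=X_2$ as independent uniform bits and $f$ the identity. For the corner $(2 - H(Z_0, Z_1), 0)$, pick $V_0 = V_2 = \tn{const}$ and $V_1 = (Z_0, Z_1)$: the update costs receiver~1 rate $H(Z_0, Z_1)$ and lets it fully recover $(X_0, X_1)$; the corner $(0, 2 - H(Z_0, Z_2))$ is handled symmetrically. For the sum-rate corner $(2 - H(Z_0, Z_1),\, 1 - H(Z_2 \mid Z_0, Z_1))$, take $V_0 = V_1 = (Z_0, Z_1)$ as a common auxiliary and $V_2 = (Z_0, Z_2)$: receiver~1 reconstructs its target from $V_0$ alone at common cost $H(Z_0, Z_1)$, while receiver~2 additionally decodes $V_2$ at private rate $H(Z_2\mid Z_0, Z_1)$, so the total update cost $H(Z_0, Z_1, Z_2)$ is charged once in the sum bound rather than twice. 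The mirror-image sum-rate corner is handled symmetrically. A routine substitution into \eqref{eq:inner2} then verifies each constraint, exploiting that $(Y_1, Y_2)$ is independent of $(Z_0, Z_1, Z_2)$ because $(X_0, X_1, X_2)$ is uniform and independent of the noise.

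The main obstacle I anticipate is showing that the sum-rate corner is achievable simultaneously with the individual-rate bound $R_1 \leq 2 - H(Z_0, Z_1)$. The tension is structural: the common-update cost $\max_i I(V_0; X, \wt{Y}\mid Y_i)$ is subtracted from \emph{every} pairwise bound in \eqref{eq:inner2}, so enlarging $V_0$ to absorb the joint noise $(Z_0, Z_1, Z_2)$ (which would minimize the sum-rate loss) degrades the individual rates below $2 - H(Z_0, Z_i)$. The resolution is to match $V_0$ exactly to what one receiver needs: with $V_0 = V_1 = (Z_0, Z_1)$ the private update to receiver~1 vanishes, and the common cost $H(Z_0, Z_1)$ is precisely the slack permitted by $R_1 \leq 2 - H(Z_0, Z_1)$. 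The hypothesis $H(Z_0, Z_1) \leq 1$ enters exactly at this point, since it is equivalent to $1 - H(Z_2\mid Z_0, Z_1) \leq 2 - H(Z_0, Z_1, Z_2)$, which is the auxiliary pairwise bound on $R_2$ produced by this choice; the hypothesis $H(Z_0, Z_2) \leq 1$ plays the analogous role for the mirror-image corner.
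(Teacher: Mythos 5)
Your proof is correct and takes essentially the same approach as the paper: a cut-set converse (which you spell out via Fano and the feedback Markov chain) together with achievability from Theorem~2 evaluated at the two symmetric choices $V_0=(Z_0,Z_1)$ or $V_0=(Z_0,Z_2)$ with $(U_0,U_1,U_2)$ i.i.d.\ uniform and $X_k=U_k$. The only cosmetic differences are that you work with $\m{R}_{\tn{inner}}^*$ in place of $\m{R}_{\tn{inner}}$ and take $V_1=(Z_0,Z_1)$, $V_2=(Z_0,Z_2)$ rather than the paper's $V_1=(X_0,X_1)$, $V_2=(X_0,X_2)$ (these are bijectively equivalent given the respective $Y_i$), and you add a redundant separate analysis of the axis corner, which is already dominated by the sum-rate corner from the same auxiliary choice.
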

\begin{proof}
  The converse follows from the cutset bound.  The direct part
  follows from Theorem~\ref{thrm:innerbound2} by taking the convex
  hull of the achievable regions that result when \eqref{eq:inner} is evaluated for the following two choices:
  $(U_0,U_1,U_2)$ i.i.d. Bern(1/2); $X_i=U_i$ for $i\in\{0,1,2\}$; $V_i=(X_0,X_i)$ for $i\in\{1,2\}$; and either
  $V_0=(Z_0,Z_1)$ or $V_0=(Z_0,Z_2)$. (Notice that since $U_0, U_1, U_2$ are independent, Constraint \eqref{eq:inner4} is subsumed by Constraints \eqref{eq:inner1} and \eqref{eq:inner2}.)
\end{proof}

In view of Proposition~\ref{obs:nofb}, we have the following corollary to Theorem~\ref{th:1}.
\begin{corollary}\label{cor:gener-dueck-dmbc}
If the triplet $(Z_0,Z_1, Z_2)$ satisfies \eqref{eq:condition}  and  does not form the Markov chain $Z_{1}- Z_{0} - Z_{2}$, then noiseless feedback strictly increases
the capacity of our Generalized Dueck DMBC.
\end{corollary}

Let's briefly consider the case of noisy feedback $\wt{Y}=(Y_{1,1} \oplus W_1, Y_{1,0}\oplus W_0, Y_{2,2} \oplus W_2)$ where $(W_0,W_1, W_2)$ are arbitrary distributed binary random variables, with marginals $W_i\sim {\rm Bern}(q_i)$, for $q_0, q_1, q_2\in(0,1)$. Evaluating Theorem~\ref{thrm:innerbound2} for this noisy-feedback setup is cumbersome and left out. But from Corollary~\ref{cor:gener-dueck-dmbc} and the continuity considerations mentioned in Remark~\ref{rem:contin}, we can conclude the following.
\begin{remark}
If the noise triplet $(Z_0, Z_1, Z_2)$ satisfies \eqref{eq:condition}  and does not form the Markov chain $Z_{1}-Z_{0}-Z_{2}$,   then for any sufficiently small value of $\max\{q_0,q_1,q_2\}$, the noisy feedback introduced above enlarges the capacity region of the Generalized Dueck DMBC.
\end{remark}

\subsection{The Noisy Blackwell DMBC}\label{sec:example_blackwell}
Consider the noisy version of the Blackwell DMBC \cite{blackwell} in Figure~\ref{fig:NoisyBlackwell}.
\begin{figure}
\centering
\psfrag{M}[rc][rc]{\scriptsize $\begin{matrix}M_0\\M_1\\ M_2\end{matrix}$}
\psfrag{M1}[lc][lc]{\scriptsize $\begin{matrix} \hat{M}_{0,1}\\\hat{M}_1\end{matrix}$}
\psfrag{M2}[lc][lc]{\scriptsize $\begin{matrix}\hat{M}_{0,2}\\\hat{M}_2\end{matrix}$}
\psfrag{Channel}[cl][cl]{\scriptsize Channel}
\psfrag{Receiver2}[lc][lc]{\scriptsize Receiver2}
\psfrag{Transmitter}[lc][lc]{\scriptsize Transmitter}
\psfrag{Receiver1}[lc][lc]{\scriptsize Receiver1}
\psfrag{X}[cc][cc]{\scriptsize$X$}
\psfrag{Y1}[cc][cc]{\scriptsize $Y_{1}$}
\psfrag{Y2}[cc][cc]{\scriptsize  $Y_{2}$}
\psfrag{1}[cc][cc]{\tiny $1$}
\psfrag{0}[cc][cc]{\tiny $0$}
\psfrag{2}[cc][cc]{\tiny $2$}
\psfrag{Z}[cc][cc]{\scriptsize$Z$}
\psfrag{W}[cc][cc]{\scriptsize$\tilde{Z}$}
\psfrag{Blackwell Channel}[cc][cc]{\scriptsize Blackwell Channel}
{\includegraphics[width=0.95 \columnwidth]{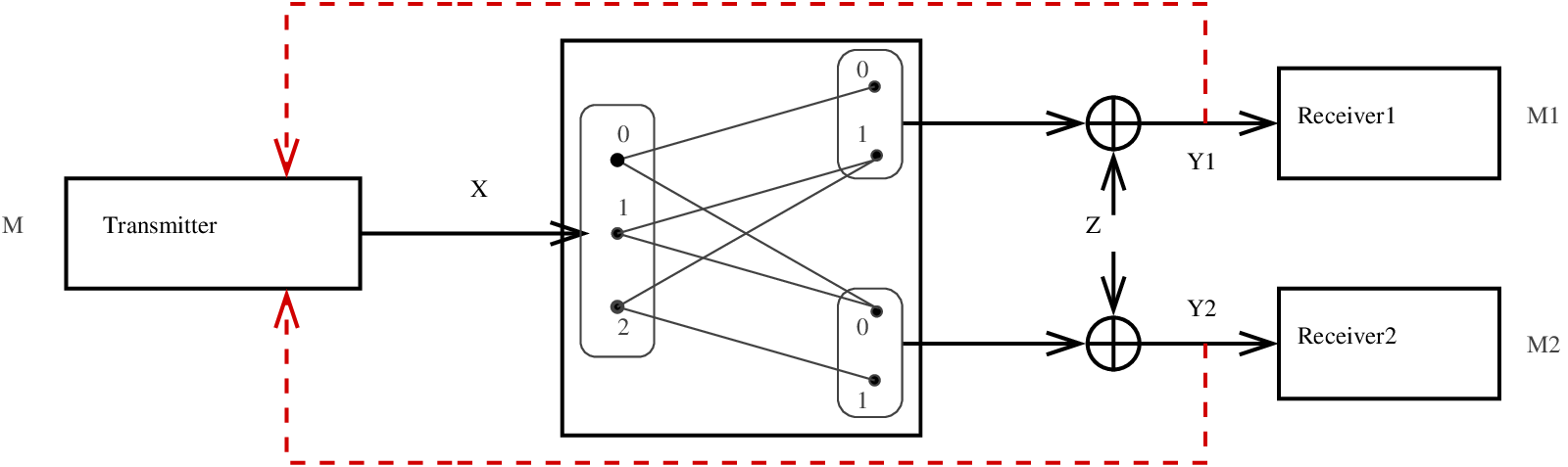}}
\caption{A noisy version of Blackwell's DMBC with noiseless feedback.}
\label{fig:NoisyBlackwell}
\end{figure}
The input alphabet is ternary $\m{X} = \{0,1,2\}$ and both output alphabets are binary $\m{Y}_1=\m{Y}_2 = \{0,1\}$. Let $Z\sim\textnormal{Bern}(p)$, with $p<\frac{1}{2}$,  be independent of $X$. The channel law $P_{Y_1Y_2|X}$ is described as follows.
\begin{align}\label{eq:outputs}
Y_1 = \left\{\begin{array}{ll}Z & X=0 \\ 1-Z & X=1,2\end{array}\right.\quad Y_2 = \left\{\begin{array}{ll}Z & X=0,1 \\ 1-Z & X=2.\end{array}\right.
\end{align}
When $p=0$,  the described DMBC specializes to Blackwell's DMBC. For this case the capacity region with and without feedback is given by Marton's region.
We consider noiseless feedback and present an achievable region for this setup based on the region $\set{R}_{\textnormal{Inner}}$  in Theorem~\ref{thrm:innerbound2}.

Let $U_0, U_1, U_2$ be binary random variables, where $U_0\sim\tn{Bern}(\frac{1}{2})$, and where given $U_0=0$  the pair $(U_1,U_2)$ has  joint conditional law
\begin{table}[h!]
\centering
\normalsize
$P_{U_1U_2|U_0=0}$: \qquad \begin{tabular}{c||c|c}
& $U_2=0$& $U_2=1$ \\
\hline\hline
$U_1=0$ & $\phantom{\tilde{\beta}}\alpha\phantom{\tilde{\beta}}$ & 0 \\
$U_1=1$& $1-\alpha-\beta$& $\beta$
\end{tabular}
\end{table}
\newline
 for some nonnegative $\alpha,\beta$ satisfying $\alpha+\beta\leq 1$,  and given $U_0=1$ it has joint conditional law
 \begin{table}[h!]
 \centering
 \normalsize
 $P_{U_1U_2|U_0=1}$: \qquad
 \begin{tabular}{c||c|c}
& $U_2=0$& $U_2=1$ \\
\hline
\hline
$U_1=0$ & $\phantom{\tilde{\beta}}\beta\phantom{\tilde{\beta}}$ & 0 \\
$U_1=1$& $1-\alpha-\beta$& $\alpha$
\end{tabular}
\end{table}
\newline Set $X\dfn U_1+U_2$ (real addition), and let $V_1\dfn U_1$,  $V_2\dfn U_2$, and
$V_0 \dfn V_1\oplus {Y}_1 = Z$.
Evaluating the region in \eqref{eq:inner} for this choice of random variables, we obtain the following theorem.
\begin{theorem}\label{th:blackwell}
All nonnegative rate triplets $(R_0,R_1,R_2)$  satisfying
\begin{align*}
&R_0+R_1 \leq h_b\left(\left(\frac{\alpha+\beta}{2}\right)\star p\right) - h_b(p)
\\
&R_0+R_2 \leq h_b\left(\left(\frac{\alpha+\beta}{2}\right)\star p\right) - h_b(p)
\\
&R_0+R_1+R_2 \leq h_b\left(\left(\frac{\alpha+\beta}{2}\right)\star p\right) + \frac{1-\beta}{2}h_b\left(\frac{\alpha}{1-\beta}\right)  \\ &\hspace{50pt}+ \frac{1-\alpha}{2}h_b\left(\frac{\beta}{1-\alpha}\right)- h_b(p)\\
&2R_0+R_1+R_2 \leq 2h_b\left(\left(\frac{\alpha+\beta}{2}\right)\star p\right) -  2  h_b(p)\nonumber \\ & \hspace{50pt}+ H\left([\alpha, \beta, 1-\alpha-\beta]\right) - h_b(\alpha)-h_b(\beta)
\end{align*}
are achievable over the Noisy Blackwell DMBC. Here,  $H\big([p_1, \ldots, p_m]\big)\dfn\sum_{i=1}^m p_i\log \frac{1}{p_i}$; $h_b(p)\dfn H([p, 1-p])$;  and $\gamma \star p\dfn (1-\gamma) p + \gamma(1-p)$.
\end{theorem}
Let us consider the sum-rates $R_1+R_2$ guaranteed by the region above. To that end, we set $R_0=0$ and note it is sufficient to consider only the last two inequalities.
We get the following corollary to Theorem~\ref{th:blackwell}.
\begin{corollary}
With noiseless feedback, our scheme achieves all nonnegative rate pairs $(R_1,R_2)$ satisfying Inequality~\eqref{eq:C_bw_f} shown on top of the next page.
\begin{figure*}
\begin{IEEEeqnarray}{rCl}\label{eq:C_bw_f}
R_1+R_2\geq \sup_{\substack{\alpha, \beta\geq 0 \colon \\ \alpha+\beta \leq 1}} \;\; \min\Bigg\{& &h_b\left(\left(\frac{\alpha+\beta}{2}\right)\star p\right)+ \frac{1-\beta}{2}h_b\left(\frac{\alpha}{1-\beta}\right) + \frac{1-\alpha}{2}h_b\left(\frac{\beta}{1-\alpha}\right)- h_b(p),\nonumber \\ &&\quad 2h_b\left(\left(\frac{\alpha+\beta}{2}\right)\star p\right) + H\left([\alpha, \beta, 1-\alpha-\beta]\right) - h_b(\alpha)-h_b(\beta)-  2  h_b(p) \Bigg\}
\end{IEEEeqnarray}
\hrule
\end{figure*}
\end{corollary}
For comparison, let us now upper bound the sum-rates $R_1+R_2$ that are achievable without feedback. Since the no-feedback capacity of a DMBC depends only on the marginals $P_{Y_1|X},P_{Y_2|X}$ \cite{sato78}, the capacity region for the Noisy Blackwell channel remains the same if in the definitions of $Y_1$ and $Y_2$ (see \eqref{eq:outputs}) we replace $Z$ by  independent $\textnormal{Bern}(p)$ random variables $Z_1$ and $Z_2$, respectively. Computing the cut-set upper bound for this latter setting, we obtain that all rate pairs $(R_1,R_2)$ that are achievable without feedback must satisfy
\begin{IEEEeqnarray}{rCl}
\lefteqn{R_1+R_2}\nonumber \\ &\leq &\!\!\sup_{\alpha\in(0,\frac{1}{2})} \left\{H\big([\alpha(p-\bar{p})^2+p\bar{p},\bar{p}^2+2\alpha \bar{p}(p-\bar{p}),\right.\nonumber\\ &&\left. \hspace{1.7cm}p^2+2\alpha {p}(\bar{p}-p), \alpha(p-\bar{p})^2+p\bar{p}]\big)\right\} - 2h_b(p),\nonumber \\\label{eq:bb2}
\end{IEEEeqnarray}
where $\bar{p}\dfn 1-p$.  Figure~\ref{fig1} depicts the bounds
\eqref{eq:C_bw_f} and \eqref{eq:bb2} together with
a cut-set upper bound on the sum-rates $R_1+R_2$ that are achievable with noiseless feedback. By this Figure~\ref{fig1}:
\begin{corollary}
Noiseless feedback enlarges the capacity  region of the Noisy Blackwell-DMBC.
\end{corollary}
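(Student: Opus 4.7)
The plan is a direct comparison of the two bounds derived just above the corollary: the lower bound \eqref{eq:C_bw_f} on $C_{\tn{NoiselessFB},\Sigma}$ produced by our feedback scheme via Theorem~\ref{thrm:innerbound2}, and the cut-set upper bound \eqref{eq:bb2} on $C_{\tn{NoFB},\Sigma}$. It suffices to exhibit a single $p\in(0,1/2)$ and a single admissible parameter pair $(\alpha,\beta)$ for which the feedback lower bound strictly exceeds the no-feedback upper bound, because then $\m{C}_{\tn{NoiselessFB}}$ contains a sum-rate point that lies outside $\m{C}_{\tn{NoFB}}$, which forces $\m{C}_{\tn{NoFB}} \subsetneq \m{C}_{\tn{NoiselessFB}}$.

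First, I would fix a value of $p$ in the range where Fig.~\ref{fig1} shows a visible gap. Second, I would try the symmetric substitution $\alpha=\beta$ in \eqref{eq:C_bw_f}, which collapses the achievable feedback sum-rate to
\[
h_b(\alpha\star p) + (1-\alpha)\,h_b\!\left(\tfrac{\alpha}{1-\alpha}\right) - h_b(p),
\]
subject to the simplified admissibility constraint $h_b(\alpha\star p) \geq h_b(\alpha) + h_b(p)$. Third, I would solve (numerically, if need be) the one-dimensional optimization in \eqref{eq:bb2} over $\alpha\in(0,1/2)$ at the same $p$, and verify that the feedback value strictly exceeds the resulting no-feedback value.

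The main obstacle is algebraic rather than conceptual: neither bound admits a clean closed-form optimum, so the comparison is ultimately a numerical one, exactly as depicted in Fig.~\ref{fig1}. The verification is, however, robust: both sides are continuous in $(p,\alpha,\beta)$ over compact parameter sets, so a strict gap established at one choice of $(p,\alpha,\beta)$ automatically persists on an open neighborhood, yielding strict inclusion $\m{C}_{\tn{NoFB}}\subsetneq\m{C}_{\tn{NoiselessFB}}$ without any sensitivity to the precise numerical values. An analytically cleaner alternative would be to pursue a limiting regime---for instance $p$ close to $1/2$, where the $-2h_b(p)$ in \eqref{eq:bb2} forces the no-feedback bound to collapse faster than the $-h_b(p)$ term in \eqref{eq:C_bw_f} forces the feedback bound to collapse---but a careful first-order expansion is required to make this into a rigorous argument, and the numerical route via Fig.~\ref{fig1} is more direct.
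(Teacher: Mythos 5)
Your proposal is correct and takes essentially the same approach as the paper: the paper establishes the corollary precisely by juxtaposing the lower bound \eqref{eq:C_bw_f} against the cut-set upper bound \eqref{eq:bb2} and observing (numerically, via Fig.~\ref{fig1}) a strict gap for a range of $p$. The additional remarks you offer---the $\alpha=\beta$ simplification and the continuity-based robustness of the gap---are sound but not needed beyond what the paper already does.
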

\begin{remark}
Let $\wt{Y}=(Y_1\oplus W_1,Y_2\oplus W_2)$, where $(W_1,W_2)$ are jointly distributed binary random variables with marginals $W_i\sim\tn{Bern}(q_i)$, mutually independent of $(X,Y_1,Y_2)$. By the continuity argument in Remark~\ref{rem:contin}, for any $p\in(0,1)$ and small enough $\max\{q_1,q_2\}$, noisy feedback strictly enlarges the capacity region of the Noisy Blackwell-DMBC with noisy feedback.
\end{remark}

\begin{figure}
\vspace{-3cm}

\psfrag{R}{\footnotesize $R_{1}+R_2$}
\psfrag{p}{\footnotesize $p$}
\psfrag{0}[r][r]{\tiny 0}
\psfrag{0a}[t][t]{\tiny 0}
\psfrag{0.15}[t][t]{}
\psfrag{0.25}[t][t]{}
\psfrag{0.35}[t][t]{}
\psfrag{0.45}[t][t]{}
\psfrag{0.05}[t][t]{}
\psfrag{0.1}[t][t]{\tiny 0.1}
\psfrag{0.2a}[t][t]{\tiny 0.2}
\psfrag{0.2}[r][r]{\tiny 0.2}
\psfrag{0.3}[t][t]{\tiny 0.3}
\psfrag{0.4}[t][t]{\tiny 0.4}
\psfrag{0.5}[t][t]{\tiny 0.5}
\psfrag{0.6}[r][r]{\tiny 0.6}
\psfrag{0.8}{}
\psfrag{1}[r][r]{\tiny 1}
\psfrag{1.4}[r][r]{\tiny 1.4}
\psfrag{1.6}{}
\psfrag{1.2}{}
\psfrag{0.46}{}
\psfrag{0}{\tiny 0}
\psfrag{0}{\tiny 0}
    \epsfig{file=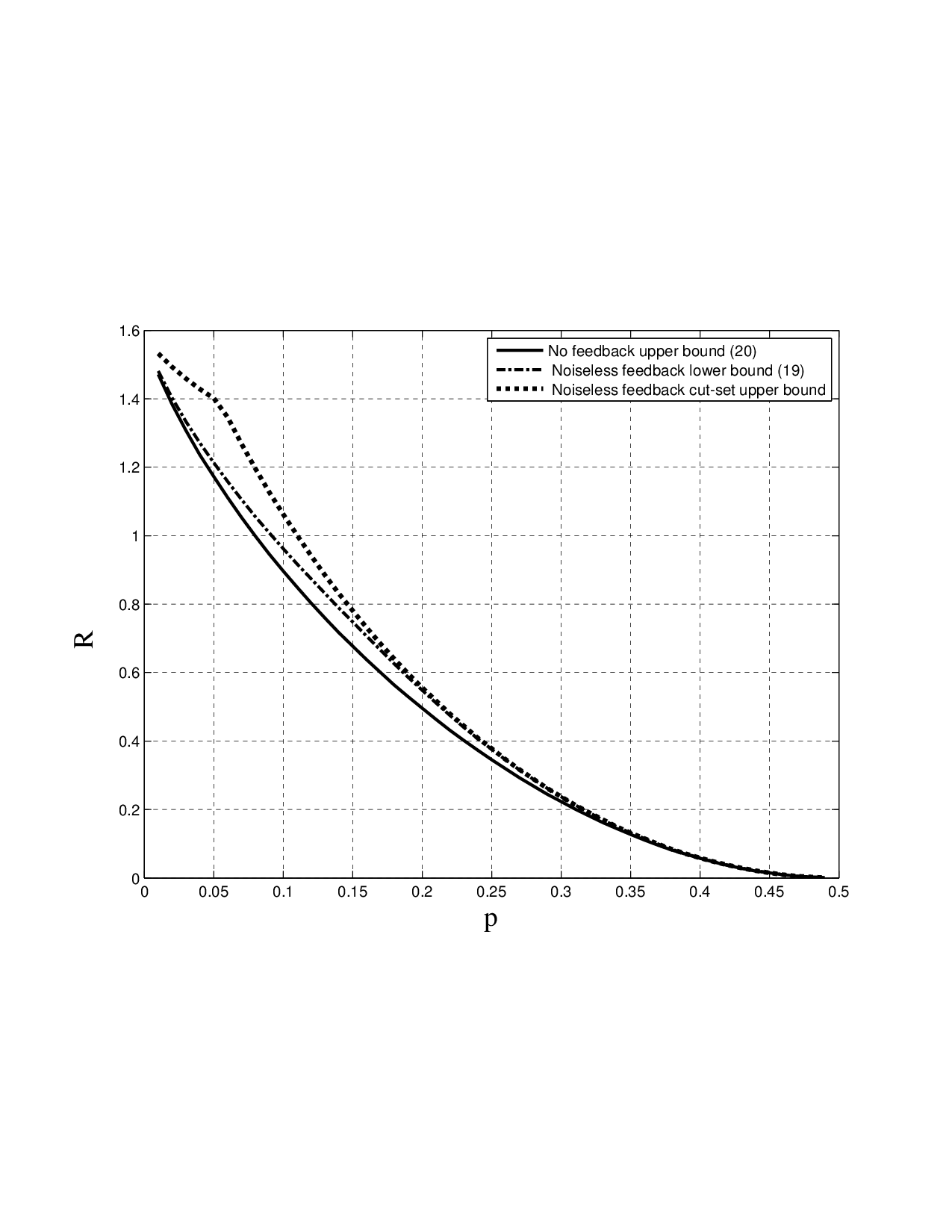, scale = 0.42}
\vspace{-3.3cm}

    \caption{Bounds on the maximum sum-rates $R_1+R_2$ that are achievable over the Noisy Blackwell DMBC with no feedback and noiseless feedback.}
    \label{fig1}

\end{figure}
\section*{Acknowledgement}
The authors thank the \mw{Associate Editor, the anonymous reviewers, and Prof.~Daniela~Tuninetti for the careful reading of the manuscript and their valuable comments.} In particular for pointing them to \cite{tiandiggavi08}.

\appendices

\section{Analysis of Marton's Scheme}
\label{sec:Marton_analysis}

We analyze the average probability of error of \mw{Marton's} scheme averaged over the random messages, codebooks, and channel realizations, see also \cite{marton79,GelfandPinsker80,El-Gamal--Kim2009}. Recall that an error occurs whenever
\begin{equation*}
{(\hat{M}_{0,1}, \hat{M}_1)\neq (M_0,M_1) \textnormal{ or } (\hat{M}_{0,2}, \hat{M}_2) \neq (M_0,M_2)}.
\end{equation*}
By the symmetry of the code construction
\begin{IEEEeqnarray*}{rCl}
\Pr{[ \textnormal{error}]}=\Pr{[\textnormal{error}|M_c=M_{1,p}=M_{2,p}=1]}.
\end{IEEEeqnarray*}
To shorten notation we denote the event that $M_c=M_{1,p}=M_{2,p}=1$ by $\m{M}=1$.
\mw{Also, let}
\begin{itemize}
\item $\mathcal{E}_0$ be the event that there is no pair $(\ell_1, \ell_2)\in[2^{nR_1'}]\times [2^{nR_2'}]$ satisfying
\begin{equation*}
(U_0^n(1), U_1^{n}(1,1,\ell_1) , U_2^{n}(1,1,\ell_2)) \in \mathcal{T}_{\eps/32}^{(n)}(P_{U_0U_1U_2}).
\end{equation*}
\item $\mathcal{E}_{0i}$ be the event that
\begin{equation*}
(U_0^n(1), U_i^{n}(1,1,L_i^*),Y_i^n ) \notin \mathcal{T}_{\eps}^{(n)}(P_{U_0U_iY_i}),
\end{equation*}
where $L_1^*$ and $L_2^*$ denote the pair of indices chosen during the encoding step.
\item $\mathcal{E}_{1i}$ be the event  that there is a $\hat{m}_c\neq 1$ such that
\begin{equation*}
(U_0^n(\hat{m}_c), U_i^{n}(\hat{m}_c, 1 , L_i^*),Y_i^n ) \in \mathcal{T}_{\eps}^{(n)}(P_{U_0U_iY_i}).
\end{equation*}
\item $\mathcal{E}_{2i}$ be the event that  there is a pair $\hat{m}_i\neq 1$ and $\hat{\ell}_i$ such that
\begin{equation*}
(U_0^n(1), U_i^{n}(1,\hat{m}_i, \hat{\ell}_i),Y_i^n ) \in \mathcal{T}_{\eps}^{(n)}(P_{U_0U_iY_i}).
\end{equation*}
\item $\mathcal{E}_{3i}$ be the event that  there is a tuple $\hat{m}_c\neq 1$,  $\hat{m}_i\neq 1$, and $\hat{\ell}_i$ such that
\begin{equation*}
(U_0^n(\hat{m}_c), U_i^{n}(\hat{m}_c,\hat{m}_i,\hat{\ell}_i),Y_i^n ) \in \mathcal{T}_{\eps}^{(n)}(P_{U_0U_iY_i}).
\end{equation*}
\end{itemize}

When the event $(\m{E}_{0}^{c} \cap \m{E}_{0,i}^{c}  \cap \m{E}_{1,i}^{c}  \cap \m{E}_{2,i}^{c}  \cap \m{E}_{3,i}^{c})$ occurs, then Receiver~$i\in\{1,2\}$ correctly decodes its desired messages $M_0$ and $M_i$. Therefore,  \begin{IEEEeqnarray*}{rCl}
\lefteqn{
\Prv{\textnormal{error}|\m{M}=1} }\\ & \leq  &
\textnormal{Pr} \Bigg( \mathcal{E}_0 \cup \bigg( \bigcup_{i=1}^2 \bigcup_{j=1}^{4} \m{E}_{j,i} \bigg)
 \bigg|\m{M}=1\Bigg)  \nonumber \\  & \leq  &
\Prv{ \mathcal{E}_0 |\m{M}=1}  \nonumber \\ &&+ \sum_{ i=1}^{2} \big( \Prv{ \mathcal{E}_{0i}|{\mathcal{E}}^{c}_0,\m{M}=1} + \Prv{ \mathcal{E}_{1i}|{\mathcal{E}}_{0i}^{c},\m{M}=1} \nonumber\\
& & \quad \qquad + \Prv{ \mathcal{E}_{2i}|{\mathcal{E}}_{0i}^{c},\m{M}=1}   + \Prv{ \mathcal{E}_{3i}|{\mathcal{E}}_{0i}^{c},\m{M}=1} \big).
\end{IEEEeqnarray*}
We consider each of the terms separately. A nonnegative function $\delta(\eps)$ satisfying $\delta(\eps)\to 0$ as $\eps \to 0$ can be chosen such that the following statements hold.
\begin{itemize}
\item By the code construction and by a conditional version of the covering lemma (Lemma~\ref{lem:covering}),
\begin{equation} \label{eq:l1}
\lim_{n\to 0} \Prv{ \mathcal{E}_0|\m{M}=1 } =0,
\end{equation}
whenever
\begin{equation}\label{eq:M1}
R_1'+R_2'>I(U_1;U_2|U_0)+\delta(\eps).
\end{equation}
\item Since the channel outputs $Y_i^n$ is a $P_{Y_i|X}$-i.i.d. sequence given $X^n$ and by the conditional typicality lemma (Lemma~\ref{lem:cond_typ}),
\begin{equation}\label{eq:channel_cond}
\lim_{n\to 0}  \Prv{ \mathcal{E}_{0i}|{\mathcal{E}}_0^{c},\m{M}=1} =0.
\end{equation}
\item By the code construction and by the packing lemma (Lemma~\ref{lem:packing}),
\begin{equation}\label{eq:l2}
\lim_{n\to 0}  \Prv{ \mathcal{E}_{1i}|{\mathcal{E}}_{0i}^{c},\m{M}=1}  =0,
\end{equation}
whenever
\begin{equation}\label{eq:M2}
R_{0}+R_{1,c}+R_{2,c} < I(U_0, U_i;Y_i) -\delta(\eps).
\end{equation}
\item By the code construction and by the packing lemma:
\begin{equation}\label{eq:l3}
\lim_{n\to 0} \Prv{ \mathcal{E}_{2i}|{\mathcal{E}}_{0i}^{c},\m{M}=1}=0,
\end{equation}
whenever
\begin{equation}\label{eq:M3}
R_{1,p}+R_i'< I(U_i;Y_i|U_0) -\delta(\eps).
\end{equation}

\item Again,  by the code construction and by the packing lemma:
\begin{equation}\label{eq:l4}
\lim_{n\to 0}  \Prv{ \mathcal{E}_{3i}|{\mathcal{E}}_{0i}^{c},\m{M}=1}  =0,
\end{equation}
whenever
\begin{equation} \label{eq:M4}
R_{0}+R_{1,c}+R_{2,c}+R_{i,p}+R_i' < I(U_0, U_i;Y_i) -\delta(\eps).
\end{equation}
 \end{itemize}

 Thus, we conclude that if for $i\in\{1,2\}$
 \begin{subequations} \label{eq:M_befor}
 \begin{IEEEeqnarray}{rCl}
 R_1'+R_2' &>& I(U_1;U_2|U_0)+\delta(\eps)  \\
 R_{i,p}+R_i'& <&  I(U_i;Y_i|U_0) -\delta(\eps)  \\
 R_{0}+R_{1,c}+R_{2,c}+R_{i,p}+R_i' & <&  I(U_0, U_i;Y_i) -\delta(\eps),\IEEEeqnarraynumspace
 \end{IEEEeqnarray}
 \end{subequations}
then the average (over random codebooks, messages, and channel realizations) probability of error of the described scheme  tends to 0 as the blocklength $n$ tends to infinity.
The existence of a deterministic   scheme with average (over messages and channel realizations) probability of error tending to 0 as $n$ tends to infinity  follows then from standard arguments.

By the Fourier-Motzkin elimination algorithm we conclude that whenever
\begin{equation}\label{eq:conditionMarton}
I(U_1;Y_1|U_0) + I(U_2;Y_2|U_0) \geq I(U_1;U_2|U_0)
\end{equation}
then for every rate tuple $(R_0,R_1,R_2)$ satisfying
\begin{subequations}\label{eq:r2}\label{eq:M_after}
 \begin{IEEEeqnarray}{rCl}
R_0+R_1 &<& I(U_0,U_1;Y_1)-\delta(\eps) \\
R_0+R_2 &<& I(U_0,U_2;Y_2)-\delta(\eps) \\
R_{0}+R_{1}+R_2&<& I(U_1;Y_1|U_0)+I(U_2;Y_2|U_0) \nonumber \\
  & &+ \min_{i=1,2} I( U_0;Y_i) - I(U_1;U_2|U_0)-\delta(\eps) \nonumber\\\\
2 R_{0}+R_{1}+R_2& <&  I(U_0, U_1;Y_1)+I(U_0,U_2;Y_2)\nonumber \\ &&- I(U_1;U_2|U_0)-\delta(\eps)\label{eq:r2e}
 \end{IEEEeqnarray}
 \end{subequations}
for a suitable $\delta(\eps)\to 0$ as $\eps\to 0$, there exists a choice of the rates $R_{1,p}$, $R_{1,c}$, $R_{2,p}$, $R_{2,c}$, $R_1'$, $R_2'>0$ such that $R_1=R_{1,p}+R_{1,c}$ and $R_2=R_{2,p}+R_{2,c}$ and such that \eqref{eq:M_befor} holds.

Notice that for every choice of $(U_0,U_1,U_2, X)$ that does not satisfy \eqref{eq:conditionMarton} we can strictly enlarge the rate region  \eqref{eq:r2}
 if we replace the  random triple $(U_0,U_1,U_2)$ by $({U}_0', {U}_1', {U}_2')$ where ${U}_1'$ and ${U}_2'$ are constants and ${U}_0'=(U_0,U_1,U_2)$. The new choice $({U}_0', {U}_1', {U}_2', X)$ moreover satisfies \eqref{eq:conditionMarton} because both sides are 0. Also,  $X$ can be written as a function of the new auxiliaries $U_0', U_1', U_2'$.
We thus conclude that the rate region in \eqref{eq:r2}
is achievable also when \eqref{eq:conditionMarton} is violated.

Taking $\eps\to 0$, \mw{now establishes}
the inclusion $\set{R}_{\textnormal{Marton}}\subseteq \set{C}_{\textnormal{NoFB}}$.

The following two remarks are found useful in the analysis \mw{of our feedback scheme in Appendix~\ref{eq:comerror}.}
\begin{remark}\label{rem:typU}
Under conditions  \eqref{eq:conditionMarton} and \eqref{eq:M_after} there exists an associated choice of  parameters for our scheme such that the associated auxiliary codewords satisfy
\begin{align*}
\begin{multlined}
\Pr\!\Big((U_0^{n}(M_c), U_1^n(M_c, M_{1,p}, L_1^*), U_2^{n}(M_c, M_{2,p}, L_2^*))  \\ \hspace{5.8cm} {\in \m{T}_{\eps/32}^{(n)}(P_{U_0U_1U_2})\Big)}\\
\to  1 \quad \textnormal{as} \quad n\to \infty.
\end{multlined}
\end{align*}
\end{remark}

\begin{remark}\label{rem:Marton}
Inspecting the proof, we see that the memoryless channel property has been used only to establish the limit~\eqref{eq:channel_cond}. The other limits ~\eqref{eq:l1}, \eqref{eq:l2}, \eqref{eq:l3}, and \eqref{eq:l4} follow solely from the way we constructed the code. Suppose now we replace the memoryless channel with a general channel $P_{Y^n|X^n}$. Then under conditions  \eqref{eq:conditionMarton} and \eqref{eq:M_after}, there exists an associated choice of parameters for our scheme such that the average error probability goes to zero as $n\to \infty$, if for $i\in\{1,2\}$:
\begin{IEEEeqnarray*}{rCl}\lefteqn{
\Prv{ (U^n_0(M_c), U^n_i(M_c,M_{i,p}, L_i^*), Y_i^n) \in \mathcal{T}_{\eps}^{(n)}(P_{U_0U_iY_i})}}\qquad  \nonumber \\ &&\!\to 1\quad \textnormal{as} \quad n\to \infty.\hspace{8cm}
\end{IEEEeqnarray*}
\end{remark}

\newcommand{\KL}{\m{K}=\m{L}^*={1}}
\section{Analysis of the Lossy Gray Wyner Scheme with Side-Information}\label{sec:GWSI_analysis}
We analyze the failure probability $\Prv{\m{E}^{(1)}\cup \m{E}^{(2)}}$ associated with \mw{our} random coding scheme, where $\m{E}^{(i)}$ is the event that Receiver~$i$ fails, i.e., $(X^{n},\hat{V}_i^{n})\not\in\m{T}^{n}_ \eps(P_{XV_i})$.

Let $K_{0,0}^*$, $K_{1,0}^*$, $K_{2,0}^*$, $L_{1,0}^*$, $L_{2,0}^*$, $K_{0,1}^*$, $K_{1,1}^*$, $L_1^*$, $K_{0,2}^*$, $K_{2,2}^*$, $L_2^*$ be the tuple of indices chosen by the sender.
Also, let
\begin{itemize}
\item $\m{E}_0$ be the event that $X^n\not\in T^n_{\eps/8}(P_X)$;
\item $\m{E}_1$ be the event that
\begin{IEEEeqnarray*}{lCl}\lefteqn{\forall k_{0,0}, k_{1,0}, \ell_{1,0}:} \nonumber \\ &&\big(X^n,V_0^n(1; k_{0,0},k_{1,0}, \ell_{1,0})\big)\not\in T^n_{\eps/4}(P_{XV_0});\hspace{1cm}\end{IEEEeqnarray*}
\item $\m{E}_{2,i}$, for $i\in\{1,2\}$,  be the event that
\begin{IEEEeqnarray*}{lCl}\lefteqn{
\forall k_{0,i}, k_{i,i},\ell_i:\nonumber }\\
 &&  \;(X^n,V_0^n(i; K_{0,0}^*, K_{i,0}^*, L_{i,0}^*),V_i^n( k_{0,i},k_{i,i},\ell_i))\nonumber \\ & & \hspace{5.5cm}\not\in T^n_{\eps/2}(P_{XV_0V_i});\hspace{6cm}\end{IEEEeqnarray*}
\item $\m{E}_{3,i}$,  for $i\in\{1,2\}$, be the event that
\begin{IEEEeqnarray*}{lCl}\lefteqn{
(V_0^n(i; K_{0,0}^*, K_{i,0}^{*},L_{i,0}^*),V_i^n( K_{0,i}^*, K_{i,i}^*,L_i^*),Y_i^n)}\nonumber \\ & & \hspace{5.5cm}\not\in T^n_ \eps(P_{V_0V_iY_i});
\end{IEEEeqnarray*}
\item $\m{E}_{4,i}$, for $i\in\{1,2\}$, be the event that
\begin{IEEEeqnarray}{rCl}
\lefteqn{\exists \ell_i\neq L_i^*: }  \nonumber \\
& & (V_0^n(i; K_{0,0}^*, K_{i,0}^*,L_{i,0}^*),{V}_i^n( K_{0,i}^*, K_{i,i}^*,\ell_i),Y_i^n)\nonumber \\ & &  \hspace{5.5cm}\in T^n_ \eps(P_{V_0V_iY_i}); \nonumber
\end{IEEEeqnarray}
\item $\m{E}_{5,i}$,  for $i\in\{1,2\}$,  be the event that
\begin{IEEEeqnarray}{rCl}
\lefteqn{\exists \ell_{i,0}\neq L_{i,0}^*,\ell_i \neq L_i^*:} \quad \nonumber \\
& & (V_0^n(i; K_{0,0}^*, K_{i,0}^*, \ell_{i,0}),{V}_i^n( K_{0,i}^*,K_{i,i}^*,\ell_i),Y_i^n)\nonumber \\ & &  \hspace{5.5cm}\in T^n_ \eps(P_{V_0V_iY_i}).\nonumber
\end{IEEEeqnarray}
\end{itemize}
Notice that whenever event $(\m{E}_0^{c}\cap \m{E}_1^{c}\cap \m{E}_{2,i}^{c})$ occurs, then $(X^n,V_i^n( K_{0,i}^*,K_{i,i}^*,L_i^*))\in T^n_ \eps(P_{XV_i})$. If additionally also event $(\m{E}_{3,i}^{c} \cap \m{E}_{4,i}^{c}\cap \m{E}_{5,i}^{c})$ occurs, then Receiver~$i$ produces $\hat{V}_i^n=V_i^n( K_{0,i}^*,K_{i,i}^*,L_i^*)$. Therefore,
\begin{IEEEeqnarray}{rCl}
\Pr(\m{E}^{(i)})
&\leq &\Pr\left(\m{E}_0\cup\m{E}_1\cup\m{E}_{2,i}\cup\m{E}_{3,i}\cup\m{E}_{4,i}\cup \m{E}_{5,i}\right) \nonumber
\\
&\leq&  \Pr(\m{E}_0)+ \Pr(\m{E}_1|\m{E}_0^c)\nonumber \\ &&+ \Pr(\m{E}_{2,i}|\m{E}_1^c) +  \Pr(\m{E}_{3,i}|\m{E}_{2,i}^c)
 \nonumber \\ &&+ \Pr(\m{E}_{4,i})+ \Pr(\m{E}_{5,i}).\label{eq:errorupper}
\end{IEEEeqnarray}
We analyze each of the summands separately. Hereinafter, a nonnegative function $\delta(\eps)$ satisfying $\delta(\eps)\to 0$ as $\eps \to 0$, can be chosen such that the statements hold.
\begin{itemize}
\item  Since $X^n$ is $P_X$-i.i.d. and by the weak law of large numbers:
\begin{equation}\label{eq:E0GW}
\lim_{n\rightarrow \infty}\Prv{\m{E}_0}= 0.
\end{equation}
\item  By the code construction and the covering lemma (Lemma~\ref{lem:covering}):
\begin{equation}\label{eq:E1GW}
\lim_{n\rightarrow \infty} \Prv{\m{E}_1|\m{E}_0^c}=0
\end{equation}
whenever
\begin{equation}\label{eq:rates0}
\mw{{R}_0'+R_{0,0} > I(X;V_0)+ \delta(\eps)}.
\end{equation}
\item Again, by the code construction and the covering lemma:
\begin{equation}\label{eq:E2GW}
\lim_{n\rightarrow \infty} \Prv{\m{E}_{2,i}|\m{E}_1^c}=0
\end{equation}
whenever
\begin{equation}\label{eq:ratesi}
\mw{R_i'+R_{0,i}+R_{i,i}   > I(V_i;X,V_0)+ \delta(\eps)}.
\end{equation}
\item  The pair $\big(V_0^n(i; K_{0,0}^*, K_{i,0}^*,L_{i,0}^*), V_i^n( K_{0,i}^*,K_{i,i}^*,L_i^*)\big)$ depends on $Y_i^n$ only through  $X^n$, i.e., the Markov chain
\[V_0^n(i; K_{0,0}^*, K_{i,0}^*,L_{i,0}^*), V_i^n( K_{0,i}^*,K_{i,i}^*,L_i^*)  \markov X^n\markov  Y_i^n\]
holds. Therefore, $Y_i^n$ is $P_{Y_i|XV_0V_i}=P_{Y_i|X}$-independent given $(X^n, V_0^n(i; K_{0,0}^*, K_{i,0}^*, L_{i,0}^*), V_i^n( K_{0,i}^*,K_{i,i}^*,L_i^*))$ and by the  conditional typicality lemma (Lemma~\ref{lem:cond_typ}):
\begin{equation}\label{eq:E3GW}
\lim_{n\rightarrow \infty} \Prv{\m{E}_{3,i}|\m{E}_{2,i}^c}=0.
\end{equation}
\item Notice that the codewords $\{V_i^n( K_{0,i}^*,K_{i,i}^*,\ell_i)\}$ for $\ell_i\in[2^{nR_i'}]\backslash\{L_i^*\}$ are not independent and $P_{V_i}$-i.i.d.\footnote{This can be seen with the following simple example. Let the heights of two students $A_0$ and $A_1$ be
uniformly distributed over the interval $[1.7, 1.9]$ m and independent of each other. Also, let $C$ be the index of the student that has height larger than $1.89$m if this index is unique; otherwise let $C$ be Bern($\frac{1}{2}$). Let $\bar{C}$ be the index in $\{0,1\}$ not equal to $C$.
Notice that $\Prv{A_0\geq 1.89}=\frac{1}{20}$, whereas $\Prv{A_{\bar{C}}\geq 1.89}= \Prv{A_0\geq 1.89 \textnormal{ and } A_1 \geq 1.89}=\frac{1}{400}$. Thus, $A_{\bar{C}}$ is not uniform over $[1.7,1.9]$.}
\mw{However, following similar steps as in~\cite[Appendix 12A]{El-Gamal--Kim2009},
one can prove
 Inequality~\eqref{eq:ineqtyp} on top of the next page for arbitrary $(k_{0,i}^*, k_{i,i}^*, \ell_i^*)\in[2^{nR_{0,i}}]\times [2^{nR_{i,i}}]\times [2^{nR_i'}]$
\begin{figure*}
\begin{IEEEeqnarray}{rCl}
\Prv{ \m{E}_{4,i}  }& = &
\label{eq:in01}
\lefteqn{   \Prv{ \bigcup_{\substack{\ell_i \in [2^{nR_i'}] \\ \ell_i \neq L_i^*}} (V_0^n(i;K_{0,0}^*,K_{i,0}^*,L_{i,0}^*), V_i^n(K_{0,i}^*, K_{i,i}^*, \ell_i) , Y_i^n) \in T_\eps^n(P_{V_0V_iY_i}) }  }\qquad \nonumber\\ & = &
 \Prv{ \bigcup_{\ell_i=2}^{\lfloor2^{nR_i'}\rfloor} (V_0^n(i;K_{0,0}^*,K_{i,0}^*,L_{i,0}^*), V_i^n(1,1, \ell_i) , Y_i^n) \in T_\eps^n(P_{V_0V_iY_i})\Bigg|K_{0,i}^*=K_{i,i}^*= L_i^*=1}\\
& \leq &
 \Prv{ \bigcup_{\ell_i=1}^{\l2^{nR_i'}\r}  (V_0^n(i;K_{0,0}^*,K_{i,0}^*,L_{i,0}^*), V_i^n(1,1, \ell_i) , Y_i^n)\in T_\eps^n(P_{V_0V_iY_i})\Bigg|K_{0,i}^*=k_{0,i}^*, K_{i,i}^*=k_{i,i}^*, L_i^*=\ell_i^*}
\label{eq:ineqtyp}
\end{IEEEeqnarray}
\hrulefill
\end{figure*}
which directly yields Inequality~\eqref{eq:in0a}, also shown on the next page.
\begin{figure*}
\begin{IEEEeqnarray}{rCl}\label{eq:in0a}
\Prv{ \m{E}_{4,i}  }& \leq &
\Prv{ \bigcup_{\ell_i=1}^{\lfloor 2^{nR_i'}\rfloor} (V_0^n(i; K_{0,0}^*, K_{i,0}^{*},L_{i,0}^*),{V}_i^n(1,1,\ell_i),Y_i^n)\in T^n_\eps(P_{V_0V_iY_i})}
\end{IEEEeqnarray}
\hrulefill
\end{figure*}
Here, Equality~\eqref{eq:in01} holds by the symmetry of the code construction. For $(k_{0,i}^*,k_{i,i}^*, \ell_i^*)=(1,1,1)$, Inequality~\eqref{eq:ineqtyp} is straightforward; for  $(k_{0,i}^*,k_{i,i}^*)=(1,1)$ and $\ell_i^*>1$ it follows by this first case and the symmetry of the code construction; and for $(k_{0,i}^*,k_{i,i}^*)\neq(1,1)$ and $\ell_i^*$ arbitrary it follows by  \eqref{eq:in1}--\eqref{eq:in11} which hold again by the  symmetry of the code construction and because conditioned on  $K_{0,i}^*=K_{i,i}^*=L_i^*=1$ every set of $\l2^{nR_i'}\r-1$ codewords $\{V_i^n(k_{0,i},k_{i,i},\ell_i)\}$ for $(k_{0,i}, k_{i,i},\ell_i)\neq (1,1,1)$ has the same joint distribution.
\begin{figure*}
\begin{IEEEeqnarray}{rCl}
\lefteqn{   \Prv{ \bigcup_{\ell_i=2}^{\l2^{nR_i'}\r}V_0^n(i;K_{0,0}^*,K_{i,0}^*,L_{i,0}^*), V_i^n(1,1, \ell_i) , Y_i^n) \in T_\eps^n(P_{V_0V_iY_i}) \Bigg|  K_{0,i}^*=K_{i,i}^*= L_i^*=1}  }\qquad \nonumber \\
& = &
 \Prv{ \bigcup_{\ell_i=2}^{\l2^{nR_i'}\r} (V_0^n(i;K_{0,0}^*,K_{i,0}^*,L_{i,0}^*), V_i^n(1,1, \ell_i) , Y_i^n) \in T_\eps^n(P_{V_0V_iY_i}) \Bigg| K_{0,i}^*= k_{0,i}^*, K_{i,i}^*=k_{i,i}^*, L_i^*=\ell_i^*}\label{eq:in1}\\ & \leq  &
 \Prv{ \bigcup_{\ell_i=1}^{\l2^{nR_i'}\r} (V_0^n(i;K_{0,0}^*,K_{i,0}^*,L_{i,0}^*), V_i^n(1,1, \ell_i) , Y_i^n) \in T_\eps^n(P_{V_0V_iY_i}) \Bigg| K_{0,i}^*= k_{0,i}^*, K_{i,i}^*=k_{i,i}^*, L_i^*=\ell_i^*  }\label{eq:in11}
\end{IEEEeqnarray}
\hrulefill
\end{figure*}
}

Notice that on the right-hand side of \eqref{eq:in0a} we have the probability that one of the $\l2^{nR_i'}\r$ independent and $P_{V_i}$-i.i.d.  codewords $\{{V}_i^n(1,1,\ell_i)\}_{\ell_i=1}^{\lfloor 2^{nR_i'}\rfloor}$ is jointly $\eps$-typical with the pair $(V_0^n(i; K_{0,0}^*, K_{i,0}^{*},L_{i,0}^*), Y_i^n)$.
Thus, by the packing lemma (Lemma~\ref{lem:packing}) the probability on the right-hand side of \eqref{eq:in0a} tends to 0 as $n$ tends to $\infty$ whenever
\begin{equation}\label{eq:rate1p}
R_i' < I(V_i;V_0, Y_i)-\delta(\eps).
\end{equation}

We thus conclude that
\begin{equation}\label{eq:E4GW}
\lim_{n\rightarrow \infty} \Prv{\m{E}_{4,i}}=0
\end{equation}
whenever \eqref{eq:rate1p} holds.
\item Following similar steps \mw{as above, we can prove upper bound \eqref{eq:ineqtyp5}.}
\begin{figure*}
\begin{IEEEeqnarray}{rCl}\label{eq:ineqtyp5}
\Prv{ \m{E}_{5,i}  }&\leq& \Prv{ \bigcup_{\substack{\ell_{i,0}\in \left[2^{n({R}'_0-R_{i,0})}\right], \\ \ell_i\in[ 2^{nR_i'}]
}
} ( {V}_0^n(i; 1,1,\ell_{i,0}),{V}_i^n(1, 1, \ell_i),Y_i^n)\in T^n_\eps(P_{V_0V_iY_i})}
\end{IEEEeqnarray}
\hrulefill
\end{figure*} Then, by the multivariate packing lemma (Lemma~\ref{lem:mv_packing}):
\begin{equation}\label{eq:E5GW}
\lim_{n\rightarrow \infty} \Prv{\m{E}_{5,i}}=0,
\end{equation}
whenever
\begin{equation}\label{eq:rate10p}
{R}_0' - R_{i,0} +R_i' < I(V_0;Y_i)+I(V_i;V_0, Y_i)- \delta(\eps).
\end{equation}
\end{itemize}
Combining \eqref{eq:errorupper} with \eqref{eq:rates0}, \eqref{eq:ratesi},  \eqref{eq:rate1p}, and \eqref{eq:rate10p} we obtain that $\Prv{\m{E}^{(1)}}$ and
$\Prv{\m{E}^{(2)}}$ both tend to $0$ as $n\rightarrow\infty$ whenever:
\begin{subequations}\label{eq:GW_before}
\begin{IEEEeqnarray}{rCl}
{R}_0'+R_{0,0}&>& I(X;V_0)+\delta(\eps)  \\
R_1'+R_{0,1}+R_{1,1} &>& I(V_1;X,V_0)+\delta(\eps) \\
R_2'+R_{0,2}+R_{2,2} &>& I(V_2;X,V_0) +\delta(\eps)\\
{R}_0'-R_{1,0}+R_1' &<& I(V_0; Y_1) + I(V_1; V_0,Y_1)-\delta(\eps) \IEEEeqnarraynumspace\\
{R}_0' - R_{2,0}+R_2' &<& I(V_0; Y_2) + I(V_2; V_0,Y_2) -\delta(\eps) \\
R_1' &<& I(V_1; V_0,Y_1) -\delta(\eps)\\
R_2' &<& I(V_2; V_0,Y_2)-\delta(\eps).
\end{IEEEeqnarray}
\end{subequations}

We now argue that with an appropriate choice of the auxiliary rates $R_{0}'$, $R_{1}'$, $R_{2}'$, $R_{0,0}$, $R_{0,1}$, $R_{0,2}$, $R_{1,0}$, $R_{1,1}$, $R_{2,0}$, $R_{2,2}>0$ our scheme achieves the region ${\set{R}}_{\tn{LGW}}^{\textnormal{inner}}$. We first replace $R_{i,i}$ by $R_{i}-R_{i,0}$, for $i\in\{1,2\}$ and $R_{0,0}$ by $R_0-R_{0,1}-R_{0,2}$ to obtain
\begin{subequations}\label{eq:GW_before2}
\begin{IEEEeqnarray}{rCl}
{R}_0'+R_{0}-R_{0,1}-R_{0,2}&>& I(X;V_0)+\delta(\eps)  \\
R_1'+R_{0,1}+R_{1}- R_{1,0}&>& I(V_1;X,V_0)+\delta(\eps) \\
R_2'+R_{0,2}+R_{2}-R_{2,0} &>& I(V_2;X,V_0) +\delta(\eps)\\
{R}_0'-R_{1,0}+R_1' &<& I(V_0; Y_1) + I(V_1; V_0,Y_1)-\delta(\eps)\nonumber\\ \\
{R}_0' - R_{2,0}+R_2' &<& I(V_0; Y_2) + I(V_2; V_0,Y_2) -\delta(\eps)\nonumber\\ \\
R_1' &<& I(V_1; V_0,Y_1) -\delta(\eps)\\
R_2' &<& I(V_2; V_0,Y_2)-\delta(\eps).
\end{IEEEeqnarray}
\end{subequations}
Then,  employing the Fourier-Motzkin elimination algorithm to eliminate the nuisance variables $R_{0}'$, $R_{1}'$, $R_{2}'$, $R_{0,1}$, $R_{0,2}$, $R_{1,0}$, $R_{2,0}$, we obtain that if $(R_0, R_1, R_2)$ satisfies
\begin{subequations}\label{eq:LGW_eps0}
\begin{IEEEeqnarray}{rClCl}
 R_0 + R_1 &>& I(X;V_0) + I(V_1;X,V_0) - I(V_0; Y_1) \nonumber\\ &&  - I(V_1; V_0,Y_1) +\delta(\eps)\\
R_0+R_2 &>& I(X;V_0) + I(V_2;X,V_0) - I(V_0; Y_2) \nonumber \\ & &- I(V_2; V_0,Y_2)+\delta(\eps)\\
R_0+R_1+R_2 & >& I(X;V_0) +I(V_1;X,V_0) +I(V_2;X_,V_0) \nonumber \\ & &- I(V_1;V_0,Y_1)-I(V_2;V_0,Y_2) \nonumber \\ &&- \min_{i} I(V_0;Y_i)   + \delta(\epsilon)
\end{IEEEeqnarray}
\end{subequations}
then there exists a choice of nonnegative rates ${R}_0'$, $R_1'$, $R_2'$, $R_{0,1}$, $R_{0,2}$, $R_{1,0}$, $R_{2,0}$ that satisfies \eqref{eq:GW_before2} and
\begin{IEEEeqnarray*}{rCl}
R_{1}-R_{1,0}&\geq& 0\\
R_2-R_{2,2}&\geq &0\\
{R}_0' - R_{1,0}&\geq& 0\\
{R}_{0}'-R_{2,0}&\geq &0\\
R_0-R_{0,1}-R_{0,2}& \geq & 0.
\end{IEEEeqnarray*}
\mw{Due to the Markov chain $(V_0,V_1,V_2) \markov X \markov (Y_1,Y_2)$ the constraints in \eqref{eq:LGW_eps0} are equivalent to}
\begin{subequations} \label{eq:LGW_eps}
\begin{IEEEeqnarray}{rClCl}
R_0+ R_1 &>&I(X;V_0,V_1|Y_1) +\delta(\eps) \\
 R_0+R_2 &>&I(X;V_0,V_2|Y_2) +\delta(\eps)\\
 R_0+R_1+R_2 & >& I(X;V_1|V_0, Y_1)+I(X;V_2|V_0,Y_2)\nonumber \\
&& + \max_{i} I(X;V_0|Y_i) + \delta(\epsilon).
\end{IEEEeqnarray}\label{eq:GW_after}
\end{subequations}
Thus, we conclude that the region \eqref{eq:GW_after} is $\eps$-achievable for all choices  of the auxiliary random variable $V_0$ satisfying the Markov chain  $(V_0,V_1,V_2) \markov X \markov (Y_1,Y_2)$. Letting $\eps\to 0$, the achievability of ${\set{R}}_{\tn{LGW}}^{\textnormal{inner}}$ is established.

The following remark is found useful in the analysis \mw{of the feedback scheme in Appendix~\ref{eq:comerror}.}
\begin{remark}\label{rem:LGW}
In our error analysis, only Limits \eqref{eq:E0GW} and \eqref{eq:E3GW}  rely on the assumption that $(X^n, Y_1^n, Y_2^n)$ are $P_{XY_1Y_2}$-i.i.d. It is easy to check that replacing this assumption with the more general assumptions
\begin{enumerate}[(i)]
\item $\Pr(X^n\in \m{T}_{\eps/8}^n(P_X)) \to 1$ as $n\to\infty$.
\item $(Y_1^n,Y_2^n)$ is $P_{Y_1Y_2|X}$-independent given $X^n$.
\end{enumerate}
still guarantees the existence of associated parameters such that the scheme above $\eps$-achieves the region (\ref{eq:LGW_eps}). In particular,
\begin{IEEEeqnarray*}{rCl}
\Prv{ (X^n,V_i^n(K_{0,i}^*,K_{i,i}^*,L_i^*))\notin T^n_ \epsilon(P_{XV_i}) } \to\ 0
\end{IEEEeqnarray*}
and
\begin{IEEEeqnarray*}{rCl}
\Prv{\hat{V}_i^n \neq V_i^n(K_{0,i}^*, K_{i,i}^*,L_i^*)}\to 0,
\end{IEEEeqnarray*}
for $i\in\{1,2\}$, as $n\to\infty$.
\end{remark}

\section{Convexity in Theorem \ref{thrm:LGW}}\label{sec:app_conv}
Let $\{V_{0,j},V_{1,j},V_{2,j},X_j,Y_{1,j},Y_{2,j}\}_{j\in\{0,1\}}$ be two sets of mutually independent random variables for $j\in\{1,2\}$, where
\begin{itemize}
\item  $(X_j,Y_{1,j},Y_{2,j})\sim P_{XY_1Y_2}$;
\item $(V_{0,j},V_{1,j},V_{2,j}) \markov X_j \markov (Y_{1,j},Y_{2,j})$.
\item $P_{V_{i,j}|X_j} = P_{V_i|X}$ for $i\in\{1,2\}$.
\end{itemize}

Let $Q\sim \tn{Bern}(\alpha)$ be independent of the union of the two sets, and define $\bar{V}_0 \dfn V_{0,Q}, \bar{V}_i \dfn V_{i,Q}, \bar{X} \dfn X_{Q}, \bar{Y}_i \dfn Y_{i,Q}$, for $i\in\{1,2\}$.
Notice that as the law of $(X_1,Y_{1,1}, Y_{2,1})$ and the law of $(X_2, Y_{1,2}, Y_{2,2})$ are the same, the "time-sharing" random variable $Q$ is independent of the triplet $(\bar{X},\bar{Y}_1,\bar{Y}_2)$. Therefore, and since by assumption
\begin{equation*}
 (\bar{V}_{0},\bar{V}_{1},\bar{V}_{2}) \markov (\bar{X }, Q)\markov (\bar{Y}_{1},\bar{Y}_{2}),
\end{equation*}
we  conclude that  defining $\tilde{V}_0\dfn(Q,\bar{V}_0)$ we have the Markov chain
\begin{equation}\label{eq:M}
(\tilde{V}_{0},\bar{V}_{1},\bar{V}_{2}) \markov \bar{X }\markov (\bar{Y}_{1},\bar{Y}_{2}).
\end{equation}

We further notice that for $i\in\{1,2\}$:
\begin{IEEEeqnarray}{rCl}\label{eq:conv1}
I(\bar{X}; \bar{V}_i| \bar{V}_0, \bar{Y}_i, Q ) = I(\bar{X}; \bar{V}_i| \tilde{V}_0, \bar{Y}_i)
\end{IEEEeqnarray}
and
\begin{IEEEeqnarray}{rCl}\label{eq:conv2}
I(\bar{X}; \bar{V}_0| \bar{Y}_i, Q ) = I(\bar{X}; \bar{V}_0, Q |\bar{Y}_i) = I(\bar{X}; \tilde{V}_0|\bar{Y}_i) ,
\end{IEEEeqnarray}
where the first equality holds because of the independence of $Q$ and $(\bar{X},\bar{Y}_i)$. Moreover, by \eqref{eq:conv1} and \eqref{eq:conv2}
\begin{equation}
I(\bar{X}; \bar{V}_0, \bar{V}_i |\bar{Y}_i, Q) =  I(\bar{X}; \bar{V}_i, \tilde{V}_0| \bar{Y}_i).
\end{equation}
Combining these inequalities with the Markov condition,  we conclude that the region   $\set{R}_{\textnormal{LGW}}^{\textnormal{inner}}$ is convex.

\section{Error Analysis for the Feedback Scheme}\label{eq:comerror}
We bound the average probability of error (where the average is over the random messages, codes, and channel realizations).
Let $\m{E}$ be the error event:
\begin{equation*}
\m{E} \dfn \bigcup_{i=1}^2 \bigcup_{b=1}^{B}\left\{(\wh{M}_{0,i,(b)},\wh{M}_{i,(b)})\neq \left(M_{0,(b)},M_{i,(b)}\right)\right\}.
\end{equation*}
Moreover, for each $b\in[B+1]$, let $\m{F}_b$ be the error event of the Marton code in block $b$:
\begin{equation*}
\m{F}_b \dfn \bigcup_{i=1}^2 \left\{(\wh{J}_{0,i,(b)},\wh{J}_{i,(b)})\neq \left(J_{0,(b)},J_{i,(b)}\right)\right\}.
\end{equation*}
Then,
\begin{equation*}
\Pr(\m{E}) \leq \Pr\left( \bigcup_{b=1}^{B+1}\m{F}_b \right) \leq \sum_{b=1}^{B} \Pr(\m{F}_b| \m{F}_{b+1}^{\textnormal{c}}) + \Pr(\m{F}_{B+1}).
\end{equation*}

By construction, we have that $\Pr(\m{F}_{B+1})\to 0$  as $n\to \infty$. Let us now analyze the probability $\Pr(\m{F}_b | {\m{F}}_{b+1}^{\textnormal{c}})$ for a fixed $b\in[B]$. In light of Remark~\ref{rem:Marton}, we see that if
\begin{equation}\label{eq:conditionMarton2}
I(U_1;Y_1,V_1|U_0) + I(U_2;Y_2,V_2|U_0) \geq I(U_1;U_2|U_0);
\end{equation}
and
\begin{subequations}\label{eq:Martoninproof}
\begin{IEEEeqnarray}{rCl}
\bar{R}_0+\bar{R}_1 &<& I(U_0,U_1;Y_1,V_1) - \delta(\eps)
\\
\bar{R}_0+\bar{R}_2 &<& I(U_0,U_2;Y_2,V_2) - \delta(\eps)
\\
\nonumber \bar{R}_0+\bar{R}_1+\bar{R}_2 &<& I(U_1; Y_1,V_1|U_0) + I(U_2; Y_2,V_2|U_0)   \\
 &&+ \min_{i}I(U_0;Y_i,V_i)- I(U_1;U_2|U_0)- \delta(\eps);\nonumber \\\\
 \nonumber 2\bar{R}_0+\bar{R}_1+\bar{R}_2 &<& I(U_0,U_1; Y_1,V_1) + I(U_0,U_2; Y_2,V_2)   \\
 &&- I(U_1;U_2|U_0)- \delta(\eps);
\end{IEEEeqnarray}
\end{subequations}
and  for $i\in\{1,2\}$:
\begin{equation} \label{eq:whattoprove}
\Pr((U_{0,(b)}^n, U_{i,(b)}^n, Y_{i,(b)}^n, \hat{V}_{i,(b)}^n) \not\in\m{T}^n_\eps(P_{U_0U_i Y_i, V_i})) \to  0
\end{equation}
as $n\to\infty$, then there exists a choice of the parameters such that $\Prv{F_{b}|{F}_{b+1}^{\textnormal{c}}}\to 0$ as $n\to \infty$.

From this point forward we assume that Conditions~\eqref{eq:conditionMarton2} and (\ref{eq:Martoninproof}) hold, \mw{and}
prove that if additionally
\begin{subequations}\label{eq:GW_inproof}
\begin{IEEEeqnarray}{rCl}
 \wt{R}_0 + \wt{R}_1 &>& I(U_0,U_1,U_2,\wt{Y};V_0,V_1|Y_1)+ \delta(\eps)\\
\wt{R}_0+\wt{R}_2 &>& I(U_0,U_1,U_2,\wt{Y};V_0,V_2|Y_2)+ \delta(\eps)\\
\wt{R}_0+ \wt{R}_1+\wt{R}_2 &>& I(U_0,U_1,U_2,\wt{Y};V_1|V_0,Y_1) \nonumber \\ &&+ I(U_0,U_1,U_2,\wt{Y};V_2|V_0,Y_2)\nonumber \\&&+\max_{i}  I(U_0,U_1,U_2,\wt{Y};V_0|Y_i) + \delta(\eps)\IEEEeqnarraynumspace
\end{IEEEeqnarray}
\end{subequations}
then the limit \eqref{eq:whattoprove} holds. We notice that
\begin{align}
\nonumber&\Pr\!\left(\!(U_{0,(b)}^n,U_{i,(b)}^n,Y_{i,(b)}^n,\hat{V}_{i,(b)}^n)\not\in\m{T}^n_\eps(P_{U_0U_iY_iV_i})\!\right) \\
\nonumber&\quad \leq \Pr\!\left(\!(U_{0,(b)}^n,U_{i,(b)}^n,Y_{i,(b)}^n,V_{i,(b)}^n)\not\in\m{T}^n_\eps(P_{U_0U_iY_iV_i})\!\right)
\\
&\quad \qquad + \Pr\!\left(\wh{V}_{i,(b)}^n\neq V_{i,(b)}^n\right), \label{eq:whattoprove_exp}
\end{align}
where $V_{1,(b)}^n$ and $V_{2,(b)}^n$ denote the codewords chosen by the LGW-SI encoding rule $\lambda_{(b)}$. We now verify that under conditions (\ref{eq:GW_inproof}), both terms on the right-hand side of (\ref{eq:whattoprove_exp}) vanish as $n\to\infty$.

\mw{Since the input $X_{(b)}^n$ is a component-wise function of $(U_{0,(b)}^n,U_{1,(b)}^n,U_{2,(b)}^n)$
and the channel is memoryless, $(Y_{1,(b)}^{n}, Y_{2,(b)}^{n}, \wt{Y}_{(b)}^n)$ is $P_{Y_1Y_2\wt{Y}|U_0U_1U_2}$-independent given $(U_{0,(b)}^n,U_{1,(b)}^n,U_{2,(b)}^n)$.}  Furthermore, from Marton's code construction and in light of Remark~\ref{rem:typU}, we have that under conditions \eqref{eq:conditionMarton2} and \eqref{eq:Martoninproof}
\begin{IEEEeqnarray*}{rCl}
\Prv{(U_{0,(b)}^n,U_{1,(b)}^n,U_{2,(b)}^n) \notin \m{T}_{\eps/32}(P_{U_0U_1U_2}) } \rightarrow 0.
\end{IEEEeqnarray*}
Therefore, by the conditional typicality Lemma, also
\begin{IEEEeqnarray*}{rCl}
\Prv{(U_{0,(b)}^n,U_{1,(b)}^n,U_{2,(b)}^n,\wt{Y}_{(b)}^n) \notin \m{T}_{\eps/16}(P_{U_0U_1U_2\wt{Y}})} \rightarrow 0\IEEEeqnarraynumspace
\end{IEEEeqnarray*}
as $n\to\infty$.

\mw{Thus, by Remark~\ref{rem:LGW}
(recall we have used the parameter $\eps/2$ for the LGW-SI code) and
under conditions (\ref{eq:GW_inproof})}
\begin{IEEEeqnarray}{lCl}
&\Pr\!\left(\wh{V}_{i,(b)}^n\neq V_{i,(b)}^n\right) \to 0\label{eq:1st_summand}\\
&\Pr\!\left(\!(U_{0,(b)}^n,U_{1,(b)}^n,U_{2,(b)}^n,\wt{Y}_{(b)}^n, V_{i,(b)}^n)\not\in\m{T}^n_{\eps/2}(P_{U_0U_1U_2\wt{Y}V_i})\!\right) \!\!\to \!0\nonumber \label{}\\
\label{eq:2nd_summand}
\end{IEEEeqnarray}
as $n\to\infty$.

\mw{Now, since }
$Y_{i,(b)}^n$ is $P_{Y_i|U_0U_1U_2\wt{Y}}$-independent given $(\!U_{0,(b)}^n,U_{1,(b)}^n,U_{2,(b)}^n,\!\wt{Y}_{(b)}^n)$, and the Markov condition
\[V_{i,(b)}^n\markov (\!U_{0,(b)}^n,U_{1,(b)}^n,U_{2,(b)}^n,\!\wt{Y}_{(b)}^n)\markov Y_{i,(b)}^n\]
holds,
\mw{(\ref{eq:2nd_summand}) and the conditional typicality Lemma imply that}
\begin{IEEEeqnarray}{lCl}\label{eq:tp11}
\Pr\!\left(\!(U_{0,(b)}^n,U_{i,(b)}^n,Y_{i,(b)}^n, V_{i,(b)}^n)\not\in\m{T}^n_\eps(P_{U_0U_iY_iV_i})\!\right) \to 0.\IEEEeqnarraynumspace
\end{IEEEeqnarray}
With (\ref{eq:whattoprove_exp}) and (\ref{eq:1st_summand})
this establishes
(\ref{eq:whattoprove}).
\mw{We thus proved that whenever~\eqref{eq:conditionMarton2}, \eqref{eq:Martoninproof}, and \eqref{eq:GW_inproof} are satisfied, then the probability of error tends to 0 as $n\to \infty$, for any $\eps$ small enough.

Employing the Fourier-Motzkin elimination algorithm on constraints~\eqref{eq:Martoninproof} and (\ref{eq:GW_inproof}) where we replaced $\bar{R}_i$ by $R_i+\tilde{R}_i$, and letting $\epsilon$ tend to 0},
we obtain that under the set of constraints \eqref{eq:inner} and when \eqref{eq:conditionMarton2} holds, then there exists a choice of the parameters such that  the probability of error of our scheme tends to 0 as $n\to\infty$.
Notice however, that when a triplet \mw{$(U_0,U_1,U_2)$ does not satisfy~\eqref{eq:conditionMarton2},}
then the rate region \eqref{eq:inner} is strictly enlarged if \mw{we replace this triplet by }
$({U}_0', {U}_1', {U}_2')$ where ${U}_1'$ and ${U}_2'$ are constants and  ${U}_0'=(U_0,U_1,U_2)$. The new choice $({U}_0', {U}_1', {U}_2')$ moreover satisfies \eqref{eq:conditionMarton2} because both sides are 0. It also satisfies the Markov chain \eqref{eq:M3} and $X$ can be expressed as a function of the new auxiliaries $U_0', U_1', U_2'$. \mw{We can thus ignore constraint~\eqref{eq:conditionMarton2} in the statement of the achievable region.}

We conclude that since by \eqref{eq:effrates} the effective rates of transmission tend to $(R_0, R_1, R_2)$ as $B\to \infty$,  any rate triplet satisfying the constraints \eqref{eq:inner} is achievable by our scheme.

\begin{biographynophoto}{Ofer Shayevitz}
Ofer Shayevitz received the B.Sc. degree (summa cum laude) from the Technion Institute of Technology, Haifa, Israel, in 1997 and the M.Sc. and Ph.D. degrees from the Tel-Aviv University, Tel Aviv, Israel, in 2004 and 2009, respectively, all in electrical engineering.

He is currently a quantitative analyst with the D. E. Shaw group. Before that, he was a Postdoctoral Fellow in the Information Theory and Applications (ITA) Center at the University of California, San Diego. Prior to his graduate studies, he served as an Engineer and Team Leader in the Israeli Defense Forces (1997-2003), and as an Algorithms Engineer with CellGuide, a high-tech company developing low-power GPS navigation solutions (2003-2004). Dr. Shayevitz is the recipient of the ITA Postdoctoral Fellowship (2009-2011), the Adams Fellowship (2006-2008) awarded by the Israel Academy of Sciences and Humanities, the Advanced Communication Center (ACC) Feder Family award for an outstanding Ph.D. thesis (2009), and the Weinstein Prize (2006-2009) for research and publications in signal processing.
\end{biographynophoto}

\begin{biographynophoto}{Mich\`ele Wigger}
(S'05, M'09) received the M.Sc.\ degree in electrical engineering (with distinction) and the Ph.D.\ degree in electrical engineering both from ETH Zurich in 2003 and 2008, respectively.
In 2009, she was a Postdoctoral Researcher at the ITA Center, University of California, San Diego. Since December 2009, she has been an Assistant Professor at Telecom ParisTech, Paris, France. Her research interests are in information and communications theory; in particular in wireless networks, feedback channels, and channels with states.
\end{biographynophoto}
\end{document}